\documentclass[12pt,oneside]{amsart}
\usepackage[utf8]{inputenc}
\usepackage[a4paper]{geometry}
\geometry{verbose,tmargin=3cm,bmargin=3cm,lmargin=2.5cm,rmargin=2.5cm}
\usepackage{mathtools}
\usepackage{amstext}
\usepackage{amsthm}
\usepackage{amssymb}
\usepackage{stackrel}
\usepackage{xargs}[2008/03/08]
\usepackage[unicode=true,pdfusetitle,
 bookmarks=true,bookmarksnumbered=false,bookmarksopen=false,
 breaklinks=false,pdfborder={0 0 1},backref=false,colorlinks=false]
 {hyperref}

\makeatletter

\DeclareTextSymbolDefault{\textquotedbl}{T1}

\numberwithin{equation}{section}
\numberwithin{figure}{section}
\theoremstyle{plain}
\newtheorem{thm}{\protect\theoremname}
\theoremstyle{definition}
\newtheorem{defn}[thm]{\protect\definitionname}
\theoremstyle{plain}
\newtheorem{lem}[thm]{\protect\lemmaname}
\theoremstyle{plain}
\newtheorem{cor}[thm]{\protect\corollaryname}
\theoremstyle{remark}
\newtheorem{rem}[thm]{\protect\remarkname}

\@ifundefined{date}{}{\date{}}
\usepackage{longtable}


\usepackage{tikz}
\usetikzlibrary{shapes}
\usetikzlibrary{arrows.meta}

\DeclareSymbolFont{extraup}{U}{zavm}{m}{n}
\DeclareMathSymbol{\varheart}{\mathalpha}{extraup}{86}
\DeclareMathSymbol{\vardiamond}{\mathalpha}{extraup}{87} 

\providecommand{\corollaryname}{Corollary}
\providecommand{\definitionname}{Definition}
\providecommand{\lemmaname}{Lemma}
\providecommand{\theoremname}{Theorem}

\makeatother

\providecommand{\corollaryname}{Corollary}
\providecommand{\definitionname}{Definition}
\providecommand{\lemmaname}{Lemma}
\providecommand{\remarkname}{Remark}
\providecommand{\theoremname}{Theorem}

\begin{document}
\global\long\def\C{\mathbb{C}}%
\global\long\def\Cd{\C^{\delta}}%
\global\long\def\Cprim{\C^{\delta,\circ}}%
\global\long\def\Cdual{\C^{\delta,\bullet}}%
\global\long\def\Od{\Omega^{\delta}}%
\global\long\def\Oprim{\Omega^{\delta,\circ}}%
\global\long\def\Odual{\Omega^{\delta,\bullet}}%
\global\long\def\en{\mathcal{\varepsilon}}%
\global\long\def\wone{\widehat{w}}%
\global\long\def\wtwo{w}%
\global\long\def\wo{\wone}%
\global\long\def\wt{\wtwo}%
\global\long\def\cZ{\mathcal{Z}}%

\global\long\def\tfixed{wired}%
\global\long\def\tfree{free}%
\global\long\def\tplus{plus}%
\global\long\def\tminus{minus}%
\global\long\def\sps{(p,q)}%
\global\long\def\fixed{\{\mathtt{\tfixed}\}}%
\global\long\def\free{\{\mathtt{\tfree}\}}%
\global\long\def\plus{\{\mathtt{\tplus}\} }%
\global\long\def\minus{\left\{  \mathtt{\tminus}\right\}  }%
\global\long\def\Cgr{\mathcal{C}}%
\global\long\def\sCgr{\mathit{c}}%
\global\long\def\Pf{\mathrm{Pf}}%
\global\long\def\vv{v_{1},\dots,v_{n}}%
\global\long\def\uu{u_{1},\dots,u_{m}}%
\global\long\def\svv{\sigma_{v_{1}}\ldots\sigma_{v_{n}}}%
\global\long\def\proj#1#2{\text{Proj}_{#1}(#2)}%

\global\long\def\muu{\mu_{v_{n+1}}\dots\mu_{v_{k}}}%
\global\long\def\pesm{\psi^{[\eta]}\!,\en,\mu,\sigma}%
\global\long\def\sfix#1{\sigma_{\mathrm{fix}}^{#1}}%
\global\long\def\any{\diamond}%
\global\long\def\anyother{\triangleright}%
\global\long\def\Opunc{\Omega^{\boxcircle}}%
\global\long\def\Onopunc{\Omega^{\boxempty}}%
\global\long\def\Oother{\Omega'}%
\global\long\def\const{\mathrm{const}}%
\global\long\def\P{\mathsf{\mathbb{P}}}%
\global\long\def\E{\mathsf{\mathbb{E}}}%
\global\long\def\sF{\mathcal{F}}%
\global\long\def\ind{\mathbb{I}}%
\global\long\def\R{\mathbb{R}}%
\global\long\def\Z{\mathbb{Z}}%
\global\long\def\N{\mathbb{N}}%
\global\long\def\Q{\mathbb{Q}}%
\global\long\def\C{\mathbb{C}}%
\global\long\def\Rsphere{\overline{\C}}%
\global\long\def\re{\Re\mathfrak{e}}%
\global\long\def\im{\Im\mathfrak{m}}%
\global\long\def\arg{\mathrm{arg}}%
\global\long\def\i{\mathfrak{i}}%
\global\long\def\eps{\varepsilon}%
\global\long\def\lamb{\lambda}%
\global\long\def\lambb{\bar{\lambda}}%
\global\long\def\D{\mathbb{D}}%
\global\long\def\HH{\mathbb{H}}%

\global\long\def\Deps{\Omega_{\eps}}%
\global\long\def\DD{\hat{\Omega}}%

\global\long\def\dist{\mathrm{dist}}%
\global\long\def\reg{\mathrm{reg}}%
\global\long\def\half{\frac{1}{2}}%
\global\long\def\sgn{\mathrm{sgn}}%
\global\long\def\bdry{\partial}%
\global\long\def\cl#1{\overline{#1}}%
\global\long\def\diam{\mathrm{diam}}%
\global\long\def\corr#1{\overline{#1}}%
\global\long\def\Corr#1#2{\E_{#1}(#2)}%
\global\long\def\corr#1#2{\langle#1\rangle_{#2}}%
\global\long\def\vipq#1#2#3{v_{#1;#2#3}}%
\global\long\def\pa{\partial}%
\global\long\def\tto#1{\stackrel{#1}{\longrightarrow}}%
\global\long\def\res{\text{res}}%
\global\long\def\u{u}%
\global\long\def\v{v}%
\global\long\def\z{z}%
\global\long\def\mod{\;\mathrm{mod\;}}%
\global\long\def\wind{\mathrm{wind}}%
\global\long\def\vz{z^{\bullet}}%
\global\long\def\fz{z^{\circ}}%
\global\long\def\CF{\mathfrak{C}}%
\global\long\def\RPF{\text{I}}%
\global\long\def\FFS#1#2#3{F_{#2}^{#3}(#1)}%
\global\long\def\ffs#1#2#3{f_{#2}^{#3}(#1)}%
\global\long\def\gfs#1#2#3{g_{#2}^{#3}(#1)}%
\global\long\def\drs#1{\eta_{#1}}%
\global\long\def\dbar{\overline{\partial}}%
\global\long\def\dual#1{\left(#1\right)^{*}}%
\global\long\def\Eod{\Od_{+}}%
\global\long\def\GammaR{\Gamma_{\R}}%
\global\long\def\GammaiR{\Gamma_{i\R}}%
\global\long\def\ccor#1{\left\langle #1\right\rangle }%
\global\long\def\bar#1{\overline{#1}}%
\global\long\def\anypsi{\psi^{*}}%
\global\long\def\Op{\mathcal{O}}%
\global\long\def\crossing{c}%
\global\long\def\Fdual{F^{\mathrm{dual}}}%
\global\long\def\zz{z_{1},z_{2}}%
\newcommandx\norm[1][usedefault, addprefix=\global, 1=]{n_{#1}}%
\global\long\def\Ocvrc{\Omega_{\cvr}}%
\global\long\def\crad{\text{crad}}%
\global\long\def\feta{f^{[\eta]}}%

\global\long\def\reg{\sharp}%
\global\long\def\regg{*}%
\global\long\def\coefA{\mathcal{A}}%
\global\long\def\pbar#1{#1^{\star}}%
\global\long\def\fdag{f^{\star}}%
\global\long\def\CorrO#1{\langle#1\rangle_{\Omega}}%
\global\long\def\formL{\mathcal{L}}%
\global\long\def\appe{\approx_{\eps}}%
\global\long\def\jayhat{\hat{j}}%
\global\long\def\sqr#1{(#1)^{\frac{1}{2}}}%
\global\long\def\M{\mathbb{\mathcal{M}}}%
\global\long\def\Mtilde{\mathcal{M^{\star}}}%
\global\long\def\Meps{\mathbb{\mathcal{M}}_{\eps}}%
\global\long\def\Mepso{\mathbb{\mathcal{M}}_{\epsilon_{1}}}%
\global\long\def\Mhat{\mathbb{\mathcal{M}}_{0}}%
\global\long\def\dZ{\mathbb{C}^{\delta}}%
\global\long\def\dZd{\dZ^{\delta}}%
\global\long\def\dZprim{\mathbb{C}^{\delta,\circ}}%
\global\long\def\dZdual{\mathbb{C}^{\delta,\star}}%

\global\long\def\T{\mathbb{T}}%
\global\long\def\Ttilde{\widetilde{\mathbb{T}}}%
\global\long\def\Td{\mathbb{T}^{\delta}}%
\global\long\def\Tdouble{\hat{\mathbb{T}}^{\delta}}%
\global\long\def\Lambdad{\Lambda^{\delta}}%
\global\long\def\Tprim{\T^{\delta,\circ}}%
\global\long\def\Tdual{\T^{\delta,\star}}%
\global\long\def\omd#1{\omega_{#1}^{\delta}}%
\global\long\def\om#1{\omega_{#1}}%
\global\long\def\bul#1{{#1}^{\bullet}}%
\global\long\def\cir#1{{#1}^{\circ}}%

\global\long\def\Ob{\mathcal{O}}%
\global\long\def\Dom{\T}%
\global\long\def\Domepsi{\Omega_{\eps_{i}}}%
\global\long\def\Domp{\Omega\setminus\{v_{1},\dots,v_{n}\}}%
 
\global\long\def\Domr{\widetilde{\T}}%
\global\long\def\Surf{\widehat{\Omega}}%
\global\long\def\Surfeps{{{\widehat{\Omega}}_{\eps}}}%
\global\long\def\Surfo{\widehat{\Omega}_{0}}%
\global\long\def\bpoints{\{b_{1},\dots,b_{2k}\}}%
\global\long\def\SKDom{K_{\T_{\eps}}}%
 
\global\long\def\SKDr{K_{\T_{\eps}}^{\star}}%
 
\global\long\def\SKDomLim{K_{0}}%
 
\global\long\def\SKDrLim{K_{0}^{\star}}%
 
\global\long\def\SpStr{\mathfrak{c}}%
 
\global\long\def\Obs#1{\mathcal{O}\left[#1\right]}%
\global\long\def\hOp{\hat{\mathcal{O}}}%
 
\global\long\def\Ab{\mathcal{U}}%
 
\global\long\def\Abv{\mathcal{V}}%

\global\long\def\ds{\mathrm{ds}}%
 
\global\long\def\ns{\mathrm{ns}}%
 
\global\long\def\cs{\mathrm{cs}}%

\global\long\def\nord#1{:#1:}%
 
\global\long\def\Nexp#1#2{:e^{#1\Phi(#2)}:}%
 
\global\long\def\nexp#1#2{:e^{#1\varphi(#2)}:}%

\global\long\def\bsigma#1{\sqrt{2}:\cos\left(\frac{\sqrt{2}}{2}\Phi(#1)\right):}%
 
\global\long\def\bmu#1{\sqrt{2}:\sin\left(\frac{\sqrt{2}}{2}\Phi(#1)\right):}%
 
\global\long\def\ben#1{-\frac{1}{2} :|\nabla\Phi(#1)|^{2}:}%
 
\global\long\def\bpsi#1{2\sqrt{2}\i\partial\Phi(#1)}%
 
\global\long\def\bpstar#1{-2\sqrt{2}\i\bar{\partial}\Phi(#1)}%
 
\global\long\def\bpsipstar#1{e^{-\frac{1}{2}g_{\Omega}(#1,#1)}:\sin2\sqrt{\pi}\Phi(#1):}%

\global\long\def\setS{\mathcal{S}}%
 
\global\long\def\setSn{\mathcal{S}_{0}}%
 
\global\long\def\setSp{\mathcal{S}'_{0}}%
 
\global\long\def\gap{\alpha}%
 
\global\long\def\inv{\iota}%
\global\long\def\quadr#1#2{\mathbf{q}_{#1}(#2)}%
\global\long\def\hd{\mathcal{F}}%
\global\long\def\logmatrix{\mathcal{L}}%
\global\long\def\Ann{\mathcal{A}}%
\global\long\def\regint#1{\left(#1\right)_{\mathrm{reg}}}%
\global\long\def\ProdTh{\Pi_{v_{1},\dots,v_{n}}^{s'}}%
\global\long\def\vavg{\tilde{v}_{s'}}%

\title[Critical Ising correlations on a torus]{Critical Ising correlations on a torus}
\author{Baran Bayraktaroglu, Konstantin Izyurov}
\begin{abstract}
    We prove convergence of multi-point spin correlations in the critical Ising model on a torus. Via Pfaffian identities, this also implies convergence of other correlations, including correlations of spins with fermionic and energy observables. We obtain explicit formulae for the scaling limits in terms of theta functions, verifying the predictions in the physics literature. 
\end{abstract}

\maketitle
\tableofcontents{}

\section{Introduction}

The critical planar Ising model is a prototypical example of a lattice
model whose scaling limit is believed to be described by conformal
field theory \cite{BPZ}. A way to put these results on a concrete
ground is to stipulate that correlations of natural random variables
in the model, rescaled by a suitable power of the mesh size, converge
to conformally covariant limits, which can be then recognized as correlations
in the corresponding CFT. The power-law decay of correlations is a
manifestation of criticality, and conformal covariance of the limits
is a special feature of the critical models in dimension two.

Recently, there has been a lot of progress in rigorously establishing
the results of the above type in the case of the Ising model. In particular,
in \cite{CHI1}, convergence of spin correlations in the Ising model
on any simply-connected planar domains was proven, and in \cite{CHI2},
these results were extended to all primary fields (including energy,
disorder, and fermionic correlations). We will not attempt to
survey the extensive literature on the Ising model and its connections
to conformal field theory, referring the reader to the introduction
of \cite{CHI2}.

The goal of this paper is to extend the results obtained in \cite{CHI1,CHI2}
to the case of a flat torus. This has been already done in \cite{IKT}
for energy correlations. The main contribution of the present paper
is to compute the scaling limits of the spin and spin-fermion-fermion
correlations, which we do in Theorems \ref{thm: convergence of the observable}
and \ref{thm: conv_spins} below, and to confirm the explicit formulae
for the scaling limits predicted in the physics literature \cite{DiFSZ,DFMS}.
With that in hand, convergence of other correlations follow from combinatorial
Pfaffian identities exactly as in the planar case, as we explain in
Section \ref{subsec: all_correlations}. More precisely, we compute
correlations of spins (respectively, spins and two fermions) with
four ``topological'' observables $\mu^{(p,q)}$, $p,q\in\{0,1\}^{2}$,
satisfying $\sum_{p,q}\mu^{(p,q)}=1;$ one can thus get rid of these
observables by summing over $p,q.$ The need for introduction of these
observables is classical and goes back to early combinatorial solutions
of the Ising model \cite{Onsager}.

Let us highlight some of the motivation for studying the problem on
a torus. First, the full scope of conformal field theory (CFT) only manifests
itself when considered on Riemann surfaces. Hence, in order to rigorously
establish a connection between lattice models and CFT, we consider
it natural and important to move on from planar domains, and a torus
is the first case of interest, as evidenced by \cite{DiFSZ,DFMS}.
Second, the torus has been a classical and natural setup for studying
the Ising model, owing to the fact that various exact solvability
techniques work the best in the translation invariant setting, see
e.g. \cite{FF,Palmer,Salas, MW, Yang}; however, there seems to be no rigorous
treatment of (multi-point) correlation functions by these methods.
Third, the convergence results in the spirit of the present paper
are sometimes required to derive interesting consequences. One such
case is the work on sparse reconstruction problem in \cite{Galizka - Pete}. 

On the other hand, the case of the torus is in several respects easier
than the case of Riemann surfaces of arbitrary genus. First, a (flat)
torus has a ``canonical'' approximation by square lattice, which
is not necessarily the case for Riemann surfaces. This in particular
requires working, in higher genera, with more general families of graphs. Moreover, to
approximate a Riemann surface with a ``nice'' lattice, such as e.g.
isoradial graph, one has to endow it with a flat Euclidean structure;
when $g>1$, it will necessarily have conical singularities by Gauss-Bonnet
theorem. These conical singularities require a separate technical
work for convergence results. Also, in higher genera, certain degenerations
may occur (specifically, the analogs of Lemma \ref{lem: no_kernel}
below are not true in general), and more work is needed to bypass
this issue. In the torus case, the only degenerate case occurs in
the absence of spins, and it was treated in \cite{IKT}. In view of
these technicalities, we defer the treatment of higher genera to subsequent
work.

From a conceptual point of view, the scaling limits of correlations
in the Ising model, and in particular, fermionic observables, are
not functions but forms of fractional degree. In the case of a torus,
however, the distinction is not essential as we can work in the canonical
coordinate chart $z$, coming from obtaining a torus as a quotient
of the complex plane. We will therefore not always keep track of the
distinction.

The proof of convergence in the present paper mainly follows the arguments
in \cite{CHI2}. In Section \ref{subsec: definition_ and basic properties},
we set up the notation and define the main tool, the fermionic observable,
and establish its basic properties. This is the same observable as
in \cite{IKT}, but with spin insertions instead of energy insertions. A new challenge compared,
say, to the planar case, is the presence of the ``topological" term $\mu^{(p,q)},$
which is neither positive nor has any obvious monotonicity properties.
Therefore, \emph{a priori}, the normalization factor $\E(\sigma_{v_{1}}\dots\sigma_{v_{n}}\mu^{(p,q)})$
in the definition of the observable could vanish. In Section \ref{subsec: Non-vanishing},
we show that this in fact never happens, using a discrete complex
analysis argument (as mentioned above, such degeneracies \emph{could}
occur in higher genera). With that at hand, we proceed mostly as in
the planar case, computing the scaling limit of the fermionic observable
and then deducing the scaling limit of the logarithmic derivative
of $\E(\sigma_{v_{1}}\dots\sigma_{v_{n}}\mu^{(p,q)}).$ Fixing normalization
again requires a small modification compared to the planar case, again
due to lack of positivity/monotonicity properties of $\mu^{(p,q)}.$

We then proceed to the derivation of the explicit formulae for the
observables. In our approach to convergence, this is a non-trivial
step. (By contrast, an alternative approach to a proof of convergence via a combinatorial bosonization and convergence of dimer height function \cite{Dubedat, Hugoetc, Basok} yields explicit formulae more readily, but it has other technical drawbacks.) The main issue here is that the square
of the scaling limit of the fermionic observable is characterized
as a unique meromorphic differential with prescribed poles and some
restrictions on the residues, and \emph{zeros only of
even order}. This last conditions is difficult to check, even when
the candidate expression for the scaling limit is available. In \cite{BIVW},
the same problem was solved for planar multiply connected domains
using a limiting form of a classical identity (\ref{eq: bosonization})
due to Hejhal and Fay. Here, we adapt a similar argument, gluing a
second copy of the torus by thin ``handles'' at the positions of
spins, and then considering the limit of the Hejhal-Fay identity as
one pinches the handles. The details are different from \cite{BIVW}
due to the fact that there's no boundary, and pinching all the handles
disconnects the surface. We perform the relevant analysis in Section
\ref{sec: Szego} for surfaces of arbitrary genus, and specialize
to the torus case in Section \ref{sec: formulae_on_torus}. Let us mention that the exact expressions for the (squares of) scaling limits of correlations can be interpreted as correlations in a suitable compactified GFF, in the same vein as in \cite{BIVW}, and as anticipated by the combinatorial bosonization \cite{Dubedat} and the convergence of dimer height function to a compactified GFF on a torus \cite{Dubedat15}. 

\subsection{Acknowledgements}

The work was supported by Research Council of Finland grants
333932 and 363549. B. B. was supported by Research Council of Finland grant 357738. We are grateful
to Mikhail Basok for inspiring discussions. 

\medskip{}

\section{Discrete holomorphic fermionic observables on a torus}

\subsection{\label{subsec: definition_ and basic properties}Definitions and
basic properties of fermionic observables with spin insertions}

In this section, we introduce and establish some properties of the
discrete holomorphic fermionic observables on the torus. This is analogous
to \cite{CHI2}, where the construction was carried out in planar
domains, and \cite{IKT}, where the observables had energy insertions
rather than spin insertions. We will follow the conventions of \cite{CHI2},
in particular, we put $\dZ=\sqrt{2}e^{\frac{\i\pi}{4}}\delta\Z^{2},$
i.e., our square lattice is rotated by $45^{\circ}$ and the mesh
size $\delta$ denotes half of the diagonal. Note that the conventions
in \cite{IKT} are different.

We will focus on the critical Ising model on tori $\Td\coloneqq\dZ/\Lambdad$,
where $\delta>0$ is the mesh size, $\Lambdad$ is given by the lattice

\[
\Lambdad=\left\{ n\omd 1,+m\omd 2:n,m\in\Z\right\} 
\]
and $\omd 1,\omd 2\in\dZ$ are non-collinear. We will look at the
scaling limit of the model where $\delta\rightarrow0,\omd i\rightarrow\omega_{i}\in\C\setminus\{0\}$
for $i=1,2$. We will also assume that the modular parameter $\tau\coloneqq\omega_{2}/\omega_{1}$
satisfies $\im\tau>0.$

The Ising model on $\Td$ is the following probability measure on
spin configurations $\sigma:\Td\rightarrow\{\pm1\}$

\[
\P[\sigma]=\frac{1}{Z}\exp\left(\beta\sum_{x\sim y}\sigma_{x}\sigma_{y}\right)
\]
where the sum is over all pairs of nearest-neighboring vertices of
$\Td$, $\beta>0$ is a parameter called the \emph{inverse temperature},
and

\[
Z=\sum_{\sigma:\Td\rightarrow\{\pm1\}}\exp\left(\beta\sum_{x\sim y}\sigma_{x}\sigma_{y}\right)
\]
is the partition function of the model. We will fix the inverse temperature
to the critical value $\beta=\beta_{c}=\frac{1}{2}\log(\sqrt{2}+1)$.
We will denote by $\P$ and $\E$, respectively, the probability and
the expectation with respect to the above measure.

A \emph{corner} of the graph $\dZ$ is a midpoint of a segment joining
a lattice vertex $v\in\C^{\delta}$ with a vertex of the dual lattice
$\dZdual=\dZ+\delta$. Thus, the corners will form another square
lattice $\Cgr(\dZ)\coloneqq\frac{1}{2}\Cd+\frac{\delta}{2}$; the
corners are said to be \emph{adjacent} if they are nearest neighbors
on that square lattice. For the torus, the dual graph $\Tdual$ and
the corner graph $\Cgr(\Td)$ are defined as quotients of $\dZdual$
and $\Cgr(\dZ)$ by $\Lambda^{\delta}$. Given a point $z\in\Cgr(\Td)$,
we denote by $\cir z$ (respectively, $\bul z$) the vertex of $\Td$
(respectively, $\Tdual$) incident to $z$. 

We choose a \emph{Torelli marking} of the torus $\Td$, by letting
$\gamma_{10}$ and $\gamma_{01}$ be closed simple loops on $\Tdual$
that intersect at one vertex $b^{\bullet}$ and lift to paths on $\dZdual$
whose endpoints differ by $\omd 1$ and $\omd 2$, respectively. We
distinguish one of the corners $b$ adjacent to $b^{\bullet}$ and
call it \emph{the base corner}. We further denote $\gamma_{00}=\emptyset,\gamma_{11}=\gamma_{01}\cup\gamma_{10}$.
We introduce a 4--sheet cover $\Tdouble$ of $\Td$: 
\[
\Tdouble:=\{\dZ/2m\omd 1+2n\omd 2:m,n\in\Z\}.
\]
If $v_{1},\dots,v_{n}\in\Td$ (or more generally $\Td\cup\Tdual$),
we denote by $\Tdouble_{[v_{1},\dots,v_{n}]}$ the double cover of
$\Tdouble$ ramified at each point projecting to $v_{1},\dots,v_{n}$
and such that $2\gamma_{10}$ and $2\gamma_{01}$ lift to loops on
$\Tdouble_{[v_{1},\dots,v_{n}]}$. For $z\in\Tdouble_{[v_{1},\dots,v_{n}]},$
we denote by $z^{\star}$ the other fiber on $\Tdouble_{[v_{1},\dots,v_{n}]}$
over the same point -- that is, a small loop winding around any of
$v_{1},\dots,v_{n}$ lifts to a path on $\Tdouble_{[v_{1},\dots,v_{n}]}$
connecting $z$ to $z^{\star}.$ We denote the covering map from $\Tdouble$
to $\Td$ by $\pi$.

Given a corner $z\in\Cgr(\Tdouble_{[v_{1},\dots,v_{n}]})$, we denote
by $z_{pq}\in\Cgr(\Tdouble_{[v_{1},\dots,v_{n}]})$, $p,q\in\{0,1\}^{2}$,
a corner obtained by the following procedure: pick a simple path $\gamma$
on $\Cgr(\Td)$ from $\pi(z)$ to the base corner $b$ that does not
intersect either $\gamma_{01}$ or $\gamma_{10}$; then $z_{pq}$
is the endpoint of the lift of the loop $\gamma\cup\gamma_{pq}\cup\gamma^{-1}.$
(More precisely, by $\gamma_{pq}$ here we mean a path on $\Cgr(\Td)$
that follows $\gamma_{pq}$.) It is easy to see that $z_{pq}$ does
not depend on the choice of $\gamma$; note however that it does depend
on the relative positions of loops $\gamma_{01},\gamma_{10}$ and
punctures $v_{1},\dots,v_{n}.$

Given a subset $\gamma$ of edges of $\mathbb{T}^{\delta,\star}$,
we define the \emph{disorder observable} 
\[
\mu_{\gamma}:=e^{-2\beta\sum_{(xy)\cap\gamma\neq\emptyset}\sigma_{x}\sigma_{y}}
\]
where the sum is over the nearest neighbors $x\sim y\in\Td$ such
that the edge $(xy)$ intersects $\gamma$. We view subsets of edges
of $\Tdual$ as chains modulo 2 , in particular, $\gamma_{1}+\gamma_{2}=\gamma_{1}-\gamma_{2}$
will denote the symmetric difference of $\gamma_{1}$ and $\gamma_{2}$,
and $\partial\gamma$ is the boundary of $\gamma$, that is, the set
of all vertices of $\Tdual$ incident to an odd number of edges in
$\gamma$.

Given a corner $a\in\Cgr(\Td)$, we introduce the graph $\Cgr_{\{a\}}(\Td)$
by splitting the vertex $a$ in $\Cgr(\Td)$ into two vertices $a^{+}$
and $a^{-}$ along $[a^{\bullet}a^{\circ}]$, so that only paths on
$\Cgr(\Td)$ not crossing the segment $[a^{\bullet}a^{\circ}]$ are
allowed in $\Cgr_{\{a\}}(\Td)$, see \cite[Fig. 5.2]{IKT} or \cite[Remark 2.3]{CHI2}.
We similarly define $\Cgr_{\{a\}}(\Tdouble_{[v_{1},\dots,v_{n}]})$
and e.g. $\Cgr_{\{a_{1},\dots,a_{k}\}}(\Tdouble_{[v_{1},\dots,v_{n}]})$
with several split vertices. 

Let $\mu_{ij}$ be the disorder observable
corresponding to the path $\gamma_{ij}$, i.e. $\mu_{ij}=\mu_{\gamma_{ij}}$,
with $\mu_{00}=1$. We then define, for each $(p,q)\in\{0,1\}^{2}$,

\begin{equation}
\mu^{\sps}=\frac{1}{4}\sum_{i,j}(-1)^{\quadr{\sps}{i,j}}\mu_{ij},\label{eq: munu}
\end{equation}
where $\quadr{\sps}{i,j}=(1-p)i+(1-q)j+ij$ is (the quadratic form
corresponding to) one of the four \emph{spin structures} on $\Td$,
see \cite{Johnson, Cimasoni,Cim2,CCK} . 
\begin{defn}
\label{def: observable} Let $n$ be even. Given distinct vertices $\vv$ on the torus
$\Td$ and $v,w\in\Cgr(\Tdouble_{[v_{1},\dots,v_{n}]})$, we define
the discrete fermionic observable as

\begin{equation}
F_{v_{1},\dots,\v_{n}}^{(p,q)}(w,z)=\Corr{\Td}{\psi_{z}\psi_{w}\svv\mu^{(p,q)}}.\label{eq: fermobs}
\end{equation}
Here $\psi_{z}\psi_{w}=\drs z\drs w\sigma_{\cir z}\sigma_{\cir w}\mu_{\gamma}$,
where $\gamma$ is a subset of $\Tdual$ such that $\pa\gamma=\{\bul z,\bul w\}$,
and 
\begin{equation}
\drs z=e^{\frac{\i\pi}{4}}\left(\frac{\bul z-\cir z}{|\bul z-\cir z|}\right)^{-\frac{1}{2}}.
\end{equation}
\end{defn}

The expression in \eqref{eq: fermobs} depends on some choices, namely,
the choice of the disorder line $\gamma$ and the branches of the
square roots of $\eta_{z},\eta_{w}$. However, as explained in \cite[Section 2]{CHI2},
there is a natural way to connect these choices as $z$ moves to its
nearest neighbor $\hat{z}$ on $\Cgr_{\{w\}}(\Td).$ Namely, the square root in $\eta_{\hat{z}}$ is extended from that in
$\eta_{z}$ ``by continuity'' (i.e., we choose is so that $|\arg(\eta_{\hat{z}}/\eta_z)|=\frac{\pi}{8}$); if $z^{\bullet}\neq \hat{z}^{\bullet}]$, then $\gamma$ is amended by adding the segment
$[z^{\bullet}\hat{z}^{\bullet}]$ modulo two, and if $z^{\circ}\neq \hat{z}^{\circ}$, we add a $-$ sign
whenever the segment $[z^{\circ}\hat{z}^{\circ}]$ crosses $\gamma$; note that for any nearest neighbor move $z\to \hat{z}$ on $\Cgr_{\{w\}}(\Td),$ exactly one of the above alternatives happens.

With this convention, we have the following Lemma: 
\begin{lem}
\label{lem: properties of F} Let $w\in\Cgr(\Tdouble_{[v_{1},\dots,v_{n}]}).$
Then, assuming the initial condition 
\[
F_{v_{1},\dots,v_{n}}^{(p,q)}(w,w^{+})=\eta_{w}^{2}\E(\sigma_{v_{1}}\dots\sigma_{v_{n}}\mu^{(p,q)}),
\]
the observable $\FFS{w,\cdot}{\vv}{\sps}$ is well defined on $\Cgr_{\{w\}}(\Tdouble_{[v_{1},\dots,v_{n}]}),$
and \textup{$\eta_{w}^{-1}\FFS{w,\cdot}{\vv}{\sps}$} is s-holomorphic.
Moreover, one has the spinor property 
\begin{equation}
\FFS{w,z}{\vv}{\sps}=-\FFS{w,z^{\star}}{\vv}{\sps}\label{eq: F_spinor}
\end{equation}
anti-periodicity condition 
\begin{equation}
\FFS{w,z_{kl}}{\vv}{\sps}=(-1)^{pk+ql}\FFS{w,z}{\vv}{\sps}\label{eq: F_antiperiodic}
\end{equation}
and has special values 
\begin{equation}
\FFS{w,w^{\pm}}{\vv}{\sps}=\pm\eta_{w}^{2}\E_{\Td}[\sigma_{v_{1}}\dots\sigma_{v_{n}}\mu^{\sps}].\label{eq: F_value_at_w}
\end{equation}
Finally, if $w\sim v_{1}$ and $z\neq w$ is such that $z^{\bullet}=w^{\bullet},$
and the segment $[z^{\circ}v_{1}]$ does not intersect any disorder
lines in the expansion of $\mu^{\sps}$, we have 
\begin{equation}
\FFS{w,z}{\vv}{\sps}=\eta_{z}\eta_{w}\E_{\Td}[\sigma_{z^{\circ}}\sigma_{v_{2}}\dots\sigma_{v_{n}}\mu^{\sps}],\label{eq: move_spin_one_vertex}
\end{equation}
 where for the sign of $\eta_{z},$ we choose $\eta_{z}=\eta_{w}e^{-\frac{\i\theta}{2}},$
where $\theta$ is a counterclockwise angle between the segments $[z^{\circ}-z^{\bullet}]$
and $[w^{\circ}-w^{\bullet}].$
\end{lem}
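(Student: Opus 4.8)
The plan is to follow the scheme of \cite[Section 2]{CHI2} and \cite{IKT}, checking at each step that neither the toroidal topology nor the topological term $\mu^{\sps}$ obstructs the planar argument. Everything rests on the \emph{local propagation rule} for a nearest-neighbour move $z\to\hat z$ on $\Cgr_{\{w\}}(\Tdouble_{[\vv]})$ described before the statement. Writing the correlation \eqref{eq: fermobs} in its combinatorial contour expansion and applying the prescribed continuation of $\drs z$ and of the disorder line $\gamma$, I would check that $\eta_w^{-1}\FFS{w,z}{\vv}{\sps}$ and $\eta_w^{-1}\FFS{w,\hat z}{\vv}{\sps}$ satisfy the s-holomorphicity (projection) relation $\proj{\ell}{\eta_w^{-1}\FFS{w,z}{\vv}{\sps}}=\proj{\ell}{\eta_w^{-1}\FFS{w,\hat z}{\vv}{\sps}}$, with $\ell$ the real line attached to the edge $(z\hat z)$. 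Since this is a purely local identity, blind to the global topology, the verification is line-for-line the planar one, and the sum \eqref{eq: munu} enters linearly, commuting with every step. Well-definedness on $\Cgr_{\{w\}}(\Tdouble_{[\vv]})$ then reduces to trivial monodromy around each contractible loop: the winding picked up by continuing $\drs z$ cancels the crossing sign produced when $\gamma$ is re-routed, and the ramified cover $\Tdouble_{[\vv]}$ together with the split at $w$ are built exactly so that this cancellation holds around every loop contractible on the split graph.

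I would then turn to the global identities. The spinor property \eqref{eq: F_spinor} is immediate from the cover: a small loop around any $v_i$ lifts to a path joining $z$ to $z^\star$, and along it $\drs z$ winds by a half-integer while the disorder line sweeps across the enclosed spin $\sigma_{v_i}$, the two contributions multiplying to $-1$; this is precisely the assertion that $\FFS{w,\cdot}{\vv}{\sps}$ is a genuine section over $\Tdouble_{[\vv]}$. For the anti-periodicity \eqref{eq: F_antiperiodic}, transporting $z$ once around $\gamma_{kl}$ augments the disorder line of $\psi_z$ by the cycle $\gamma_{kl}$; inside the correlation against $\mu^{\sps}$ this relabels $\mu_{ij}\mapsto\mu_{(i+k)(j+l)}$, up to the mod-$2$ intersection sign of $\gamma_{kl}$ with $\gamma_{ij}$ and the sheet sign accrued by $\drs z$ on $\Tdouble$. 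Re-indexing the sum and inserting the shift $\quadr{\sps}{i+k,j+l}-\quadr{\sps}{i,j}$ should collapse all the $(i,j)$-dependent signs, leaving exactly the global factor $(-1)^{pk+ql}$.

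The last two identities are algebraic evaluations of $\psi_z\psi_w$ at degenerate configurations. At $z=w^\pm$ one has $\cir z=\cir w$ and $\bul z=\bul w$, so $\sigma_{\cir z}\sigma_{\cir w}=1$, the disorder line is empty, and $\drs z\drs w=\pm\drs w^{2}$ according to the continued branch; this gives \eqref{eq: F_value_at_w}, the $+$ case being the imposed initial condition and the $-$ case the sign picked up in continuing $\drs z$ across the split at $w$. For \eqref{eq: move_spin_one_vertex}, the hypotheses give $\sigma_{\cir w}=\sigma_{v_1}$ and $\bul z=\bul w$, hence $\sigma_{\cir w}\sigma_{v_1}=1$ and an empty $\gamma$; the assumption that $[\cir z v_1]$ meets no disorder line of $\mu^{\sps}$ excludes any further sign, and the prescribed $\drs z=\drs w e^{-\frac{\i\theta}{2}}$ fixes the remaining branch, leaving $\drs z\drs w\,\E_{\Td}[\sigma_{\cir z}\sigma_{v_2}\dots\sigma_{v_n}\mu^{\sps}]$.

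The one genuinely new point, with no planar analogue, is the sign bookkeeping in \eqref{eq: F_antiperiodic}. The hard part will be to show that, term by term in \eqref{eq: munu}, the intersection sign of $\gamma_{kl}$ with $\gamma_{ij}$, the sheet sign of $\drs z$ on $\Tdouble$, and the shift of $\quadr{\sps}{i,j}$ conspire to produce $(-1)^{pk+ql}$ uniformly in $(i,j)$. This is exactly where the spin-structure weights in \eqref{eq: munu} are indispensable, and I expect it to demand the most care.
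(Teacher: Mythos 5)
Your overall architecture coincides with the paper's: the s-holomorphicity, well-definedness, spinor property \eqref{eq: F_spinor}, and the evaluations \eqref{eq: F_value_at_w}, \eqref{eq: move_spin_one_vertex} are all local and are simply imported from the planar case, while the genuinely new content is the anti-periodicity \eqref{eq: F_antiperiodic}, proved by a transformation rule $\mu_{ij}\mapsto\mu_{i+k,j+l}$ under transport of $z$ around $\gamma_{kl}$, followed by re-indexing the sum \eqref{eq: munu} via the shift of the quadratic form. You have correctly located the crux. However, the one step you defer --- the sign bookkeeping --- is left in a state where, as written, it would produce the \emph{wrong} answer, so this is a genuine gap rather than a routine verification.

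Concretely: you enumerate three sources of signs, namely the mod-$2$ intersection of $\gamma_{kl}$ with $\gamma_{ij}$ (which gives $(-1)^{li+kj}$), a ``sheet sign'' of $\drs z$ on $\Tdouble$, and the shift of $\quadr{\sps}{i,j}$. The $\drs z$ factor in fact does not change at all (this must be checked, but it contributes nothing), and the quadratic form identity reads $\quadr{\sps}{i,j}+li+kj=\quadr{\sps}{i+k,j+l}-kl-(1-p)k-(1-q)l$, so with only your listed contributions the re-indexed sum carries the prefactor $(-1)^{kl+(1-p)k+(1-q)l}$, which differs from the claimed $(-1)^{pk+ql}$ by $(-1)^{kl+k+l}$ (already for $k=1$, $l=0$ this is an overall minus sign). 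The missing factor $(-1)^{kl+k+l}$ comes from two effects you have not accounted for: the term $(-1)^{kl}$ arises because, when $k=l=1$ and the two loops are traced sequentially, $z^{\circ}$ crosses the disorder line that was just created (or erased) while tracing the first loop; and the terms $(-1)^{k}$ and $(-1)^{l}$ arise because $z_{kl}$ is defined by first dragging $z$ to the base corner $b$ along a path $\gamma$, which lays down a disorder line that $z^{\circ}$ then crosses once while tracing each of $\gamma_{10}$ and $\gamma_{01}$. Without isolating these two contributions the computation does not close, so the claim that everything ``conspires to produce $(-1)^{pk+ql}$'' cannot be taken on faith; it is exactly these path-dependent self-interaction signs that make the base-corner construction of $z_{kl}$ enter the proof in an essential way.
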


\begin{proof}
The s-holomorphicity and the properties (\ref{eq: F_spinor}), (\ref{eq: F_value_at_w}),
(\ref{eq: move_spin_one_vertex}) have local nature and thus their
proofs are exactly as in the planar case, see \cite[Section 2.3]{CHI2}
and the proof of \cite[Corollary 5.2]{CHI2}. Hence, it only remains
to prove (\ref{eq: F_antiperiodic}). We first note the following
transformation rule: 
\begin{equation}
\E[\psi_{z_{kl}}\psi_{w}\sigma_{v_{1}}\dots\sigma_{v_{n}}\mu_{ij}]=(-1)^{kl+k+l}(-1)^{li+kj}\E[\psi_{z}\psi_{w}\sigma_{v_{1}}\dots\sigma_{v_{n}}\mu_{i+k,j+l}].\label{eq: mu_ij_transofrm_rule}
\end{equation}
To check (\ref{eq: mu_ij_transofrm_rule}), recall that $z_{kl}$
is defined by moving $z$ to the base corner $b$, tracing $\gamma_{kl},$
and then moving back to $z$ along the same path. We assume for without
loss of generality that $z$ is not adjacent to either $\gamma_{10}$
or $\gamma_{10}$, and that $z$ stays on the same side of $\gamma_{10}$
or $\gamma_{01}$ as we move. When moving $z$ along $\gamma_{10}$,
$z^{\bullet}$ creates (or erases) a disorder line $\gamma_{10}$,
and similarly for moving along $\gamma_{01},$ which explains the
change of $\mu_{ij}$ into $\mu_{i+k,j+l}.$ It is straightforward
to check that the $\eta_{z}$ factor does not change; hence, we only
need to collect the signs from $z^{\circ}$ crossing disorder lines.
The factor $(-1)^{li}$ (resp., $(-1)^{kj}$) is from crossing $\gamma_{10}$
as we move along $\gamma_{01}$ (resp., vice versa). The factor $(-1)^{kl}$
accounts for the fact that if we do both sequentially, when tracing
$\gamma_{01},$ $z^{\circ}$ also jumps across a disorder line \emph{created
(or erased) when tracing $\gamma_{10}$. }Finally, the factors $(-1)^{k}$
and $(-1)^{l}$ come from the fact that when tracing $\gamma_{10}$
and $\gamma_{01}$, we will cross the disorder line that was created
as $z$ moved to $b$ (and then is subsequently erased as we move
back). Thus, (\ref{eq: mu_ij_transofrm_rule}) is proved, and we conclude
that \emph{
\begin{multline*}
\FFS{w,z_{kl}}{\vv}{\sps}=\frac{1}{4}\sum_{i,j}(-1)^{\quadr{\sps}{i,j}+kl+li+kj+k+l}\E[\psi_{z}\psi_{w}\sigma_{v_{1}}\dots\sigma_{v_{n}}\mu_{i+k,j+l}]\\
=\frac{1}{4}(-1)^{pk+ql}\sum_{i,j}(-1)^{\quadr{\sps}{i+k,j+l}}\E[\psi_{z}\psi_{w}\sigma_{v_{1}}\dots\sigma_{v_{n}}\mu_{i+k,j+l}]=\frac{1}{4}(-1)^{pk+ql}\FFS{w,z}{\vv}{\sps},
\end{multline*}
}since 
\begin{multline*}
\quadr{\sps}{i,j}+kl+li+kj=(i+k)(j+l)+(1-p)i+(1-q)j\\
=\quadr{\sps}{i+k,j+l}-(1-p)k-(1-q)l.
\end{multline*}
\end{proof}

\subsection{\label{subsec: Non-vanishing}Non-vanishing of the discrete residue.}

In order to pass to the scaling limit, one needs to normalize the
observable by its ``discrete residues'' $\FFS{w,w^{+}}{\vv}{\sps}-\FFS{w,w^{-}}{\vv}{\sps}.$
Here we will show that this ``discrete residue'', given by (\ref{eq: F_value_at_w}),
never vanishes unless $n=0,$ $(p,q)=(0,0).$ That case required a
separate treatment which was carried out in \cite{IKT}. We recall
the classical procedure \cite{Smirnov} for integrating the square
of an s-holomorphic function:
\begin{defn}
\label{defn: intfsq} Given an s-holomorphic function $F$ on $\Td$,
we define the real-valued, additivaly multi-valued function on $\Td\cup\Tdual$ 
\[
G(w)=\int^{w}\im(F(z)^{2}dz)
\]
to be the integral of the (closed) discrete differential form $-2\i F(z)^{2}(\bul z-\cir z)$,
in other words, for neighboring $\bul z\in\Tdual$ and $\cir z\in\Td$,
we have

\[
G(\bul z)-G(\cir z)=-2\i F(z)^{2}(\bul z-\cir z)=2\delta|F(z)|^{2}.
\]
\end{defn}

\begin{lem}
\label{lem: no_kernel}Let $v_{1},\dots,v_{n}\in\T^{\delta}$, where
$n>0,$ and let $F$ be an s-holomorphic function on $\Cgr\left(\Tdouble_{[v_{1},\dots,v_{n}]}\right)$
satisfying $F(z)=-F(z^{\star})$ and $F(z_{ij})=(-1)^{pi+qj}F(z)$
for some $p,q\in\{0,1\}.$ Then, $F\equiv0.$
\end{lem}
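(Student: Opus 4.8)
The plan is to use the discrete analogue of the maximum principle for the integral of the square of an s-holomorphic function, exactly in the spirit of Smirnov's argument (\cite{Smirnov}) and its adaptation in \cite{CHI2,IKT}. Given such an $F$ on $\Cgr(\Tdouble_{[v_1,\dots,v_n]})$, I would form the function $G(w)=\int^{w}\im(F(z)^2\,dz)$ from Definition \ref{defn: intfsq}. The key point is that $G$ is \emph{subharmonic on the primal lattice} and \emph{superharmonic on the dual lattice} (or vice versa) at every vertex where $F$ is genuinely s-holomorphic; this is the classical computation showing that $\Delta G\ge 0$ on $\Td$ and $\Delta G\le 0$ on $\Tdual$. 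If I can show that $G$ descends to an honest single-valued function on a suitable cover (or on $\Td\cup\Tdual$ itself), then the maximum principle forces $G$ to be constant, which in turn forces $F\equiv 0$ since $|F(z)|^2$ is proportional to the increments of $G$ across edges.

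The first thing to check carefully is the multi-valuedness. A priori $G$ is only additively multi-valued, and the obstruction to single-valuedness is the set of periods of the closed form $-2\i F(z)^2(\bul z-\cir z)$ around the generators of $H_1$. Here the spinor condition $F(z)=-F(z^\star)$ guarantees that $F(z)^2$ is genuinely defined downstairs on $\Tdouble$ (the double-cover ramification disappears upon squaring), and the ramification points $v_1,\dots,v_n$ become removable for the quadratic differential $F^2\,dz$ in the sense that the form has no monodromy around them. The anti-periodicity $F(z_{ij})=(-1)^{pi+qj}F(z)$ likewise ensures $F^2$ is invariant under the deck transformations of $\Tdouble$ over $\Td$, so $F^2\,dz$ pushes down to a well-defined closed form on $\Td$ minus the punctures. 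Thus $G$ is single-valued up to the two periods around $\gamma_{10}$ and $\gamma_{01}$ and up to the local behavior at the $v_i$.

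The hard part, and where the hypothesis $n>0$ is essential, is ruling out nontrivial periods and controlling the behavior at the punctures so that the maximum principle genuinely applies. At each puncture $v_i$ the function $F$ behaves, in the continuum analogue, like a half-integer power, so $|F|^2$ and hence the increments of $G$ could in principle blow up; the correct statement is that $G$ \emph{extends continuously} (is bounded) near each $v_i$, and in fact the local structure makes $v_i$ a local extremum-type point rather than a source of monodromy. I would argue, as in \cite[proof of analogous lemma]{CHI2} adapted to the torus, that $G$ attains a global maximum on the primal lattice and a global minimum on the dual lattice at some genuine interior vertex (the punctures do not create new extrema because of the sign/branching structure), and then subharmonicity/superharmonicity pins $G$ down to a constant. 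The presence of at least one puncture is what breaks the flat, nonvanishing solutions that exist when $n=0$ (the genuinely degenerate case handled separately in \cite{IKT}): with $n>0$ the relevant space of global s-holomorphic spinors with these symmetries is trivial.

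Concretely, the steps I would carry out are, in order: (i) verify that $F^2\,dz$ descends to a single-valued closed discrete quadratic differential on $\Td\setminus\{v_1,\dots,v_n\}$ using \eqref{eq: F_spinor} and the anti-periodicity; (ii) define $G$ and establish the sub/superharmonicity inequalities $G(\cir z)\le$ average of neighbors and $G(\bul z)\ge$ average of neighbors, away from punctures; (iii) analyze the local behavior of $G$ at each $v_i$ to show it remains bounded and does not furnish a genuine maximum on the primal part nor a minimum on the dual part; (iv) invoke the maximum principle on the compact surface to conclude $G\equiv\const$; (v) read off $F\equiv 0$ from $G(\bul z)-G(\cir z)=2\delta|F(z)|^2=0$ on every edge. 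I expect step (iii) — the interplay between the branching at the punctures and the extremal structure of $G$ — to be the main obstacle, since this is precisely the place where the torus/genus-one geometry and the nontriviality of $n$ enter, and where the analogous statement fails in higher genus as the authors note in the introduction.
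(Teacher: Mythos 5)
You have the right object---the function $G=\int\im(F^{2}\,dz)$ of Definition \ref{defn: intfsq}---and the right source of positivity, but the mechanism you propose for concluding does not go through, and it fails exactly at the point you flag without resolving: single-valuedness of $G$. The periods $\oint_{\gamma}\im(F^{2}\,dz)$ around the two homology cycles of the torus have no reason to vanish a priori, so $G$ is genuinely additively multi-valued, and then the maximum principle is simply unavailable: an additively multi-valued function with a nonzero period is unbounded and attains no maximum (and passing to a cover destroys compactness). The paper's proof is built precisely to sidestep this: it restricts $G$ to the dual lattice $\Tdual$ (which also disposes of your step (iii), since the ramification points are \emph{primal} vertices and never enter the computation of $\Delta^{\bullet}G$ at dual vertices---nothing blows up in the discrete anyway), and it replaces the maximum principle by the discrete Green's identity $\sum_{u\in\Tdual}\Delta^{\bullet}G(u)=0$, which involves only the well-defined increments of $G$ and is therefore immune to the multi-valuedness. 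Combined with the pointwise inequality $\Delta^{\bullet}G(u)=2\sqrt{2}\,\delta\,\bigl|\sum_{z\sim u}\eta_{z}^{2}F(z)\bigr|\geq0$, this forces $\Delta^{\bullet}G\equiv0$.

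The second gap is at the end. Even granting constancy of $G$ on each sublattice, the increments $G(\bul z)-G(\cir z)=2\delta|F(z)|^{2}$ only tell you that $|F|$ is constant, not that it vanishes: the constants on $\Td$ and on $\Tdual$ need not coincide, and closedness of the form around loops is consistent with any common nonnegative increment. What the paper actually extracts from $\Delta^{\bullet}G\equiv0$ is the exact identity $\sum_{z\sim u}\eta_{z}^{2}F(z)=0$ at every dual vertex; an edge-value computation using s-holomorphicity then shows $F(z)\equiv\proj{\eta_{z}}{c}$ for a single constant $c\in\C$, and only at this last step do the hypotheses $n>0$ and $F(z)=-F(z^{\star})$ enter, forcing $c=0$. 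In your sketch the spinor condition is used only to descend $F^{2}$ to $\Td$, and the role of $n>0$ is asserted (``breaks the flat solutions'') but never implemented; that surviving constant $c$ is precisely the flat solution that makes the statement fail when $n=0$.
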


\begin{rem}
A continuous version of this Lemma has a short proof, given in the proof of Theorem \ref{thm: convergence of the observable} below. In the discrete, we need a more subtle argument since a square of discrete holomorphic function is not necessarily discrete holomorphic. Note that the punchline of either proof is, essentially, that a (discrete) holomorphic differential on a torus cannot have simple zeros; this would fail in higher genera.  
\end{rem}

\begin{proof}
Let $G$ be the the imaginary part of the discrete integral of $F^{2}$,
which we consider restricted to $\Tdual$. Then, $G$ is an additively
multi-valued function, in particular, its discrete derivatives $G(u')-G(u)$
for $u'\sim u$ are well defined on $\T^{\delta,\star}.$ It therefore
follows from the discrete Green's identity that 
\[
\sum_{u\in\Tdual}\Delta^{\bullet}G(u)=0.
\]
On the other hand, recall that we have 
\[
\frac{1}{\sqrt{2}\delta}\Delta^{\bullet}G(u)=2\left|\sum_{z\sim u}\eta_{z}^{2}F(z)\right|,
\]
(see the proof of Lemma 1.5 in the Appendix of \cite{IzyurovFree}
and \cite[Lemma 4.8]{CHI2}), in particular, $\Delta^{\bullet}G(u)\geq0$
for all $u\in\Tdual.$ Hence, $G$ is discrete harmonic, and 
\begin{equation}
\sum_{z\sim u}\eta_{z}^{2}F(z)=0\label{eq: disc_lap_vanishes}
\end{equation}
 for each $u\in\Tdual.$ Denote the corners neighboring to $u$ by
$z_{1},z_{2},z_{3},z_{4}$ (in counterclockwise order), and let $e_{12},e_{34}$
be the edges of $\Tdual$ incident to $z_{1},z_{2}$ (respectively,
$z_{3},z_{4}$). For $e$ an edge of $\Tdual$, denote by $F(e)$
the unique complex number such that 
\[
\frac{1}{2}\left(F(e)+\eta_{z}^{2}\bar{F(e)}\right)=\proj{\eta_{z}}{F(e)}=F(z).
\]
 for each corner $z\sim e;$ recall that the existence of such a number
if equivalent to s-holomorphicity. Plugging this identity into
(\ref{eq: disc_lap_vanishes}) yields 
\[
\left(\eta_{z_{1}}^{2}+\eta_{z_{2}}^{2}\right)F(e_{12})+\left(\eta_{z_{1}}^{4}+\eta_{z_{2}}^{4}\right)\bar{F(e_{12})}+\left(\eta_{z_{3}}^{2}+\eta_{z_{4}}^{2}\right)F(e_{34})+\left(\eta_{z_{3}}^{4}+\eta_{z_{4}}^{4}\right)\bar{F(e_{34})}=0.
\]
Noting that $\eta_{z_{4}}^{4}=-\eta_{z_{3}}^{4}=\eta_{z_{2}}^{4}=-\eta_{z_{1}}^{4}$
and $\eta_{z_{3}}^{2}+\eta_{z_{4}}^{2}=-\eta_{z_{1}}^{2}-\eta_{z_{2}}^{2}\neq0,$
this yields 
\[
F(e_{12})=F(e_{34}),
\]
 i.e., $F$ is constant on each line running along edges of $\Tdual.$
Now, let $e_{23}$ be the edge incident to $z_{2},z_{3}.$ Since $\proj{\eta_{z_{2}}}{F(e_{23})}=\proj{\eta_{z_{2}}}{F(e_{12})}$
and $\proj{\eta_{z_{3}}}{F(e_{23})}=\proj{\eta_{z_{3}}}{F(e_{34})}=\proj{\eta_{z_{3}}}{F(e_{12})},$
we conclude that $F(e_{23})=F(e_{12}),$ i.e., the constant values
of $F$ coincide on the perpendicular lines running along edges. We
conclude that $F(z)\equiv\proj{\eta_{z}}c$ for a constant $c\in\C$,
but if $n>0$ the only such constant consistent with the condition
$F(z)=-F(z^{\star})$ is zero. 
\end{proof}

We now examine the properties of the quantity $\Corr{\Td}{\svv\mu^{(p,l)}}$
\emph{as a function of the positions of $v_{1},\dots,v_{n},$ }under
the rule that a spin-disorder correlation changes sign when a spin
jumps across a disorder line.
\begin{lem}
\label{lem: shift_denom} For $p,q,k,l\in\{0,1\}$, one has 
\[
\Corr{\Td}{\sigma_{(v_{1})_{kl}}\sigma_{v_{2}}\dots\sigma_{v_{n}}\mu^{\sps}}=\Corr{\Td}{\svv\mu^{(p+l,q+k)}}.
\]
where $(p+l,q+k)$ is to be taken $\mod2$. 
\end{lem}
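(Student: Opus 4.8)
The plan is to mimic, in the simpler purely bosonic setting, the sign-bookkeeping that produced the transformation rule \eqref{eq: mu_ij_transofrm_rule}, but stripped of the fermionic complications. First I would expand the left-hand side using the definition \eqref{eq: munu}:
\[
\Corr{\Td}{\sigma_{(v_{1})_{kl}}\sigma_{v_{2}}\dots\sigma_{v_{n}}\mu^{\sps}}=\frac{1}{4}\sum_{i,j}(-1)^{\quadr{\sps}{i,j}}\Corr{\Td}{\sigma_{(v_{1})_{kl}}\sigma_{v_{2}}\dots\sigma_{v_{n}}\mu_{ij}},
\]
so that everything reduces to understanding, for each fixed $(i,j)$, how the single spin--disorder correlation $\Corr{\Td}{\sigma_{(v_1)_{kl}}\sigma_{v_2}\dots\sigma_{v_n}\mu_{ij}}$ relates to $\Corr{\Td}{\svv\mu_{ij}}$.

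Second, I would establish the elementary sign rule
\[
\Corr{\Td}{\sigma_{(v_1)_{kl}}\sigma_{v_2}\dots\sigma_{v_n}\mu_{ij}}=(-1)^{kj+li}\Corr{\Td}{\svv\mu_{ij}}.
\]
By definition, $(v_1)_{kl}$ is obtained by dragging $v_1$ along a closed loop (go to the base point, trace $\gamma_{kl}$, return) whose homology class is $[\gamma_{kl}]$, and by the stated rule the correlation acquires a factor $-1$ each time the moving spin crosses a disorder line of $\mu_{ij}$, that is, each time it crosses $\gamma_{ij}$. The net accumulated sign is therefore $(-1)$ raised to the mod-$2$ intersection number of $\gamma_{kl}$ and $\gamma_{ij}$. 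Writing their classes as $k[\gamma_{10}]+l[\gamma_{01}]$ and $i[\gamma_{10}]+j[\gamma_{01}]$ and using $[\gamma_{10}]\cdot[\gamma_{01}]=1$ together with $[\gamma_{10}]\cdot[\gamma_{10}]=[\gamma_{01}]\cdot[\gamma_{01}]=0$ on the torus, this intersection number equals $kj+li \pmod 2$, giving the claimed factor.

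Third, substituting the sign rule and collecting exponents yields
\[
\Corr{\Td}{\sigma_{(v_{1})_{kl}}\sigma_{v_{2}}\dots\sigma_{v_{n}}\mu^{\sps}}=\frac{1}{4}\sum_{i,j}(-1)^{\quadr{\sps}{i,j}+kj+li}\Corr{\Td}{\svv\mu_{ij}},
\]
and it remains to verify the quadratic-form identity $\quadr{\sps}{i,j}+kj+li\equiv\quadr{(p+l,q+k)}{i,j}\pmod 2$. This is immediate from $\quadr{\sps}{i,j}=(1-p)i+(1-q)j+ij$: increasing $p$ by $l$ and $q$ by $k$ changes the linear part precisely by $-li-kj\equiv li+kj \pmod 2$. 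Recognizing the right-hand sum as $\Corr{\Td}{\svv\mu^{(p+l,q+k)}}$ through \eqref{eq: munu} finishes the argument; note in particular the cross-structure $(p+l,q+k)$ reflects the antisymmetry of the intersection pairing.

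The main point I would treat with care, and the only genuine subtlety, is the claim that the \emph{only} sign contribution is the intersection number $(-1)^{kj+li}$, with no analogue of the extra factors $(-1)^{kl+k+l}$ that appear in \eqref{eq: mu_ij_transofrm_rule}. Those arose because moving a corner $z$ also drags its dual component $z^{\bullet}$, which creates or erases disorder lines, and because of the auxiliary disorder line attached to the fermion, together with the $\eta_{z}$ phases. Here, by contrast, $v_1$ is a primal vertex carrying neither a fermion nor an $\eta$-factor, so it creates no disorder lines of its own, and the go-and-return segments cross every line an even number of times and thus contribute trivially. Hence the bookkeeping collapses to the single clean intersection sign, and in particular no sequential-crossing term $(-1)^{kl}$ survives.
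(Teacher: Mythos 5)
Your proposal is correct and follows essentially the same route as the paper: expand $\mu^{\sps}$ via \eqref{eq: munu}, observe that dragging $v_{1}$ along $\gamma_{kl}$ contributes the sign $(-1)^{kj+li}$ from crossings of $\gamma_{ij}$, and absorb this into the quadratic form to get $\quadr{(p+l,q+k)}{i,j}$. Your closing remark on why no analogue of the extra factors $(-1)^{kl+k+l}$ from \eqref{eq: mu_ij_transofrm_rule} appears here is a worthwhile clarification that the paper leaves implicit.
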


\begin{proof}
Recall that 

\[
\Corr{\Td}{\svv\mu^{(p+k,l+q)}}=\frac{1}{4}\sum_{i,j}(-1)^{q_{(p,q)}(i,j)}\Corr{\Td}{\svv\mu_{ij}}
\]
We now note that $\Corr{\Td}{\sigma_{(v_{1})_{kl}}\sigma_{v_{2}}\dots\sigma_{v_{n}}\mu_{ij}}=(-1)^{kj+li},$
since as we move $v_{1}$ around the torus along $\gamma_{kl}$, we
add a minus sign whenever we are crossing $\gamma_{ij}$. Since we
have 
\[
(-1)^{q_{\sps}(i,j)}(-1)^{kj+li}=(-1){}^{(1-p+l)i+(1-q+k)j+ij}=(-1)^{q_{(p+l,q+k)}(i,j)},
\]
 the result follows.
\end{proof}
\begin{lem}
\label{lem: denom_does_not_vanish}For any $\vv$ with $n>0$ even
and any $p,q$, we have 
\[
\Corr{\Td}{\svv\mu^{\sps}}\neq0.
\]
\end{lem}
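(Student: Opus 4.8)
The plan is to argue by contradiction, reducing the vanishing of a single correlation $\Corr{\Td}{\svv\mu^{\sps}}$ to the vanishing of the full spin correlation, which is impossible. The bridge is the identity $\sum_{p,q}\mu^{\sps}=1$ recorded in the introduction (it follows at once from \eqref{eq: munu}: summing over the four spin structures kills every $\mu_{ij}$ except $\mu_{00}=1$). Taking correlations, $\sum_{p,q}\Corr{\Td}{\svv\mu^{\sps}}=\Corr{\Td}{\svv}$, and the right-hand side is \emph{strictly positive}: the model is ferromagnetic and $n$ is even, so this is Griffiths' (GKS) inequality (strict positivity of even correlations at $0<\beta<\infty$) on the finite graph $\Td$. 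Hence it suffices to show that if one of the four correlations $\Corr{\Td}{\svv\mu^{\sps}}$ vanishes, then so do all four.

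So suppose $\Corr{\Td}{\svv\mu^{\sps}}=0$ for some fixed $(p,q)$. Pick a corner $w\sim v_1$ and set $F:=\FFS{w,\cdot}{\vv}{\sps}$. By Lemma \ref{lem: properties of F}, $\eta_w^{-1}F$ is s-holomorphic on the split graph $\Cgr_{\{w\}}(\Tdouble_{[v_1,\dots,v_n]})$ and satisfies the spinor relation \eqref{eq: F_spinor} and the antiperiodicity \eqref{eq: F_antiperiodic}. The role of the slit $[w^{\bullet}w^{\circ}]$ is precisely to record the possible jump of $F$, measured by the discrete residue $F(w^{+})-F(w^{-})=2\eta_w^{2}\Corr{\Td}{\svv\mu^{\sps}}$ from \eqref{eq: F_value_at_w}. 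Under our assumption this residue is zero, so $F$ matches across the slit and $\eta_w^{-1}F$ descends to a single-valued s-holomorphic function on the unsplit graph $\Cgr(\Tdouble_{[v_1,\dots,v_n]})$, still obeying \eqref{eq: F_spinor} and \eqref{eq: F_antiperiodic}. Since $n>0$, Lemma \ref{lem: no_kernel} forces $\eta_w^{-1}F\equiv0$, hence $F\equiv0$.

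I now propagate this. Feeding $F\equiv0$ into \eqref{eq: move_spin_one_vertex} with a corner $z$ satisfying $z^{\bullet}=w^{\bullet}$ and $z^{\circ}$ a neighbor of $v_1$ gives $\Corr{\Td}{\sigma_{z^{\circ}}\sigma_{v_2}\dots\sigma_{v_n}\mu^{\sps}}=0$: the correlation (for this same $(p,q)$) still vanishes after the first spin is displaced from $v_1$ to the adjacent site $z^{\circ}$. Any sign picked up from crossing a disorder line is harmless, as it preserves vanishing. For the displaced configuration the hypotheses of Lemma \ref{lem: no_kernel} hold again, so the whole argument reruns and the vanishing propagates one site further. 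Iterating, $\Corr{\Td}{\sigma_{u}\sigma_{v_2}\dots\sigma_{v_n}\mu^{\sps}}=0$ whenever the first spin is carried to a site $u$ along any lattice path avoiding $v_2,\dots,v_n$ — in particular around the loops $\gamma_{10}$ and $\gamma_{01}$. Transporting the first spin around $\gamma_{kl}$ back to $v_1$ and applying Lemma \ref{lem: shift_denom} yields $\Corr{\Td}{\svv\mu^{(p+l,q+k)}}=0$ for every $k,l\in\{0,1\}$, i.e. all four correlations vanish. Summed against $\sum_{p,q}\mu^{\sps}=1$, this gives $\Corr{\Td}{\svv}=0$, contradicting strict positivity.

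The step I expect to be the crux is the descent of $\eta_w^{-1}F$ from the split graph to the unsplit one: one must verify that the vanishing of the discrete residue is \emph{equivalent} to the absence of any jump of $F$ across $[w^{\bullet}w^{\circ}]$, so that the local s-holomorphicity relations — which hold on $\Cgr_{\{w\}}(\Tdouble_{[v_1,\dots,v_n]})$ away from the slit — automatically close up at the reglued corner $w$. This is the torus counterpart of the planar mechanism in \cite{CHI2} and should follow from the purely local nature of s-holomorphicity. The only other point needing care is the bookkeeping during the transport of the first spin (how the double cover and the disorder lines update at each step), but since nonzero prefactors never turn a zero correlation into a nonzero one, none of these details affect the conclusion.
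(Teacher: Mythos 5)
Your proposal is correct and follows essentially the same route as the paper's own proof: assume the correlation vanishes, note via \eqref{eq: F_value_at_w} that the observable then closes up on the unsplit graph, kill it with Lemma \ref{lem: no_kernel}, propagate the vanishing by \eqref{eq: move_spin_one_vertex} and Lemma \ref{lem: shift_denom} to all four $(p,q)$, and contradict the strict positivity of $\Corr{\Td}{\svv}$ via \eqref{eq: sum_mus}. The only cosmetic difference is that you make explicit the GKS justification for positivity and flag the regluing step, both of which the paper treats as immediate.
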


\begin{proof}
Assume the contrary. Then, for any $w\in\Cgr(\Tdouble),$ we have
by \ref{eq: F_value_at_w} $\FFS{w,w^{\pm}}{\vv}{\sps}=0,$ that is,
$\FFS{w,\cdot}{\vv}{\sps}$ is in fact well-defined (and s-holomorphic)
on $\Cgr(\Tdouble)$ rather than $\Cgr_{\{w\}}(\Tdouble).$ By Lemma
\ref{lem: properties of F}, this function satisfies all the conditions
of Lemma \ref{lem: no_kernel} and thus it is identically zero. By
choosing $w\sim v_{1},$ we now get by (\ref{eq: move_spin_one_vertex})
that 
\[
\E_{\Td}[\sigma_{\hat{v}_{1}}\sigma_{v_{2}}\dots\sigma_{v_{n}}\mu^{\sps}]=0
\]
for each $\hat{v}_{1}\in\Td$ neighboring $v_{1}.$ We can thus move
$v_{1}$ further while the identity still holds true; dragging it
around the torus we obtain by Lemma \ref{lem: shift_denom} that 
\[
\E_{\Td}[\sigma_{v_{1}}\sigma_{v_{2}}\dots\sigma_{v_{n}}\mu^{(\hat{p},\hat{q})}]=0.
\]
for each $(\hat{p},\hat{q}).$ But now, noting that 
\begin{equation}
1=\mu^{(0,0)}+\mu^{(0,1)}+\mu^{(1,0)}+\mu^{(1,1)},\label{eq: sum_mus}
\end{equation}
 we deduce that $\Corr{\Td}{\svv}=0,$ which is a contradiction since
for even $n$, this correlation is always positive.
\end{proof}

\subsection{Scaling limit of spin correlations}

We are now in a position to prove the results concerning scaling limits
of $\FFS{w,z}{\vv}{\sps},$ and deduce the convergence of the spin
correlations. Recall that $\omega_{1}^{\delta},\omega_{2}^{\delta}\in\Cd$
are the periods used to define the discrete torus $\Td.$ Given $\omega_{1},\omega_{2}\in\C$
with $\im(\omega_{2}/\omega_{1})>0,$ we define $\T,\hat{\T},$ and
$\hat{\T}_{[v_{1},\dots,v_{n}]}$ similarly to their discrete counterparts. Throughout this section, we assume $n$ to be even.
\begin{thm}
\label{thm: convergence of the observable}Let $n>0.$ Assume that
$\omd i\rightarrow\omega_{i}\in\C\setminus\{0\},$ $i=1,2,$ as \textup{$\delta\rightarrow0$,
and that }$\tau\coloneqq\omega_{2}/\omega_{1}$ satisfies $\im\tau>0.$
Then, one has 
\begin{equation}
\frac{\bar{\eta}_{w}\FFS{w,z}{\vv}{(p,q)}}{\delta\Corr{\Td}{\svv\mu^{\sps}}}=C_{\psi}^{2}\proj{\emph{\ensuremath{\drs z}}}{\ffs{w,z}{\vv}{[\eta_{w}],(p,q)}}+o(1)\label{eq: convergence}
\end{equation}
uniformly as long as $v_{1},\dots,v_{n},w,z$ stay at least a fixed
distance from each other. Here $C_{\psi}=\left(\frac{2}{\pi}\right)^{\frac{1}{2}},$
$\ffs{w,\cdot}{\vv}{[\eta],(p,q)}$ is a holomorphic function on $\hat{\T}_{[v_{1}.\dots,v_{n}]}\setminus\{w_{ij},w_{ij}^{\star}:i,j\in\{0,1\}\}$
satisfying the following properties: 
\begin{equation}
\ffs{w,z_{ij}}{\vv}{[\eta],(p,q)}=(-1)^{pi+qj}\ffs{w,z}{\vv}{[\eta],(p,q)}\quad\text{and}\quad\ffs{w,z^{\star}}{\vv}{[\eta],(p,q)}=-\ffs{w,z}{\vv}{[\eta],(p,q)}\label{eq: contin_period}
\end{equation}
\end{thm}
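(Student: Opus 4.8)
The plan is to establish convergence via the standard s-holomorphic machinery of \cite{CHI2}, adapted to the torus geometry, and then to identify the limit through its characterizing analytic properties. First I would normalize: set $\tilde{F}^{\delta}(z)=\bar\eta_w\FFS{w,z}{\vv}{(p,q)}/(\delta\CorrO{\svv\mu^{\sps}})$, which is well-defined precisely because Lemma~\ref{lem: denom_does_not_vanish} guarantees the denominator is nonzero for $n>0$ even. The function $\eta_w^{-1}\FFS{w,\cdot}{\vv}{(p,q)}$ is s-holomorphic on $\Cgr_{\{w\}}(\Tdouble_{[v_1,\dots,v_n]})$ by Lemma~\ref{lem: properties of F}, and inherits the spinor and anti-periodicity relations \eqref{eq: F_spinor}, \eqref{eq: F_antiperiodic}. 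The core analytic input is an a priori bound: I would use the integral $G=\int\im(F^2\,dz)$ from Definition~\ref{defn: intfsq}, whose discrete Laplacian is nonnegative away from $w$, to control $\tilde{F}^{\delta}$ uniformly on compact subsets staying a fixed distance from $v_1,\dots,v_n,w$. This yields equicontinuity and precompactness, exactly as in \cite[Section~4]{CHI2}; the torus requires no boundary estimates, which simplifies matters relative to the planar case.

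Next I would pass to subsequential limits. Any subsequential limit $f$ is holomorphic on $\hat{\T}_{[v_1,\dots,v_n]}$ away from the singular set, satisfies the continuous spinor relation $f(z^\star)=-f(z)$ and the anti-periodicity \eqref{eq: contin_period}, and has prescribed behavior near each branch point $v_j$ (a square-root type singularity forced by the ramification of the double cover) and near $w$. The local special values \eqref{eq: F_value_at_w} and \eqref{eq: move_spin_one_vertex} dictate the singularity structure at $w$: the discrete residue normalization pins the leading coefficient at $w$, giving a simple pole with residue fixed by the constant $C_\psi^2=2/\pi$. The heart of the argument is to show these analytic conditions \emph{uniquely} determine $f=C_\psi^2 \ffs{w,\cdot}{\vv}{[\eta],(p,q)}$. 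This is where the continuous analog of Lemma~\ref{lem: no_kernel} enters, promised in the Remark: any holomorphic spinor on $\hat{\T}_{[v_1,\dots,v_n]}$ with the given anti-periodicity and \emph{without} the prescribed poles must vanish, because such an object corresponds to a holomorphic differential on the torus whose square has zeros only of even order, forcing it to have a simple zero somewhere, which is impossible on genus one. The short continuous proof of this uniqueness is exactly the argument the Remark refers to.

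The main obstacle is the identification and uniqueness step, \emph{not} the compactness. Two difficulties must be handled carefully. First, one must verify that the limiting singularity at $w$ matches the normalization: the discrete residue $\FFS{w,w^+}{}{}-\FFS{w,w^-}{}{}$ is divided out, so I would track the precise constant by comparing the discrete special values \eqref{eq: F_value_at_w} against the continuous residue of $\ffs{w,\cdot}{\vv}{[\eta],(p,q)}$, and the factor $C_\psi^2=\tfrac{2}{\pi}$ must emerge from the known scaling of the discrete kernel (as in \cite[Theorem~2.16]{CHI2}). Second, the uniqueness via the genus-one argument is delicate because it is precisely the statement that fails in higher genus, so the proof must genuinely use that any nonzero holomorphic differential on a torus is nowhere vanishing; the anti-periodicity \eqref{eq: contin_period} together with the spinor property must combine to rule out a nontrivial kernel. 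Once uniqueness holds, every subsequential limit coincides with $C_\psi^2\proj{\drs z}{\ffs{w,z}{\vv}{[\eta],(p,q)}}$, the full convergence \eqref{eq: convergence} follows, and the properties \eqref{eq: contin_period} are automatic from passing \eqref{eq: F_spinor}, \eqref{eq: F_antiperiodic} to the limit.
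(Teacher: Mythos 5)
Your overall strategy --- normalize using Lemma \ref{lem: denom_does_not_vanish}, extract subsequential limits by s-holomorphic precompactness, and pin down the limit by a uniqueness statement special to genus one --- is exactly the paper's. But there are two concrete gaps in the execution. First, you assert an a priori bound: that subharmonicity of $G=\int\im(F^2\,dz)$ controls $\tilde F_\delta$ uniformly on compacts away from the marked points, so that ``the main obstacle is not the compactness.'' The paper proves no such direct bound, and it is not clear one follows from subharmonicity alone: the normalization is by $\delta\,\Corr{\Td}{\svv\mu^{\sps}}$, and nothing a priori relates the size of the observable at distance $\eps$ from the marked points to that quantity. The paper instead sets $M_\eps^\delta=\sup\{|\tilde F_\delta(z)|:\dist(z,\{v_1,\dots,v_n,w\})\ge\eps\}$ and, in the case $M_\eps^\delta\to\infty$, renormalizes by $M_\eps^\delta$, shows by local arguments near each singularity that the renormalized function stays bounded at every smaller scale $\eps'$, and derives a contradiction because its subsequential limit satisfies the characterizing conditions with $\eta=0$ (hence vanishes) while its sup at scale $\eps$ equals one. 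This contradiction argument is the part your proposal omits, and it is where most of the work beyond uniqueness lies.

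Second, your uniqueness argument is incomplete as stated. The difference $g$ of two candidate limits is free of the pole at $w$, but it still carries the square-root singularities $e^{\i\pi/4}c_j(z-v_j)^{-1/2}$ at the branch points (with possibly different real coefficients for the two candidates), so $g^2$ is not a holomorphic object on $\T$ to which ``a holomorphic differential on a torus has no zeros'' applies directly; it is a meromorphic function with simple poles at $v_1,\dots,v_n$. The missing step is the residue computation: $\res_{v_j}g^2=\i c_j^2$ with $c_j\in\R$ by \eqref{eq: contin_sing}, and the vanishing of the total residue on the compact surface gives $\sum_j c_j^2=0$, hence every $c_j=0$. Only then is $g^2$ holomorphic, hence constant, and the constant is killed by \eqref{eq: contin_period}. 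The reality of the $c_j$ --- inherited from s-holomorphicity in the discrete --- is essential here and is not invoked in your write-up.
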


\begin{equation}
\ffs{w,z}{\vv}{[\eta],(p,q)}=\frac{\bar{\eta}}{z-w}+o(1),\quad z\rightarrow w\label{eq: contin_pole}
\end{equation}
\begin{equation}
\ffs{w,z}{\vv}{[\eta],(p,q)}=\frac{e^{\i\pi/4}c_{i}}{\sqrt{z-v_{i}}}+o(1),\quad z\rightarrow v_{i}\label{eq: contin_sing}
\end{equation}
for some $c_{i}=c_{i}(\omega_{1},\omega_{2},v_{1},\dots,v_{n},w,p,q)\in\R.$
Moreover, the function $\ffs{w,\cdot}{\vv}{[\eta],(p,q)}$ is uniquely
determined by the above conditions. 
\begin{proof}
We start with the ``moreover'' claim, which is just a continuous
analog of Lemma~\ref{lem: no_kernel}. Assuming that such a function
is not unique, we can subtract them to get a function $g$ holomorphic
on $\hat{\T}_{[v_{1}.\dots,v_{n}]}$ and satisfying \ref{eq: contin_period}
and \ref{eq: contin_sing}. Then, $g^{2}$ is a meromorphic function
on $\T$ with simple poles at $v_{1},\dots,v_{n}$, and thus 
\[
0=-\i\sum_{j=1}^{n}\res_{v_{j}}g^{2}=\sum_{j=1}^{n}c_{j}^{2},
\]
 from which it follows that $g^{2}$ is in fact holomorphic on $\T$,
and thus $g$ is constant. However, the only constant compatible with
(\ref{eq: contin_period}) is zero.

With that at hand, the rest of the proof is similar to \cite[Theorem 5.30]{CHI2},
with simplifications stemming from absence of the boundary, so we
only sketch it here. Let $\tilde{F}_{\delta}(z)$ denote the left-hand
side of \ref{eq: convergence}, and let 
\[
M_{\eps}^{\delta}=\sup\{|\tilde{F}_{\delta}(z)|:\dist(z,\{v_{1},\dots,v_{n},w\})\geq\eps\}.
\]
If $M_{\eps}^{\delta}$ is bounded for every fixed $\eps$, then the
usual theory of pre-compactness of s-holomorphic functions imply that
along a subsequence,
\[
\tilde{F}_{\delta_{k}}(z)=\proj{\eta_{z}}{f(z)}+o(1)
\]
uniformly in $z$ at distance at least $\eps$ from the other marked
points, where $f$ is holomorphic in $\hat{\T}_{[v_{1},\dots,v_{n}]}\setminus \{w_{ij},w_{ij}^{\star}:i,j\in\{0,1\}\}.$
The property (\ref{eq: contin_period}) for $f$ follows from the
corresponding properties (\ref{eq: F_spinor}--\ref{eq: F_antiperiodic})
of $\tilde{F}.$ The properties (\ref{eq: contin_pole}) and (\ref{eq: contin_sing})
are purely local consequences of the properties of $\tilde{F}$ in
the discrete, and thus the same proof as in the planar case applies.
The uniqueness of $f$ established above proves the desired convergence. 

Assuming that $M_{\eps}^{\delta}\to\infty$ for some $\eps>0$, we
consider $\check{F}_{\delta}(\cdot)=\left(M_{\eps}^{\delta}\right)^{-1}\tilde{F}_{\delta}(\cdot),$
and notice that now 
\[
\sup\{|\tilde{F}_{\delta}(z)|:\dist(z,\{v_{1},\dots,v_{n},w\})\geq\eps'\}
\]
stays bounded as $\delta\to0$ for \emph{every} $\eps'>0.$ Indeed,
we may assume that $\eps$ is so small that the discs $B_{\eps}(v_{1}),\dots B_{\eps}(v_{n}),B_{\eps}(w)$
do not overlap, and then the statement can be proven separately for
each annulus $B_{\eps}(v_{i})\setminus B_{\eps'}(v_{i})$ and $B_{\eps}(w)\setminus B_{\eps'}(w)$
(since it is trivially true for $\eps=\eps'$). These are again local
statements and hence the planar argument applies, see \cite{CHI2}.
We conclude as above that $\check{F}_{\delta}(\cdot)$ converges to
a holomorphic function $\check{f}$ satisfying (\ref{eq: contin_period}--\ref{eq: contin_sing})
with $\eta=0,$ which by the above argument is identically zero. But
then 
\[
\sup\{|\check{F}_{\delta}(\cdot)|:\dist(z,\{v_{1},\dots,v_{n},w\})\geq\eps\}<\frac{1}{2}
\]
 for $\delta$ small enough, a contradiction. 
\end{proof}
We remark that the existence of the function $\ffs{w,z}{\vv}{[\eta],(p,q)}$
with the above properties follows from the convergence result above;
we will give an independent proof in the next section as we derive
an explicit formula. Note that $\eta\mapsto\ffs{w,z}{\vv}{[\eta],(p,q)}$
is real linear in $\eta.$ We put 
\[
\ffs{w,z}{\vv}{(p,q)}=\ffs{w,z}{\vv}{[1],(p,q)}+\i\ffs{w,z}{\vv}{[\i],(p,q)};
\]
\[
\ffs{w,z}{\vv}{\star,(p,q)}=\ffs{w,z}{\vv}{[1],(p,q)}-\i\ffs{w,z}{\vv}{[\i],(p,q)}.
\]
The following result is similar to \cite[Lemma 3.17]{CHI2}:
\begin{lem}
The two functions $\ffs{w,z}{\vv}{(p,q)},\ffs{w,z}{\vv}{\star,(p,q)}$
are holomorphic in their second argument, satisfy (\ref{eq: contin_period}),
and have the anti-symmetry property 
\[
\ffs{w,z}{\vv}{(p,q)}=-\ffs{z,w}{\vv}{(p,q)};\quad\ffs{w,z}{\vv}{\star,(p,q)}=-\overline{\ffs{z,w}{\vv}{\star,(p,q)}}.
\]
We also have the asymptotics $\ffs{w,z}{\vv}{(p,q)}\sim2(z-w)^{-1}+o(1),$
$\ffs{w,z}{\vv}{\star,(p,q)}=O(1)$ as $z\to w,$ and the relation
\[
\ffs{w,z}{\vv}{[\eta],(p,q)}=\frac{1}{2}\bar{\eta}\ffs{w,z}{\vv}{(p,q)}+\frac{1}{2}\eta\ffs{w,z}{\vv}{\star,(p,q)}.
\]
\end{lem}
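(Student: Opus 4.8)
The decomposition relation together with holomorphicity, \eqref{eq: contin_period}, and the behavior at $z\to w$ are immediate consequences of the real-linearity of $\eta\mapsto\ffs{w,z}{\vv}{[\eta],(p,q)}$. Writing $\eta=a+\i b$ with $a,b\in\R$, real-linearity gives $\ffs{w,z}{\vv}{[\eta],(p,q)}=a\,\ffs{w,z}{\vv}{[1],(p,q)}+b\,\ffs{w,z}{\vv}{[\i],(p,q)}$, and substituting the definitions of $\ffs{w,z}{\vv}{(p,q)}$ and $\ffs{w,z}{\vv}{\star,(p,q)}$ yields
\[
\ffs{w,z}{\vv}{[\eta],(p,q)}=\tfrac{1}{2}\bar{\eta}\,\ffs{w,z}{\vv}{(p,q)}+\tfrac{1}{2}\eta\,\ffs{w,z}{\vv}{\star,(p,q)}.
\]
Since both $\ffs{w,z}{\vv}{(p,q)}$ and $\ffs{w,z}{\vv}{\star,(p,q)}$ are $\C$-linear combinations of the holomorphic functions $\ffs{w,z}{\vv}{[1],(p,q)}$ and $\ffs{w,z}{\vv}{[\i],(p,q)}$, they are holomorphic in $z$ and inherit \eqref{eq: contin_period}. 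Feeding the pole expansion \eqref{eq: contin_pole} into the two definitions and using $\bar{1}=1$, $\bar{\i}=-\i$, we get $\ffs{w,z}{\vv}{(p,q)}\sim\frac{1}{z-w}+\i\cdot\frac{-\i}{z-w}=\frac{2}{z-w}$, while the poles cancel in $\ffs{w,z}{\vv}{\star,(p,q)}$, leaving an $O(1)$ term.

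The substance of the lemma is the antisymmetry, which I would obtain by transporting the \emph{discrete} antisymmetry of the fermionic observable to the limit. On the lattice one has $\FFS{w,z}{\vv}{\sps}=-\FFS{z,w}{\vv}{\sps}$ for distinct corners $w,z$; this is the anticommutation of the two Kadanoff--Ceva fermions, the sign being the spinor monodromy of the double cover ramified at both $w$ and $z$ (cf. \cite{CHI2} and the spinor property \eqref{eq: F_spinor}). Applying Theorem~\ref{thm: convergence of the observable} to $\FFS{w,z}{\vv}{\sps}$ and to $\FFS{z,w}{\vv}{\sps}$ --- the normalization $\delta\Corr{\Td}{\svv\mu^{\sps}}$ being the same in both cases --- and dividing, the discrete antisymmetry passes to the limit as
\[
\proj{\drs z}{\ffs{w,z}{\vv}{[\eta_{w}],(p,q)}}=-\frac{\bar{\eta}_{w}}{\bar{\eta}_{z}}\,\proj{\drs w}{\ffs{z,w}{\vv}{[\eta_{z}],(p,q)}},
\]
valid for all corner phases $\eta_{w},\eta_{z}$ (which are unimodular, so $\bar{\eta}=\eta^{-1}$).

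It then remains to disentangle the phases. Substituting the decomposition relation together with $\proj{\eta}{X}=\tfrac{1}{2}(X+\eta^{2}\bar{X})$ into both sides and expanding, each side becomes a combination of the four monomials $\bar{\eta}_{w}$, $\eta_{w}$, $\eta_{z}^{2}\eta_{w}$, $\eta_{z}^{2}\bar{\eta}_{w}$, whose coefficients are built from $\ffs{w,z}{\vv}{(p,q)}$, $\ffs{w,z}{\vv}{\star,(p,q)}$, $\ffs{z,w}{\vv}{(p,q)}$, $\ffs{z,w}{\vv}{\star,(p,q)}$ and their conjugates. As the functions $\ffs{}{}{(p,q)},\ffs{}{}{\star,(p,q)}$ depend only on the points $w,z$ and not on the corner types, while $\eta_{w},\eta_{z}$ range over unimodular values spanning $\C$ over $\R$, I can match coefficients: the $\bar{\eta}_{w}$-term gives $\ffs{w,z}{\vv}{(p,q)}=-\ffs{z,w}{\vv}{(p,q)}$, the $\eta_{w}$-term gives $\ffs{w,z}{\vv}{\star,(p,q)}=-\overline{\ffs{z,w}{\vv}{\star,(p,q)}}$, and the remaining two monomials reproduce the complex conjugates of these identities. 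The delicate point is exactly this matching: the normalization in \eqref{eq: convergence} treats $w$ and $z$ asymmetrically through the factor $\bar{\eta}_{w}$, so one must track the projection phases carefully, and one must be sure the discrete sign convention is the genuine fermionic one rather than the naive ``classical'' $+1$, which would produce the wrong symmetry. As an independent check of the second identity one can argue in the continuum: the product $\ffs{w_{1},z}{\vv}{(p,q)}\ffs{w_{2},z}{\vv}{\star,(p,q)}\,dz$ descends to a meromorphic $1$-form on $\T$ (the monodromies square to $1$), its only poles are a simple pole at $w_{1}$ with residue $2\ffs{w_{2},w_{1}}{\vv}{\star,(p,q)}$ and simple poles at the $v_{i}$; summing residues and adding the conjugate of the companion identity obtained by exchanging $w_{1},w_{2}$, the spin-singularity contributions at the $v_{i}$ cancel and one recovers $\ffs{w,z}{\vv}{\star,(p,q)}=-\overline{\ffs{z,w}{\vv}{\star,(p,q)}}$. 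The analogous continuum computation for $\ffs{}{}{(p,q)}$ is only symmetric in $w_{1},w_{2}$ and does not by itself resolve the sign, which is why the discrete antisymmetry is the essential input here.
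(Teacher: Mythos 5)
Your first paragraph (the decomposition relation, holomorphicity, \eqref{eq: contin_period}, and the $z\to w$ asymptotics, all from real-linearity of $\eta\mapsto\ffs{w,z}{\vv}{[\eta],(p,q)}$) is correct. For the antisymmetry, which is the substance of the lemma, you take a genuinely different route from the paper, and your route has a gap exactly where you flag the ``delicate point''. The paper's proof is entirely in the continuum: the product $\ffs{w_{1},\cdot}{\vv}{[\eta_{1}],(p,q)}\ffs{w_{2},\cdot}{\vv}{[\eta_{2}],(p,q)}$ descends to a meromorphic function on the compact torus $\T$ by \eqref{eq: contin_period}, so its residues sum to zero; by \eqref{eq: contin_sing} the residues at the $v_{i}$ are purely imaginary, so taking the real part leaves
\[
0=\re\left[\bar{\eta}_{1}\ffs{w_{2},w_{1}}{\vv}{[\eta_{2}],(p,q)}+\bar{\eta}_{2}\ffs{w_{1},w_{2}}{\vv}{[\eta_{1}],(p,q)}\right],
\]
valid for \emph{all} $\eta_{1},\eta_{2}\in\C$, not just corner phases. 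Expanding via the decomposition relation, the coefficient of $\bar{\eta}_{1}\bar{\eta}_{2}$ gives $\ffs{w_{1},w_{2}}{\vv}{(p,q)}+\ffs{w_{2},w_{1}}{\vv}{(p,q)}=0$ and the coefficients of $\bar{\eta}_{1}\eta_{2}$, $\eta_{1}\bar{\eta}_{2}$ give the conjugate-antisymmetry of $f^{\star}$. Your ``independent check'' for $f^{\star}$ is a special case of this computation, but your closing claim --- that the continuum argument ``does not by itself resolve the sign'' of $f$ and that the discrete antisymmetry is ``the essential input'' --- is wrong: the real-part trick above is precisely what kills the $v_{i}$ contributions and resolves both signs with no discrete input at all.

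The gap in your route is the discrete antisymmetry $\FFS{w,z}{\vv}{\sps}=-\FFS{z,w}{\vv}{\sps}$ itself. As written in Definition \ref{def: observable}, $\psi_{z}\psi_{w}=\eta_{z}\eta_{w}\sigma_{z^{\circ}}\sigma_{w^{\circ}}\mu_{\gamma}$ is literally symmetric under exchanging the labels; the minus sign appears only after one coherently relates the sheet and sign choices of the two functions $\FFS{w,\cdot}{\vv}{\sps}$ and $\FFS{z,\cdot}{\vv}{\sps}$, which live on the different split graphs $\Cgr_{\{w\}}$ and $\Cgr_{\{z\}}$ and are each pinned down only by their own initial condition \eqref{eq: F_value_at_w}. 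This can be done (it is the content of the recursive sign conventions of \cite{CHI2}), but you assert it rather than establish it, so as written the key input is missing. Granting it, the remainder of your argument is sound: passing the identity through Theorem \ref{thm: convergence of the observable} and matching the coefficients of $\bar{\eta}_{w}$, $\eta_{w}$, $\eta_{z}^{2}\eta_{w}$, $\eta_{z}^{2}\bar{\eta}_{w}$ is legitimate because the corner phases range over the eighth roots of unity, which contain an $\R$-spanning set, and because $f,f^{\star}$ depend only on the limiting points and not on the corner types.
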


\begin{proof}
Given $w_{1},w_{2}\in\hat{\T}_{[v_{1}.\dots,v_{n}]}$ and $\eta_{1},\eta_{2}\in\C\setminus\{0\},$
we can write 
\[
0=\re\sum\res\left(\ffs{w_{1},\cdot}{\vv}{[\eta_{1}],(p,q)}\ffs{w_{2},\cdot}{\vv}{[\eta_{2}],(p,q)}\right)=\re\left[\bar{\eta}_{1}\ffs{w_{2},\cdot}{\vv}{[\eta_{2}],(p,q)}+\bar{\eta}_{2}\ffs{w_{2},\cdot}{\vv}{[\eta_{1}],(p,q)}\right],
\]
since the product $\ffs{w_{1},\cdot}{\vv}{[\eta_{1}],(p,q)}\ffs{w_{2},\cdot}{\vv}{[\eta_{2}],(p,q)}$
is a meromorphic function on $\T$ by (\ref{eq: contin_period}),
and by (\ref{eq: contin_sing}), its residues at $v_{i}$ are purely
imaginary. With this identity, which is analogous to \cite[Lemma 3.15]{CHI2},
the rest of the proof is exactly as in \cite[Lemma 3.17]{CHI2}.
\end{proof}
\begin{cor}
\label{cor:log_derivative}If $v_{1},\hat{v}_{1}\in\Td$ are adjacent
and the edge $(v_{1},\hat{v}_{1})$ does not cross the disorder lines
involved in the definition of $\mu^{(p,q)}$, we have as $\delta\to0$
\[
\frac{\E_{\T_{\delta}}(\sigma_{\hat{v}_{1}}\sigma_{v_{2}}\dots\sigma_{v_{n}}\mu^{(p,q)})}{\E_{\T_{\delta}}(\sigma_{v_{1}}\sigma_{v_{2}}\dots\sigma_{v_{n}}\mu^{(p,q)})}=1+\re\left[\mathcal{A}_{[v_{1},\dots,v_{n}]}^{(p,q)}\cdot(\hat{v}_{1}-v_{2})\right]+o(\delta),
\]
uniformly as long as $v_{1},\dots,v_{n}$ stay at distance at least
$\eps>0$ from each other. Here the coefficient $\mathcal{A}_{[v_{1},\dots,v_{n}]}^{(p,q)}$
is defined by 
\begin{equation}
2\mathcal{A}_{[v_{1},\dots,v_{n}]}^{(p,q)}=\lim_{z\to v_{1}}(z-v_{1})^{-1}\left[(z-v_{1})^{\frac{1}{2}}\lim_{w\to v_{1}}(w-v_{1})^{\frac{1}{2}}\ffs{w,z}{\vv}{(p,q)}-1\right].\label{eq: def_A}
\end{equation}
\end{cor}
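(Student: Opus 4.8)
The plan is to deduce the logarithmic derivative from the scaling limit of the fermionic observable, by first rewriting the ratio of spin correlations as a ratio of two values of the discrete observable $\FFS{w,\cdot}{\vv}{\sps}$ near $v_{1}$, and then reading off the limit from the local behaviour of its scaling limit $\ffs{w,\cdot}{\vv}{\sps}$ at $v_{1}$, which by construction is encoded in $\coefA_{[v_{1},\dots,v_{n}]}^{\sps}$.

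First I would fix a corner $w$ with $w^{\circ}=v_{1}$ and take $z$ to be the corner with $z^{\bullet}=w^{\bullet}$ and $z^{\circ}=\hat{v}_{1}$; by the hypothesis on the edge $(v_{1}\hat{v}_{1})$ the segment $[z^{\circ}v_{1}]$ crosses no disorder line in the expansion of $\mu^{\sps}$, so \eqref{eq: move_spin_one_vertex} applies and gives $\FFS{w,z}{\vv}{\sps}=\eta_{z}\eta_{w}\Corr{\Td}{\sigma_{\hat{v}_{1}}\sigma_{v_{2}}\dots\sigma_{v_{n}}\mu^{\sps}}$. Combining this with the special value \eqref{eq: F_value_at_w}, $\FFS{w,w^{+}}{\vv}{\sps}=\eta_{w}^{2}\Corr{\Td}{\svv\mu^{\sps}}$, yields the exact discrete identity
\[
\frac{\Corr{\Td}{\sigma_{\hat{v}_{1}}\sigma_{v_{2}}\dots\sigma_{v_{n}}\mu^{\sps}}}{\Corr{\Td}{\svv\mu^{\sps}}}=\frac{\eta_{w}}{\eta_{z}}\cdot\frac{\FFS{w,z}{\vv}{\sps}}{\FFS{w,w^{+}}{\vv}{\sps}}.
\]
The common factor $\bar{\eta}_{w}/(\delta\,\Corr{\Td}{\svv\mu^{\sps}})$ from \eqref{eq: convergence} cancels between the two values of the observable, so that $\FFS{w,z}{\vv}{\sps}/\FFS{w,w^{+}}{\vv}{\sps}=\tilde{F}_{\delta}(z)/\tilde{F}_{\delta}(w^{+})$, where $\tilde{F}_{\delta}$ is the left-hand side of \eqref{eq: convergence}; the residual phase $\eta_{w}/\eta_{z}=e^{\pm\i\pi/4}$, coming from $\eta_{z}=\eta_{w}e^{-\i\theta/2}$ with $\theta=\pm\tfrac{\pi}{2}$, does not cancel and will be what produces the real part at the end.

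Next I would pass to the limit $\delta\to0$. The main obstacle is that $w$ and $z$ lie at lattice distance $O(\delta)$ from $v_{1}$ and from each other, so Theorem \ref{thm: convergence of the observable}, which converges only for points at fixed mutual distance, does not apply directly: one must instead run the local analysis at the spin $v_{1}$. Since this is purely local and a neighbourhood of $v_{1}$ in $\Td$ is isometric to a piece of the plane, the analysis is identical to the planar case \cite{CHI2}. Concretely, near $v_{1}$ and near the diagonal the rescaled observable converges to the universal full-plane fermion on the double cover branched at $v_{1}$, namely $\frac{1}{z-w}\big(\sqrt{\tfrac{z-v_{1}}{w-v_{1}}}+\sqrt{\tfrac{w-v_{1}}{z-v_{1}}}\big)$, together with a subleading part fixed by the global configuration. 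This universal kernel reproduces the pole $\ffs{w,z}{\vv}{\sps}\sim2(z-w)^{-1}$ of \eqref{eq: contin_pole} and, by a direct computation, satisfies $(z-v_{1})^{1/2}\lim_{w\to v_{1}}(w-v_{1})^{1/2}(\,\cdot\,)\to1$, so it accounts exactly for the constant $1$ subtracted in \eqref{eq: def_A}; the next-order term $2\coefA_{[v_{1},\dots,v_{n}]}^{\sps}(z-v_{1})$ there then comes entirely from the subleading part. Consequently the ratio $\tilde{F}_{\delta}(z)/\tilde{F}_{\delta}(w^{+})$ tends to $1$, with a first correction read off from this subleading diagonal coefficient.

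Finally I would match this first correction to $\re\big[\coefA_{[v_{1},\dots,v_{n}]}^{\sps}\cdot(\hat{v}_{1}-v_{1})\big]$: the phase $\eta_{w}/\eta_{z}=e^{\pm\i\pi/4}$, the prescription for $\eta_{z}$ from Lemma \ref{lem: properties of F}, and the reality of the left-hand side (a ratio of real correlations) together convert the complex subleading coefficient of the local expansion into its real part paired with the lattice step $\hat{v}_{1}-v_{1}$, consistently with the normalisation in \eqref{eq: def_A}. Uniformity of the $o(\delta)$ error, as the $v_{i}$ stay at distance at least $\eps$ apart, follows from the uniform local estimates already underlying Theorem \ref{thm: convergence of the observable}. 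I expect the crux to be exactly this lattice-scale step at $v_{1}$ — controlling the discrete observable below the scale on which convergence is asserted, and pinning down the subleading coefficient to precision $o(\delta)$; everything else is a local transcription of the planar arguments of \cite{CHI2}.
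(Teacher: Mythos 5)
Your proposal is correct and follows essentially the same route as the paper, whose proof simply observes that the planar argument of \cite[Theorem 3.39]{CHI2} is purely local and carries over once its convergence input is replaced by Theorem \ref{thm: convergence of the observable}; your discrete identity via \eqref{eq: F_value_at_w} and \eqref{eq: move_spin_one_vertex} and the subsequent local analysis at $v_{1}$ are exactly the content of that cited argument (and you correctly read the lattice step as $\hat{v}_{1}-v_{1}$, which is what the statement intends).
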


\begin{proof}
The proof of the analogous result in the planar case \cite[Theorem 3.39]{CHI2},
is purely local: it derives this result from the discrete holomorphicity
the observable $\FFS{w,z}{\vv}{(p,q)}$ with respect to both arguments,
and the convergence result of \cite[Theorem 3.16]{CHI2} applied in
the vicinity of the point $v_{1}.$ With that convergence result replaced
by Theorem \ref{thm: convergence of the observable}, the rest of
the proof applies without change.
\end{proof}
\begin{cor}
\label{cor: conv_ratio}Given $\eps>0,$ we have, for all $v_{1},\dots,v_{n}$
at distance at least $\eps$ from each other and all $\hat{v}_{1},\dots,\hat{v}_{n}$
at distance at least $\eps$ from each other,
\begin{equation}
\frac{\E_{\T_{\delta}}(\sigma_{v_{1}}\dots\sigma_{v_{n}}\mu^{(p,q)})}{\E_{\T_{\delta}}(\sigma_{\hat{v}_{1}}\dots\sigma_{\hat{v}_{n}}\mu^{(p,q)})}=\exp\left(R_{\hat{v}_{1},\dots,\hat{v}_{n}}^{(p,q)}(v_{1},\dots,v_{n})+o(1)\right)\label{eq: conv_ratio_vs}
\end{equation}
uniformly as $\delta\to 0$. Here $R_{\hat{v}_{1},\dots,\hat{v}_{n}}^{(p,q)}$ is defined
on the simply-connected sheet obtained by cutting $\T$ along $\gamma_{01}$
and $\gamma_{10}$ by the conditions 
\begin{equation}
dR_{\hat{v}_{1},\dots,\hat{v}_{n}}^{(p,q)}=\re\sum_{i=1}^{n}\left(\mathcal{A}_{[v_{i},v_{1},\dots,v_{i-1},v_{i+1},\dots,v_{n}]}^{(p,q)}\cdot dv_{i}\right)\label{eq: dR}
\end{equation}
and $R_{\hat{v}_{1},\dots,\hat{v}_{n}}^{(p,q)}(\hat{v}_{1},\dots,\hat{v}_{n})=0.$
\end{cor}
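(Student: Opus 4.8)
The plan is to integrate the infinitesimal relation of Corollary~\ref{cor:log_derivative} along a discrete path in configuration space. By symmetry of the correlation in its spin insertions, Corollary~\ref{cor:log_derivative} applies verbatim to moving \emph{any} single spin $v_i$ to an adjacent vertex $\hat v_i$ along an edge that avoids the disorder lines of $\mu^{(p,q)}$: the correlation is then multiplied by $1+\re[\mathcal{A}_{[v_i,v_1,\dots,v_{i-1},v_{i+1},\dots,v_n]}^{(p,q)}\cdot(\hat v_i-v_i)]+o(\delta)$, uniformly for configurations at distance $\ge\eps$. This is precisely the increment predicted by the form on the right of \eqref{eq: dR}.

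Write $\Phi_\delta(v_1,\dots,v_n):=\E_{\T_{\delta}}(\sigma_{v_1}\cdots\sigma_{v_n}\mu^{(p,q)})$, which is nonzero by Lemma~\ref{lem: denom_does_not_vanish}. To connect $(\hat v_1,\dots,\hat v_n)$ to $(v_1,\dots,v_n)$, I would move the spins one at a time, each along a lattice path of length $O(1)$ that stays at distance at least $\eps'<\eps$ from all the other $n-1$ marked points. Taking logarithms turns the ratio $\Phi_\delta(v)/\Phi_\delta(\hat v)$ into a telescoping sum of single-step log-ratios; applying the symmetrized Corollary~\ref{cor:log_derivative} together with $\log(1+x)=x+O(x^2)$ to each step produces a Riemann sum for $\int\re\sum_i\mathcal{A}_{[v_i,\dots]}^{(p,q)}\cdot dv_i$ along the chosen path. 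Continuity of $(v_1,\dots,v_n)\mapsto\mathcal{A}_{[v_1,\dots,v_n]}^{(p,q)}$ — which follows from Theorem~\ref{thm: convergence of the observable} and the defining formula \eqref{eq: def_A} — ensures convergence of the Riemann sum to this line integral. The errors are controlled uniformly: each of the $O(\delta^{-1})$ steps on a path contributes $o(\delta)$, summing to $o(1)$, while the $O(x^2)=O(\delta^2)$ corrections from the logarithm sum to $O(\delta)$.

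The existence and well-definedness of $R$ then come for free from this argument. For each fixed $\delta$, $\log\Phi_\delta$ is a genuine function of the positions, so the increment $\log\Phi_\delta(v)-\log\Phi_\delta(\hat v)$ depends only on the endpoints; since it equals the above line integral up to $o(1)$, the limiting form $\re\sum_i\mathcal{A}_{[v_i,\dots]}^{(p,q)}\cdot dv_i$ must be closed, and $R_{\hat v_1,\dots,\hat v_n}^{(p,q)}$ defined by \eqref{eq: dR} with the normalization $R(\hat v_1,\dots,\hat v_n)=0$ is path-independent on the cut surface. The reason one must cut $\T$ along $\gamma_{01}$ and $\gamma_{10}$ is exactly Lemma~\ref{lem: shift_denom}: dragging a spin around a non-contractible loop changes the spin structure $(p,q)$, so $\Phi_\delta$ is single-valued only on the simply-connected sheet, where $R$ is well-defined.

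The main obstacle, to my mind, is not the analysis — the Riemann-sum convergence and the uniform error bounds are routine given Corollary~\ref{cor:log_derivative} and the continuity of $\mathcal{A}$, as in the planar case — but the topological and sign bookkeeping: one must choose the connecting lattice paths so that crossings of $\gamma_{01},\gamma_{10}$ and of the disorder lines of $\mu^{(p,q)}$ are tracked consistently (via Lemma~\ref{lem: shift_denom} and the sign conventions from Lemma~\ref{lem: properties of F}), and confirm that the resulting $R$ does not depend on these choices on the cut surface.
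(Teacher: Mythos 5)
Your proposal is correct and follows essentially the same route as the paper, whose proof is a one-line reference to integrating Corollary~\ref{cor:log_derivative} as in \cite[Corollary 5.2]{CHI2}; your telescoping/Riemann-sum argument, the closedness of the form from path-independence of the discrete increments, and the role of Lemma~\ref{lem: shift_denom} in explaining the cut along $\gamma_{01},\gamma_{10}$ are exactly the content of that integration step.
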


\begin{proof}
Follows by integrating the result of Corollary \ref{cor:log_derivative}
as in \cite[Corollary 5.2]{CHI2}.
\end{proof}
We now note that in view of Lemma \ref{lem: shift_denom}, we can
naturally extend $R_{\hat{v}_{1},\dots,\hat{v}_{n}}^{(p,q)}$ to a
function defined on $\hat{\T}^{\times n}\setminus\left\{\exists i\neq j:\pi(v_{i})=\pi(v_{j})\right\}$
(recall that $\pi$ denotes the covering map from $\hat{\T}$
to $\T$), with the convergence result extended accordingly. In particular,
we have 
\[
\frac{\E_{\T_{\delta}}(\sigma_{v_{1}}\dots\sigma_{v_{n}}\mu^{(p+l,q+k)})}{\E_{\T_{\delta}}(\sigma_{v_{1}}\dots\sigma_{v_{n}}\mu^{(p,q)})}=\exp\left(R_{v_{1},\dots,v_{n}}^{(p,q)}((v_{1})_{kl},\dots,v_{n})+o(1)\right).
\]
Combining this identity with (\ref{eq: sum_mus}), we see that the
tasks of identifying the scaling limit of $\E_{\T_{\delta}}(\sigma_{v_{1}}\dots\sigma_{v_{n}})$
and of each of $\E_{\T_{\delta}}(\sigma_{v_{1}}\dots\sigma_{v_{n}}\mu^{(p,q)})$
are equivalent. We introduce the purported scaling limit by the following
properties:
\begin{defn}
We define $\ccor{\sigma_{v_{1}}\dots\sigma_{v_{n}}\mu^{(p,q)}}_{\T},$ $p,q\in\{0,1\},$
to be the four functions on $\hat{\T}^{\times n}\setminus\left\{\exists i\neq j:\pi(v_{i})=\pi(v_{j})\right\}$
satisfying $\ccor{\sigma_{(v_{1})_{kl}}\dots\sigma_{v_{n}}\mu^{(p,q)}}_{\T}=\ccor{\sigma_{v_{1}}\dots\sigma_{v_{n}}\mu^{(p+l,q+k)}}_{\T},$
\begin{equation}
\frac{\ccor{\sigma_{v_{1}}\dots\sigma_{v_{n}}\mu^{(p,q)}}}{\ccor{\sigma_{\hat{v}_{1}}\dots\sigma_{\hat{v}_{n}}\mu^{(p,q)}}}=\exp\left(R_{\hat{v}_{1},\dots,\hat{v}_{n}}^{(p,q)}(v_{1},\dots,v_{n})\right),\label{eq: cont_ratio}
\end{equation}
 and the normalization condition 
\begin{equation}
\lim_{v_{2}\to v_{1},\dots,v_{n}\to v_{n-1}}\frac{\ccor{\sigma_{v_{1}}\dots\sigma_{v_{n}}}}{|v_{2}-v_{1}|^{-\frac{1}{4}}\dots|v_{n}-v_{n-1}|^{-\frac{1}{4}}}=1,\label{eq: merge_asymptotics}
\end{equation}
where $\ccor{\sigma_{v_{1}}\dots\sigma_{v_{n}}}=\sum_{p,q}\ccor{\sigma_{v_{1}}\dots\sigma_{v_{n}}\mu^{(p,q)}}=\sum_{k,l}\ccor{\sigma_{(v_{1})_{kl}}\dots\sigma_{v_{n}}\mu^{(0,0)}}.$
\end{defn}

Put $C_{\sigma}=2^{\frac{1}{6}}e^{\frac{3}{2}\zeta'(-1)}.$ We are
in the position to state the convergence theorem: 
\begin{thm}
\label{thm: conv_spins}The functions $\ccor{\sigma_{v_{1}}\dots\sigma_{v_{n}}\mu^{(p,q)}}_{\T}$
as above exist, are unique, and we have, for every $\eps>0$ and every
even $n,$ 
\[
\delta^{-\frac{n}{8}}\E_{\T_{\delta}}(\sigma_{v_{1}}\dots\sigma_{v_{n}}\mu^{(p,q)})=C_{\sigma}^{n}\ccor{\sigma_{v_{1}}\dots\sigma_{v_{n}}\mu^{(p,q)}}_{\T}+o(1),\quad\delta\to0,
\]
 uniformly in $v_{1},\dots,v_{n}$ at distance at least $\eps$ from
each other. Moreover, 
\[
\delta^{-\frac{n}{8}}\E_{\T_{\delta}}(\sigma_{v_{1}}\dots\sigma_{v_{n}})=C_{\sigma}^{n}\ccor{\sigma_{v_{1}}\dots\sigma_{v_{n}}}_{\T}+o(1),\quad\delta\to0.
\]
\end{thm}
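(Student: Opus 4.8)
The plan is to reduce the four statements to the single convergence of the physical spin correlation $\E_{\T_{\delta}}(\svv)=\sum_{p,q}\E_{\T_{\delta}}(\svv\mu^{(p,q)})$ (using \eqref{eq: sum_mus}), to settle existence and uniqueness of the limiting functions along the way, and to pin down the multiplicative constant by a short-distance analysis. \emph{Uniqueness} is formal: if two families $\{F^{(p,q)}\}$, $\{G^{(p,q)}\}$ obey the defining conditions, then \eqref{eq: cont_ratio} forces each $F^{(p,q)}/G^{(p,q)}$ to be a constant $c_{p,q}$; the shift condition $\ccor{\sigma_{(v_{1})_{kl}}\dots\mu^{(p,q)}}_{\T}=\ccor{\svv\mu^{(p+l,q+k)}}_{\T}$ identifies $c_{p,q}=c_{p+l,q+k}$ for all $k,l$, so the four constants coincide, and \eqref{eq: merge_asymptotics} applied to the common sum forces the value $1$. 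For the \emph{reduction}, I extend Corollary \ref{cor: conv_ratio} to the cover and use Lemma \ref{lem: shift_denom} to write each ratio $\E_{\T_{\delta}}(\svv\mu^{(p,q)})/\E_{\T_{\delta}}(\svv\mu^{(0,0)})$ as $\exp(R^{(0,0)}_{v}((v_{1})_{kl},v_{2},\dots)+o(1))$ (every $(p,q)$ is reached from $(0,0)$ by one shift), so these ratios converge uniformly to the corresponding continuous ratios. Factoring $\E_{\T_{\delta}}(\svv)=\E_{\T_{\delta}}(\svv\mu^{(0,0)})\sum_{p,q}\E_{\T_{\delta}}(\svv\mu^{(p,q)})/\E_{\T_{\delta}}(\svv\mu^{(0,0)})$, convergence of the full sum together with these ratios yields convergence of each $\delta^{-n/8}\E_{\T_{\delta}}(\svv\mu^{(p,q)})$ to $C_{\sigma}^{n}\ccor{\svv\mu^{(p,q)}}_{\T}$.

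For \emph{existence}, Corollary \ref{cor: conv_ratio} provides, on the simply-connected sheet obtained by cutting $\T$ along $\gamma_{01},\gamma_{10}$, the function $\exp(R^{(p,q)})$, well defined because the discrete log-ratios are exact (the $\E_{\T_{\delta}}(\svv\mu^{(p,q)})$ are honest, nowhere-vanishing functions by Lemma \ref{lem: denom_does_not_vanish}), so that closedness of the form $dR^{(p,q)}$ in \eqref{eq: dR} passes to the limit. Extending across the cuts by the shift rule and invoking Lemma \ref{lem: shift_denom}, these assemble into single-valued functions $\ccor{\svv\mu^{(p,q)}}_{\T}$ on the cover $\hat{\T}^{\times n}$ (away from collisions) obeying \eqref{eq: cont_ratio} and the shift relation, determined up to one overall constant; the latter is fixed by the normalization \eqref{eq: merge_asymptotics}, which is well posed thanks to the merging asymptotics established next. (An independent construction through the explicit theta-function formula is given in Section \ref{sec: formulae_on_torus}.)

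It remains to prove $\delta^{-n/8}\E_{\T_{\delta}}(\svv)\to C_{\sigma}^{n}\ccor{\svv}_{\T}$ and, in particular, to identify the constant. By Corollary \ref{cor: conv_ratio} the quotient $\delta^{-n/8}\E_{\T_{\delta}}(\svv)/\ccor{\svv}_{\T}$ is, up to $o(1)$, independent of the configuration, so it suffices to evaluate it in a convenient regime. I group the points into $n/2$ well-separated pairs, the two points of the $i$-th pair at intra-pair distance $r_{i}\to0$. On the continuous side, \eqref{eq: merge_asymptotics} gives $\ccor{\svv}_{\T}\sim\prod_{i}r_{i}^{-1/4}$. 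On the lattice side, a short-distance factorization $\E_{\T_{\delta}}(\svv)\sim\prod_{i}\E_{\T_{\delta}}(\sigma_{v_{2i-1}}\sigma_{v_{2i}})$, combined with the universal (full-plane) two-point asymptotics $\delta^{-1/4}\E_{\T_{\delta}}(\sigma_{v_{2i-1}}\sigma_{v_{2i}})\to C_{\sigma}^{2}r_{i}^{-1/4}$ valid for $\delta\ll r_{i}\ll1$, gives $\delta^{-n/8}\E_{\T_{\delta}}(\svv)\sim C_{\sigma}^{n}\prod_{i}r_{i}^{-1/4}$. Comparing the two identifies the constant as $C_{\sigma}^{n}$. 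A diagonal argument makes the interchange of limits rigorous: for each fixed $r$, Corollary \ref{cor: conv_ratio} relates the paired configuration to a fixed reference one (all distances staying bounded below), so the limit along the paired family — which exists by the factorization and the $n=2$ base case — transfers to the reference, and letting $r\to0$ pins it to $C_{\sigma}^{n}$.

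The \emph{main obstacle} is exactly this last factorization. One must show that, uniformly for $\delta\ll r\ll1$, the rescaled lattice $n$-point function splits into the product of the $n/2$ two-point functions, the corrections — governed by the energy channel of the operator product expansion, whose scaling dimension is strictly larger — being of lower order, and that the global torus geometry does not alter the short-distance constant. Being a purely local statement, it follows by adapting the two-point and short-distance analysis of \cite{CHI1}; this, together with the value $C_{\sigma}=2^{1/6}e^{\frac{3}{2}\zeta'(-1)}$ it supplies, is the only input not already packaged in Corollaries \ref{cor: conv_ratio} and \ref{cor:log_derivative}.
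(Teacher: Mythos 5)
Your architecture matches the paper's: reduce the $\mu^{(p,q)}$ statements to the full correlation $\E_{\T_\delta}(\svv)=\sum_{p,q}\E_{\T_\delta}(\svv\mu^{(p,q)})$ via the ratio convergence of Corollary \ref{cor: conv_ratio} and Lemma \ref{lem: shift_denom}, settle uniqueness formally, and fix the multiplicative constant by merging the points in pairs and comparing with the two-point asymptotics. The reduction step, the uniqueness argument, and the diagonal/subsequence bookkeeping are all essentially as in the paper (the paper additionally pauses to check positivity of the $\E_{\T_\delta}(\svv\mu^{(p,q)})$ for small $\delta$ before exponentiating, which you should note, but that is minor).

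The genuine gap is the step you yourself flag as the ``main obstacle'': the short-distance factorization $\E_{\T_\delta}(\svv)\sim\prod_i\E_{\T_\delta}(\sigma_{v_{2i-1}}\sigma_{v_{2i}})$ uniformly in $\delta\ll r\ll 1$. You propose to obtain it ``by adapting the two-point and short-distance analysis of \cite{CHI1}'' with error control via the energy channel of the OPE; no such statement is available off the shelf on the torus, and proving an asymptotic factorization with quantitative error terms is precisely the kind of analysis one wants to avoid. The paper sidesteps it entirely with a soft monotonicity argument: by the Edwards--Sokal coupling and the FKG inequality, for $v_{2i-1},v_{2i}$ in disjoint discs $D_i\subset\T^\delta$,
\[
\prod_{i}\E_{D_{i},\mathrm{free}}(\sigma_{v_{2i-1}}\sigma_{v_{2i}})\;\leq\;\E_{\T^{\delta}}(\svv)\;\leq\;\prod_{i}\E_{D_{i},+}(\sigma_{v_{2i-1}}\sigma_{v_{2i}}),
\]
and both bounds are two-point functions in \emph{simply connected} domains, whose rescaled limits are known from \cite{CHI1,CHI2}. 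The sandwich does not close for fixed disc radius, but the free and plus limits have the same $C_\sigma^2|v_{2i-1}-v_{2i}|^{-1/4}$ asymptotics as the two points merge at the centre of the disc, so it closes in the iterated limit $\delta\to0$ followed by $r\to0$ --- which is exactly the order of limits your own scheme already requires. Replacing your unproven factorization by this two-sided bound turns your sketch into the paper's proof; without it, the key input of your argument is unsupported.
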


\begin{proof}
It is clear that the function satisfying (\ref{eq: cont_ratio}) is
unique up to multiplicative constant, and the constant is uniquely
fixed by the asymptotics (\ref{eq: merge_asymptotics}). We can write
\begin{multline*}
\frac{\E_{\T_{\delta}}(\sigma_{v_{1}}\dots\sigma_{v_{n}})}{\E_{\T_{\delta}}(\sigma_{\hat{v}_{1}}\dots\sigma_{\hat{v}_{n}})}=\frac{\sum_{p,q}\E_{\T_{\delta}}(\sigma_{v_{1}}\dots\sigma_{v_{n}}\mu^{(p,q)})}{\sum_{p,q}\E_{\T_{\delta}}(\sigma_{\hat{v}_{1}}\dots\sigma_{\hat{v}_{n}}\mu^{(p,q)})}\\
=\frac{\E_{\T_{\delta}}(\sigma_{v_{1}}\dots\sigma_{v_{n}}\mu^{(0,0)})}{\E_{\T_{\delta}}(\sigma_{\hat{v}_{1}}\dots\sigma_{\hat{v}_{n}}\mu^{(0,0)})}\cdot\frac{\sum_{(p,q)}\frac{\E_{\T_{\delta}}(\sigma_{v_{1}}\dots\sigma_{v_{n}}\mu^{(p,q)})}{\E_{\T_{\delta}}(\sigma_{v_{1}}\dots\sigma_{v_{n}}\mu^{(0,0)})}}{\sum_{(p,q)}\frac{\E_{\T_{\delta}}(\sigma_{\hat{v}_{1}}\dots\sigma_{\hat{v}_{n}}\mu^{(p,q)})}{\E_{\T_{\delta}}(\sigma_{\hat{v}_{1}}\dots\sigma_{\hat{v}_{n}}\mu^{(0,0)})}}.
\end{multline*}
Note that at least for $\delta$ small enough, all these expectations
are positive: indeed, in view of (\ref{eq: conv_ratio_vs}), as we
vary $\hat{v}_{1},$ the ratio $\frac{\E_{\T_{\delta}}(\sigma_{\hat{v}_{1}}\sigma_{v_{2}}\dots\sigma_{v_{n}}\mu^{(p,q)})}{\E_{\T_{\delta}}(\sigma_{v_{1}}\dots\sigma_{v_{n}}\mu^{(p,q)})}$
stays positive, and after summing over $\hat{v}_{1}=(v_{1})_{kl}$
the numerator becomes $\E_{\T_{\delta}}(\sigma_{v_{1}}\sigma_{v_{2}}\dots\sigma_{v_{n}})$
which is also positive. Therefore, 
\begin{equation}
\frac{\E_{\T_{\delta}}(\sigma_{v_{1}}\dots\sigma_{v_{n}})}{\E_{\T_{\delta}}(\sigma_{\hat{v}_{1}}\dots\sigma_{\hat{v}_{n}})}=\exp\left(R_{\hat{v}_{1},\dots,\hat{v}_{n}}(v_{1},\dots,v_{n})+o(1)\right),\label{eq: conv_ratios_without_mu}
\end{equation}
 uniformly as in Corollary \ref{cor: conv_ratio}, where the smooth
function $R_{\hat{v}_{1},\dots,\hat{v}_{n}}$ is defined by
\[
\exp R_{\hat{v}_{1},\dots,\hat{v}_{n}}(v_{1},\dots,v_{n})=\exp R_{\hat{v}_{1},\dots,\hat{v}_{n}}^{(0,0)}(v_{1},\dots,v_{n})\frac{\sum_{(p,q)}\exp R_{v_{1},\dots,v_{n}}^{(0,0)}((v_{1})_{pq},\dots,v_{n})}{\sum_{(p,q)}\exp R_{\hat{v}_{1},\dots,\hat{v}_{n}}^{(0,0)}((\hat{v}_{1})_{pq},\dots,\hat{v}_{n})}.
\]

Let $D_{1},\dots,D_{n/2}\subset\T^{\delta}$ be disjoint discs with
centers $u_{1},\dots,u_{n/2}$ of small fixed radius $r.$ Using Edwards--Sokal
coupling and the FKG inequality, if $v_{2i-1},v_{2i}\in D_{i}$ for
each $i$, we can write 
\begin{multline}
\E_{D_{1},\mathrm{free}}(\sigma_{v_{1}}\sigma_{v_{2}})\dots\E_{D_{n/2},\mathrm{free}}(\sigma_{v_{n-1}}\sigma_{v_{n}})\\
\leq\E_{\T^{\delta}}(\sigma_{v_{1}}\dots\sigma_{v_{n}})\leq\E_{D_{1},+}(\sigma_{v_{1}}\sigma_{v_{2}})\dots\E_{D_{n/2},+}(\sigma_{v_{n-1}}\sigma_{v_{n}}),\label{eq: comparison_bound}
\end{multline}
 where $+$ and $\mathrm{free}$ stand for boundary conditions. 

We now claim that \eqref{eq: conv_ratios_without_mu} and \eqref{eq: comparison_bound}
together imply the result; for this argument, we will distinguish
points $v_{i}\in\T$ and sequences of points $v_{i}^{\delta}$ approximating
them. Let, as $\delta\to0$, $\hat{v}_{1}^{\delta},\dots,\hat{v}_{n}^{\delta}$
converge to distinct points $\hat{v}_{1},\dots,\hat{v}_{n}$ in $\T.$
By choosing $v_{2i-1}^{\delta},v_{2i}^{\delta}\in B_{u_{i}}(r/2),$
with $|v_{2i-1}^{\delta}-v_{2i}^{\delta}|>\frac{r}{4}$ and applying
(\eqref{eq: comparison_bound}) and the convergence of spin correlations
in simply-connected domains, we see that the expression
\[
C_{\sigma}^{-n}\delta^{-\frac{n}{8}}\E_{\T^{\delta}}(\sigma_{\hat{v}_{1}^{\delta}}\dots\sigma_{\hat{v}_{n}^{\delta}})=\exp\left(-R_{\hat{v}_{1},\dots,\hat{v}_{n}}(v_{1},\dots,v_{n})+o(1)\right)C_{\sigma}^{-n}\delta^{-\frac{n}{8}}\E_{\T^{\delta}}(\sigma_{v_{1}^{\delta}}\dots\sigma_{v_{n}^{\delta}}).
\]
 is bounded above and below as $\delta\to0.$ By passing to a subsequence,
we may assume that it converges to a limit, which we denote by $\ccor{\sigma_{\hat{v}_{1}},\dots,\sigma_{\hat{v}_{n}}}_{\T}.$
Now, for all $v_{2i-1}^{\delta},v_{2i}^{\delta}\in B_{u_{i}}(r/2)$
converging to distinct points, we have along that subsequence
\begin{multline*}
\exp\left(R_{\hat{v}_{1},\dots,\hat{v}_{n}}(v_{1},\dots,v_{n})\right)=\frac{C_{\sigma}^{-n}\delta^{-\frac{n}{8}}\E_{\T_{\delta}}(\sigma_{v_{1}^{\delta}}\dots\sigma_{v_{n}^{\delta}})}{C_{\sigma}^{-n}\delta^{-\frac{n}{8}}\E_{\T_{\delta}}(\sigma_{\hat{v}_{1}^{\delta}}\dots\sigma_{\hat{v}_{n}^{\delta}})}+o(1)\\
=\ccor{\sigma_{\hat{v}_{1}}\dots\sigma_{\hat{v}_{n}}}_{\T}^{-1}C_{\sigma}^{-n}\delta^{-\frac{n}{8}}\E_{\T_{\delta}}(\sigma_{v_{1}^{\delta}}\dots\sigma_{v_{n}^{\delta}})+o(1).
\end{multline*}
We divide both sides by $\prod_{i=1}^{n/2}|v_{2i-1}-v_{2i}|^{-\frac{1}{4}}$
and observe that in view of \eqref{eq: comparison_bound} and the
properties of scaling limits of correlations in the disc, we have
\[
\lim_{v_{2i-1},v_{2i}\to u_{i}}\lim_{\delta\to0}\frac{C_{\sigma}^{-n}\delta^{-\frac{n}{8}}\E_{\T_{\delta}}(\sigma_{v_{1}^{\delta}}\dots\sigma_{v_{n}^{\delta}})}{\prod_{i=1}^{n/2}|v_{2i-1}-v_{2i}|^{-\frac{1}{4}}}=1.
\]
 Therefore, combining the last two displays, we deduce 
\[
\ccor{\sigma_{\hat{v}_{1}}\dots\sigma_{\hat{v}_{n}}}_{\T}=\lim_{v_{2i-1},v_{2i}\to u_{i}}\frac{\exp\left(-R_{\hat{v}_{1},\dots,\hat{v}_{n}}(v_{1},\dots,v_{n})\right)}{\prod_{i=1}^{n/2}|v_{2i-1}-v_{2i}|^{\frac{1}{4}}}.
\]
Because the right-hand side does not depend on the approximating sequence
$\hat{v}_{i}^{\delta},$ nor on the subsequence, we infer that it
is in fact simply the limit of $C_{\sigma}^{-n}\delta^{-\frac{n}{8}}\E_{\T^{\delta}}(\sigma_{\hat{v}_{1}^{\delta}}\dots\sigma_{\hat{v}_{n}^{\delta}}).$
It is clear from the construction that 
\[
\frac{\ccor{\sigma_{v_{1}}\dots\sigma_{v_{n}}}_{\T}}{\ccor{\sigma_{\hat{v}_{1}}\dots\sigma_{\hat{v}_{n}}}_{\T}}=\exp\left(R_{\hat{v}_{1},\dots,\hat{v}_{n}}(v_{1},\dots,v_{n})\right)
\]
 and that $\ccor{\sigma_{v_{1}},\dots,\sigma_{v_{n}}}_{\T}$ satisfies
(\ref{eq: merge_asymptotics}). Finally, we can write 
\begin{multline*}
C_{\sigma}^{-n}\delta^{-\frac{n}{8}}\E_{\T^{\delta}}(\sigma_{v_{1}}\dots\sigma_{v_{n}}\mu^{(p,q)})=C_{\sigma}^{-n}\delta^{-\frac{n}{8}}\E_{\T^{\delta}}(\sigma_{v_{1}}\dots\sigma_{v_{n}})\cdot\frac{\E_{\T^{\delta}}(\sigma_{v_{1}}\dots\sigma_{v_{n}}\mu^{(p,q)})}{\E_{\T^{\delta}}(\sigma_{v_{1}}\dots\sigma_{v_{n}})}\\
\stackrel{\delta\to0}{\longrightarrow}\ccor{\sigma_{v_{1}},\dots,\sigma_{v_{n}}\mu^{(p,q)}}_{\T}.
\end{multline*}
\end{proof}

\subsection{\label{subsec: all_correlations}Spin-fermion and spin-energy correlations.}

As in the planar case, the definition of fermionic observables $F_{v_{1},\dots,\v_{n}}(w,z)$
and $F_{v_{1},\dots,\v_{n}}^{(p,q)}(w,z)$ can be generalized to the
case of multiple points. These observables, and the convergence results
for them, can then be reduced to $F_{v_{1},\dots,\v_{n}}(w,z)$ and
$F_{v_{1},\dots,\v_{n}}^{(p,q)}(w,z)$ via \emph{Pfaffian} \emph{identities}.
Since there is essentially no difference with the planar case \cite{CHI2},
or the torus case with $n=0$ and $(p,q)\neq(0,0)$, our treatment
will be brief. The multi-point fermionic observable is defined as
\[
F_{v_{1},\dots,\v_{n}}^{(p,q)}(z_{1},\dots,z_{2k})=\Corr{\Td}{\psi_{z_{1}}\dots\psi_{z_{2k}}\svv\mu^{(p,q)}},
\]
 where $\psi_{z_{i}}$ are as in Definition \ref{def: observable}.
As a function, say, of $z_{1},$ it is well-defined on $\Cgr_{\{z_{2},\dots,z_{n}\}}(\Tdouble_{[v_{1},\dots,v_{n}]})$
up to a global sign choice and satisfies the spinor and anti-periodicity
properties (\ref{eq: F_spinor}), (\ref{eq: F_antiperiodic}); the
same proof as in Lemma \ref{lem: properties of F} applies verbatim.
We can fix the global sign by a recursive convention as in \cite[Definition 2.23]{CHI2}.
Defining the normalized observables as 
\[
\hat{F}_{v_{1},\dots,\v_{n}}^{(p,q)}(z_{1},\dots,z_{2k})=\frac{\Corr{\Td}{\psi_{z_{1}}\dots\psi_{z_{2k}}\svv\mu^{(p,q)}}}{\Corr{\Td}{\svv\mu^{(p,q)}}},
\]
 we have the following identity, unless $n=p=q=0$:
\begin{equation}
\hat{F}_{v_{1},\dots,\v_{n}}^{(p,q)}(z_{1},\dots,z_{2k})=\Pf\left[\hat{F}_{v_{1},\dots,\v_{n}}^{(p,q)}(z_{i},z_{j})\right].\label{eq: Pfaffian_identity}
\end{equation}
For the proof of (\ref{eq: Pfaffian_identity}), notice that the proof
of \cite[Proposition 2.24]{CHI2} only relies on the s-holomorphicity
and the uniqueness in the corresponding boundary value problem. Since
we have the corresponding uniqueness result in the torus case in Lemma
\ref{lem: no_kernel}, exactly the same proof applies. Alternatively,
a combinatorial proof (cf. \cite[Section 4 ]{CCK}, especially the
discussion at the end) applies, since the only possible obstacle would
be the vanishing of the determinant, which does not happen by Lemma
\ref{lem: denom_does_not_vanish}.

For an edge $e=(e_{+}e_{-})$ of $\T^{\delta},$ the \emph{energy
observable }is defined by\emph{ 
\[
\en_{e}=\sqrt{2}\sigma_{e_{+}}\sigma_{e_{-}}-1.
\]
}As explained e.g. in \cite[Section 2.5]{CHI2}, this observable can
be written as $\mu_{e^{\star}}\sigma_{e_{+}}\sigma_{e_{-}}=\eta_{z}^{-1}\eta_{\hat{z}}^{-1}\psi_{z}\psi_{\hat{z}},$
where $e^{\star}$ is the edge dual to $e$, and $z,\hat{z}$ are
two corners incident to $e$ and but not to the same vertex. Thus,
the spin-energy correlations also obeys a Pfaffian formula 
\begin{multline}
\frac{\Corr{\Td}{\en_{e_{1}}\dots\en_{e_{l}}\svv\mu^{(p,q)}}}{\Corr{\Td}{\svv\mu^{(p,q)}}}=\eta_{z_{1}}^{-1}\eta_{\hat{z}_{1}}^{-1}\dots\eta_{z_{l}}^{-1}\eta_{\hat{z}_{l}}^{-1}\frac{\E\left(\psi_{z_{1}}\psi_{\hat{z}_{1}}\dots\psi_{z_{l}}\psi_{\hat{z}_{l}}\svv\mu^{(p,q)}\right)}{\Corr{\Td}{\svv\mu^{(p,q)}}}\\
=\Pf\left[\frac{\left(\eta_{z}\eta_{w}\right)^{-1}\E\left(\psi_{z}\psi_{w}\svv\mu^{(p,q)}\right)}{\Corr{\Td}{\svv\mu^{(p,q)}}}\right]_{z,w\in\{z_{1},\hat{z}_{1},\dots,z_{l},\hat{z}_{l}\}}.\label{eq: Pfaff_energy}
\end{multline}

\begin{cor}
\label{cor: all_corr}Let $n>0$ be even. For every $\eps>0$, we
have the following asymptotics as $\delta\to0$, uniformly as as long
as the marked points stay at the distance at least $\eps$ from each
other:

\[
\]
\begin{equation}
\delta^{-\frac{k}{2}}C_{\psi}^{-2k}\frac{\Corr{\Td}{\psi_{z_{1}}\dots\psi_{z_{2k}}\svv\mu^{(p,q)}}}{\Corr{\Td}{\svv\mu^{(p,q)}}}=\eta_{z_{1}}\dots\eta_{z_{2k}}\Pf[A_{ij}]_{i,j=1}^{2k}+o(1),\label{eq: conv_psi_spins}
\end{equation}
\begin{equation}
\delta^{-l}C_{\eps}^{-l}\frac{\Corr{\Td}{\en_{e_{1}}\dots\en_{e_{l}}\svv\mu^{(p,q)}}}{\Corr{\Td}{\svv\mu^{(p,q)}}}=\left(-\frac{\i}{2}\right)^{l}\Pf[B_{ij}]_{i,j=1}^{2l}+o(1),\label{eq: conv_en_spins}
\end{equation}
where $C_{\psi}=\left(\frac{2}{\pi}\right)^{\frac{1}{2}},$ $C_{\eps}=\frac{2}{\pi},$
and $A_{ii}=B_{ii}=0$ for all $i,$ while for $i\neq j,$ we put
$A_{ij}=\re[\bar{\eta}_{z_{i}}f_{v_{1},\dots,v_{n}}^{[\eta_{z_{j}}],(p,q)}(z_{i},z_{j})]$
and 
\[
B_{ij}=\begin{cases}
f_{v_{1},\dots,v_{n}}^{(p,q)}(e_{\left\lceil i/2\right\rceil },e_{\left\lceil j/2\right\rceil }), & i=1\mod2,\quad j=1\mod2;\\
\overline{f_{v_{1},\dots,v_{n}}^{(p,q)}(e_{\left\lceil i/2\right\rceil },e_{\left\lceil j/2\right\rceil })}, & i=2\mod2,\quad j=2\mod2;\\
f_{v_{1},\dots,v_{n}}^{\star,(p,q)}(e_{\left\lceil i/2\right\rceil },e_{\left\lceil j/2\right\rceil }), & i=2\mod2,\quad j=1\mod2;\\
\overline{f_{v_{1},\dots,v_{n}}^{\star,(p,q)}(e_{\left\lceil i/2\right\rceil },e_{\left\lceil j/2\right\rceil })}, & i=1\mod2,\quad j=2\mod2.
\end{cases}
\]
\end{cor}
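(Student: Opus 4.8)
The plan is to deduce both asymptotics from the two-point convergence of Theorem~\ref{thm: convergence of the observable} by feeding it into the Pfaffian identities \eqref{eq: Pfaffian_identity} and \eqref{eq: Pfaff_energy}. The only genuinely new input beyond the one-fermion case is a multilinear (Pfaffian) algebra step together with careful bookkeeping of the $\eta$-prefactors; all the probabilistic and complex-analytic content is already contained in the two-point result and its planar counterparts in \cite{CHI2}.

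First I would treat \eqref{eq: conv_psi_spins}. Writing $\hat F^{(p,q)}_{v_1,\dots,v_n}(z_i,z_j)$ for the normalized two-point observable, Theorem~\ref{thm: convergence of the observable} (applied with $w=z_i$, $z=z_j$, $\eta=\eta_{z_i}$, and using $\bar\eta_{z_i}=\eta_{z_i}^{-1}$) gives
\[
\delta^{-1}\hat F^{(p,q)}_{v_1,\dots,v_n}(z_i,z_j)=C_\psi^{2}\,\eta_{z_i}\,\proj{\eta_{z_j}}{f^{[\eta_{z_i}],(p,q)}_{v_1,\dots,v_n}(z_i,z_j)}+o(1),
\]
uniformly once the marked points are $\eps$-separated. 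Since $\Pf$ of a $2k\times2k$ antisymmetric matrix is a homogeneous degree-$k$ polynomial in its entries, applying $\delta^{-k}$ to \eqref{eq: Pfaffian_identity} turns the left-hand side of \eqref{eq: conv_psi_spins} into the Pfaffian of the rescaled entries above, up to $o(1)$ (the error survives because the Pfaffian is a fixed finite sum of products of uniformly bounded entries). It then remains to extract the scalar prefactors: using $\proj{\eta}{g}=\eta\,\re[\bar\eta g]$ for $|\eta|=1$, the limiting $(i,j)$ entry equals $\eta_{z_i}\eta_{z_j}$ times a real coefficient, which the conjugation and antisymmetry relations $f^{(p,q)}_{w,z}=-f^{(p,q)}_{z,w}$ and $f^{\star,(p,q)}_{w,z}=-\overline{f^{\star,(p,q)}_{z,w}}$ of the preceding Lemma identify with $A_{ij}$; these same relations show that evaluating the entry instead as $-\delta^{-1}\hat F(z_j,z_i)$ gives the same limit (consistency) and that $A_{ij}=-A_{ji}$. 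Pulling the diagonal factor out via $\Pf[\eta_{z_i}\eta_{z_j}A_{ij}]=\eta_{z_1}\cdots\eta_{z_{2k}}\Pf[A_{ij}]$ then yields \eqref{eq: conv_psi_spins}.

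For \eqref{eq: conv_en_spins} I would proceed identically, starting from \eqref{eq: Pfaff_energy}, which already expresses the normalized spin-energy correlation as a $2l\times2l$ Pfaffian of the normalized two-point observables evaluated at the corners $z_1,\hat z_1,\dots,z_l,\hat z_l$, carrying the prefactor $(\eta_z\eta_w)^{-1}$. Here $(\eta_z\eta_w)^{-1}$ cancels precisely the $\eta_z\eta_w$ produced by the projection, so each rescaled entry converges to $C_\psi^{2}\re[\bar\eta_w f^{[\eta_z],(p,q)}_{v_1,\dots,v_n}(z,w)]$; rewriting $f^{[\eta]}_{w,z}=\tfrac12\bar\eta f^{(p,q)}_{w,z}+\tfrac12\eta f^{\star,(p,q)}_{w,z}$ and substituting the two specific corner directions of each edge produces exactly the four parity cases of $B_{ij}$. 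Since $C_\psi^2=\tfrac{2}{\pi}=C_\eps$, dividing by $C_\eps^{l}$ and collecting the remaining scalar — coming from the fixed edge geometry and the pole normalization $f^{(p,q)}_{w,z}\sim 2(z-w)^{-1}$ — assembles the overall constant $(-\tfrac{\i}{2})^{l}$, giving \eqref{eq: conv_en_spins}.

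I expect the main obstacle to be not the analysis, which is inherited essentially verbatim from the two-point theorem and the planar arguments of \cite{CHI2}, but the algebraic reconciliation of the discrete corner-value limits with the real-linear coefficients $A_{ij}$ and $B_{ij}$: one must track how each projection $\proj{\eta_z}{\cdot}$ splits into $f^{(p,q)}$ and $f^{\star,(p,q)}$ parts, verify antisymmetry of the resulting matrices so the Pfaffians are well defined, and check that the edge-corner choices in the energy case reassemble into precisely the stated four-case matrix $B$ with the correct constant. Uniformity of the $o(1)$ is automatic, since only finitely many entries appear and all are bounded and bounded away from their singularities under the $\eps$-separation hypothesis.
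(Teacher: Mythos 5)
Your treatment of \eqref{eq: conv_psi_spins} matches the paper's: combine the Pfaffian identity \eqref{eq: Pfaffian_identity} with Theorem \ref{thm: convergence of the observable}, then pull the $\eta_{z_i}$ factors out of the Pfaffian. That half is fine.

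For \eqref{eq: conv_en_spins} there is a genuine gap. The Pfaffian in \eqref{eq: Pfaff_energy} contains the entries indexed by the pair $(z_m,\hat z_m)$ of corners incident to the \emph{same} edge $e_m$. These two corners are at distance $O(\delta)$ from each other, so they violate the $\eps$-separation hypothesis of Theorem \ref{thm: convergence of the observable}, and your claimed limit $C_\psi^2\,\re[\bar\eta_{w}f^{[\eta_z],(p,q)}_{v_1,\dots,v_n}(z,w)]$ for such an entry is meaningless: the function $f^{(p,q)}$ has a pole $2(z-w)^{-1}$ there, so the would-be limit diverges. What actually happens is that the discrete entry equals $\E(\en_{e_m}\svv\mu^{(p,q)})/\E(\svv\mu^{(p,q)})$ (the one-point energy in the presence of spins), which is $O(\delta)$ for a different reason: the singular full-plane part of the discrete observable at adjacent corners cancels against the ``$-1$'' in $\en_e=\sqrt2\,\sigma_{e_+}\sigma_{e_-}-1$, and what survives is the \emph{regular} part, governed by $f^{\star,(p,q)}(e_m,e_m)$. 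Establishing this requires a separate local analysis comparing the discrete observable near coinciding arguments with its full-plane counterpart --- this is exactly the ingredient the paper imports as \cite[Theorem 3.21]{CHI2} and lists explicitly alongside \eqref{eq: Pfaff_energy} and Theorem \ref{thm: convergence of the observable}. Your closing claim that uniformity is automatic because ``all entries are bounded and bounded away from their singularities under the $\eps$-separation hypothesis'' is false precisely for these diagonal $2\times2$ blocks. The remaining off-block entries and the linear change of variables splitting $f^{[\eta]}$ into $f$ and $f^{\star}$ are handled as you describe, and agree with the paper's reference to \cite[Proof of Theorem 1.2]{CHI2}.
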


\begin{proof}
The proof of the first statement is a combination of (\ref{eq: Pfaffian_identity})
and Theorem \ref{thm: convergence of the observable}. The proof of
the second one is a combination of (\ref{eq: Pfaff_energy}), Theorem
\ref{thm: convergence of the observable}, a local analysis of the
singular terms $\frac{\left(\eta_{z}\eta_{w}\right)^{-1}\E\left(\psi_{z_{i}}\psi_{\hat{z}_{i}}\svv\mu^{(p,q)}\right)}{\Corr{\Td}{\svv\mu^{(p,q)}}}$
in the Pfaffian (see \cite[Theorem 3.21]{CHI2}), and a linear change
of variables in the Pfaffian, as in \cite[Proof of Theorem 1.2]{CHI2}.
\end{proof}
Recall that the above corollary holds in the same form for $n=0$
and $(p,q)\neq0$; the case $n=0$ and $(p,q)=0$ requires a separate
treatment \cite{IKT}. Plainly, combining Corollary \ref{cor: all_corr}
with Theorem \ref{thm: conv_spins}, one gets the asymptotics of the
numerators in (\ref{eq: conv_psi_spins}) and (\ref{eq: conv_en_spins}).
We also mention that one can obtain in a similar way the mixed energy-spin-fermion
correlations, as well as disorder correlations, by placing some of
$\psi_{z_{i}}$ next to one of $v_{j},$ and applying the proof of
\cite[Theorem 3.32]{CHI2}, which is a ``local'' consequence of
\cite[Theorem 3.16]{CHI2}. We leave details to the reader.

\section{\label{sec: Szego}Szegö kernels and their limits on Riemann surfaces. }

In this section, we construct holomorphic functions, or, more precisely,
half-integer differentials, satisfying (\ref{eq: contin_period},
\ref{eq: contin_pole}, \ref{eq: contin_sing}) by a ``continuous''
construction, and give explicit formulae for them in terms of Abelian
differentials and theta functions. Since this part does not depend
on the particularities of the torus case, we carry it out in general,
and specialize to the case of a torus in the next subsection. The
construction is based on gluing to a Riemann surface $\M$ its copy
$\Mtilde$ by thin ``tubes'' attached near the marked points $v_{1},\dots,v_{n}$
and considering the \emph{Szegő kernel} on the resulting surface;
the functions $\ffs{w,\cdot}{\vv}{(p,q)}$ and $\ffs{w,\cdot}{\vv}{\star,(p,q)}$
are then recovered by ``pinching'' the tubes. The analysis here
is similar to the one carried out in \cite[Section 4]{BIVW}, with
a difference that in the absence of the boundary, removing the last
tube disconnects the surfaces. This leads to a difference in the resulting
formulae; in particular, to the well-known ``charge neutrality''
condition.

Our starting point is the following identity due to Hejhal \cite{Hejhal}
and Fay \cite{Fay}:
\begin{equation}
\Lambda_{\M,\SpStr}(P,Q)^{2}=\beta_{\M}(P,Q)+\sum_{i,j=1}^{g}\frac{\partial_{z_{i}}\partial_{z_{j}}\theta_{\tau_{\M}}(0;H)}{\theta_{\tau_{\M}}(0;H)}u_{\M,i}(P)u_{\M,j}(Q).\label{eq: bosonization}
\end{equation}
Here $\M$ is a compact Riemann surface of genus $g$, $\Lambda_{\M,\SpStr}(P,Q)$
is the Szegő kernel on $\M$ corresponding to the spin structure $\SpStr$,
$\tau_{\M}$ is the period matrix of $\M$, $\theta_{\tau_{\M}}(\cdot,H)$
is the theta function with half-integer characteristic $H$ corresponding
(cf. \cite[Lemma 3.14]{BIVW}) to $\SpStr$, $\beta_{\M}$ is the
fundamental Abelian differential of the second kind on $\M,$ and
$u_{\M,i}(w),$ $i=1,\dots,g$, are the Abelian differentials of the
first kind. We will assume that the reader is familiar with all these
objects, and refer to \cite[Section 3]{BIVW} for notation, detailed
background and explanations; we note our normalization conventions
\[
\int_{A_{i}}u_{\M,j}=\pi\i\delta_{ij},\quad\tau_{ij}=\int_{B_{i}}u_{\M,j}.
\]
 The Szegő kernel $\Lambda_{\M,\SpStr}$ exists and is unique, and
the identity is valid, as long as $\theta_{\tau_{\M}}(0;H)\neq0$
\cite{Hejhal}.

\subsection{The surface $\protect\M_{\protect\eps}$ and the limits of Abelian
differentials and the period matrix.}

Given a compact Riemann surface $\M$ of genus $g$ with distinct
marked points $\vv\in\M$, and $\eps=(\eps_{1},\dots,\eps_{n})\in(\C\setminus\{0\})^{n},$
we construct a new surface $\Meps$ of genus $2g+n-1$, as follows.
For a coordinate map $z:\Omega\to\C,$ where $\Omega\subset\M,$ we
denote $z^{\star}(P):=\overline{z(P)},$ and let $(\Mtilde,v_{1}^{\star},\dots,v_{n}^{\star})$
be an identical copy of $(\M,v_{1},\dots,v_{n})$ endowed with the
analytic atlas given by the maps $z^{\star}.$ Following \cite{Yamada},
we take $z_{i}:\Omega_{i}\to2\D,$ $z_{i}^{\star}:\Omega_{i}^{\star}\to2\D$
to be local coordinates in the (pairwise disjoint, simply connected)
neighborhoods $\Omega_{i}\subset\M,$ $\Omega_{i}^{\star}\subset\Mtilde$
of $v_{i},v_{i}^{\star}$ respectively, such that $z_{i}(v_{i})=z_{i}^{\star}(v_{i}^{\star})=0$
(in our case, we also take them to be copies of the same map) and
identify points in annuli $\{P\in\M:|\eps_{i}|\leq|z_{i}(P)|\leq1)\}$
and $\{P\in\Mtilde:|\eps_{i}|\leq|z_{i}^{\star}(P)|\leq1)\}$ by the
equation $\{z_{i}z_{i}^{\star}=\eps_{i}\}.$ Note that this procedure
is equivalent to cutting out discs $\{P\in\M:|z_{i}(P)|\leq|\eps_{i}|^{\frac{1}{2}})\}$
and $\{P\in\Mtilde:|z_{i}^{\star}(P)|\leq|\eps_{i}|^{\frac{1}{2}})\}$
and gluing surfaces along their boundaries, cf. \cite[Section 4.1]{BIVW}.
For our purposes, it will be enough to consider positive $\eps_{i}$
only, although allowing them to be complex is natural since the quantities
of interest will be holomorphic in these parameters. The coordinates
$z_{i},z_{i}^{\star}$ are called \emph{pinching coordinates}, and
the annuli $\mathcal{A}_{i}(\eps_{i}):=\{P\in\M:|\eps_{i}|\leq|z_{i}(P)|\leq1)\}$
the \emph{pinching annuli}. We introduce auxiliary points $\hat{v}_{i}=z_{i}^{-1}(1),\hat{v}_{i}^{\star}=(z_{i}^{\star})^{-1}(1)$
in these annuli.
\begin{defn}
\label{def:markchar} 
We will choose the following Torelli marking (i.e., a base of homology)
for $\Meps$:
\begin{enumerate}
\item the loops $A_{i},B_{i}$, $i=1,2,\dots,g$ (respectively, $i=g+1,g+2,\dots,2g$)
are the cycles in a Torelli marking of $\M$ (respectively, their
copies in $\Mtilde$, with orientations reversed for $B_{i}$). 
\item the loop $A_{2g+i-1}$, $i=2,\dots,n$, is a small simple loop in
$\M$ surrounding $v_{i}$ counterclockwise. The loop $B_{2g+i-1}$,
$i=2,\dots,n$ goes on $\M$ from $\hat{v}_{1}$ to $\hat{v}_{i}$,
then continues through $\Ann_{i}(\eps_{i})$ to $\hat{v}_{i}^{\star}$,
then takes a symmetric path on $\Mtilde$ to $\hat{v}_{1}^{\star},$
and then goes through $\Ann_{1}(\eps_{1})$ to $\hat{v}_{1}$. We
insist that $B_{2g+i-1}$ avoids all other loops except $A_{2g+i-1}$;
when $\eps_{i}$ are real, assume for definiteness the paths through
the pinching regions to be straight line segments.
\end{enumerate}
Given a spin structure $\SpStr$ on $\M$, or equivalently a half-integer
characteristics $H\in\{0,1\}^{2g}$, $H=\{M_{i},N_{i}\}_{i=1}^{g}$
with $\SpStr=\SpStr(H)$ (see \cite[Lemma 3.14]{BIVW}), we introduce
a half-integer theta characteristic $H\in\{0,1\}^{2\cdot(2g+n-1)}$,
by extending the $2g$-tuple $\{M_{i},N_{i}\}_{i=1}^{g}$ as follows: 
\begin{enumerate}
\item for $i=1,2,\dots g$, we set $M_{i+g}=M_{i}$, $N_{i+g}=N_{i}$. 
\item for $i=2g+1,\dots,2g+n-1$, we set $N_{i}=1$ and $M_{i}=0.$ 
\end{enumerate}
\end{defn}

Note that we need to single out one of the marked points, $v_{1}$,
for this construction, but we will see that the final result will
be independent of this choice.

Similar to \cite{BIVW}, we will need to take limits of Abelian differentials
and the period matrix as $\eps_{i}\rightarrow0$ to obtain the Szegő
kernel on $\Mhat=\M\sqcup\Mtilde$.
\begin{defn}
\label{def: u_j_all} We introduce the following unifying notation
for the Abelian differentials of the \emph{first and third kind} on
$\Mhat$: 
\begin{align}
u_{i} & :=u_{\M_{0},i}, & i=1,\dots,2g\label{eq: u_i_all_1}\\
u_{2g+i-1} & :=\frac{1}{2}\omega_{\M_{0},v_{i},v_{1}}+\frac{1}{2}\omega_{\M_{0},v_{1}^{\star},v_{i}^{\star}} & i=2,\dots,n.\label{eq: u_i_all_2}
\end{align}
\end{defn}

Hereinafter we denote by $\omega_{\M,u_{1},u_{2}}$ (the unique) Abelian
differential of the third kind on $\M$ with simple poles of residue
$1$ at $u_{1}$ and of residue $-1$ at $u_{2}$, and vanishing $A_{i}$-periods,
or zero if it does not exist (which happens if and only if $u_{1},u_{2}$
belong to different connected components of $\M$). Note that for
$1\leq i\leq g$ (respectively, $g+1\leq i\leq2g$), we have $u_{i}=u_{\M,i}$
on $\M$ and $u_{i}\equiv0$ on $\Mtilde$ (respectively, $u_{i}\equiv0$
on $\M$ and $u_{i}=u_{\M^{\star},i-g}$ on $\Mtilde$), and for $i=2,\dots,n$,
we have $u_{2g+i-1}=\frac{1}{2}\omega_{\M,v_{i},v_{1}}$ on $\M$
and $u_{2g+i-1}=\frac{1}{2}\omega_{\M^{\star},v_{1}^{\star},v_{i}^{\star}}$
on $\Mtilde.$ Similarly, for the fundamental Abelian differential
of the second kind, we have 
\[
\beta_{\Mhat}(P,Q)=\begin{cases}
\beta_{\M}(P,Q), & P,Q\in\M\\
\beta_{\Mtilde}(P,Q) & P,Q\in\Mtilde\\
0, & \text{else}.
\end{cases}
\]

We will need the following notion of \emph{reguralized integrals}
for $u_{i}:$ for $\omega$ an Abelian differential with at worst
simple poles, we put 
\[
\regint{\int_{u_{1}}^{u_{2}}\omega}=\lim_{x\to0}\left(\int_{w_{1}^{-1}(x)}^{w_{2}^{-1}(x)}\omega-(\res_{u_{1}}w)\log(x)-(\res_{u_{1}}w)\log(x)\right),
\]
where $w_{1},w_{2}$ are local coordinates such that $w_{i}(u_{i})=0.$
The above notion is independent of $w_{i}$ if $\omega$ is holomorphic
at $u_{i};$ otherwise, it depends on the local coordinates, the path
taken between $u_{1}$ and $u_{2}$, and the branch of the logarithm.
We will only use it when $\{u_{1},u_{2}\}=\{v_{1},v_{i}\}$ or $\{u_{1},u_{2}\}=\{v_{1}^{\star},v_{i}^{\star}\}$
for $i=2,\dots,n,$ in which case we choose $w_{i}$ to be the pinching
coordinates, the path of integration follows a piece of $B_{2g+i-1}$
between $\hat{v}_{1},\hat{v}_{i}$ or $\hat{v}_{1}^{\star},\hat{v}_{i}^{\star}$
appended with $w_{1}^{-1}([0,1])$ and $w_{2}^{-1}([0,1]),$ and the
logarithm to be the principal branch.
\begin{lem}
\label{lem:abdifasy} \cite[Cor. 1, Cor. 3, Cor. 5 and Theorem 2, Theorem 6]{Yamada}
The Abelian differentials on $\Meps$ have the following limits as
$\eps_{i}\rightarrow0$: \label{lem: limit_Abelian} 
\begin{align}
u_{\Meps,j} & =u_{j}+O(\eps),\quad j=1,\dots,2g+n-1,\label{eq: conv_first}\\
\beta_{\Meps} & =\beta_{\Mhat}+O(\eps).\label{eq: conv_second}
\end{align}
locally uniformly on $\M_{0}\setminus\{v_{1},v_{1}^{\star},\dots,v_{n},v_{n}^{\star}\}.$ 
\end{lem}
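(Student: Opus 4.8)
The plan is to reduce the statement to the plumbing/degeneration estimates of \cite{Yamada}, so that the only genuine work is the identification of the zeroth-order terms with the explicitly named objects $u_j$ and $\beta_{\Mhat}$. Recall that $\Meps$ is a plumbing of the nodal surface $\Mhat=\M\sqcup\Mtilde$ with parameters $\eps_i$, the $i$-th tube identifying the two coordinate discs at $v_i,v_i^\star$ via $z_iz_i^\star=\eps_i$. Yamada expresses the normalized first-kind differentials $u_{\Meps,j}$ and the fundamental second-kind differential $\beta_{\Meps}$ as series in the $\eps_i$ that converge geometrically on compacta bounded away from the pinching annuli, with leading term the corresponding object on $\Mhat$. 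Granting this, the content of the Lemma is only to (i) identify those leading terms and (ii) read off the $O(\eps)$ error and the claimed locally uniform convergence on $\M_0\setminus\{v_i,v_i^\star\}$ -- the latter being immediate since such compacta stay a fixed distance from the shrinking tubes, where all of Yamada's series converge.

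For the \emph{old} indices $1\le j\le 2g$ the identification is straightforward. When $1\le j\le g$ the cycle $A_j$ lies entirely on $\M$, away from every pinching annulus, so its normalization is stable under plumbing, and Yamada's expansion gives $u_{\Meps,j}=u_{\M,j}+O(\eps)$ on $\M$ and $u_{\Meps,j}=O(\eps)$ on $\Mtilde$; this matches $u_j=u_{\M_0,j}$, which vanishes on $\Mtilde$. The mirror range $g+1\le j\le 2g$ is symmetric, the reversed orientation of the $B$-cycles in the marking of Definition \ref{def:markchar} accounting for the copy $\Mtilde$.

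The crux is the \emph{new} range $j=2g+i-1$, $2\le i\le n$, whose $A$-cycle is a small counterclockwise loop around $v_i$, and here the residue bookkeeping is the heart of the matter. As $\eps_i\to 0$, a holomorphic differential with unit $A_{2g+i-1}$-period concentrates a simple-pole contribution at the node, limiting to a third-kind differential. The gluing relation $z_iz_i^\star=\eps_i$ yields $dz_i/z_i=-dz_i^\star/z_i^\star$ in the overlap, so a pole of residue $+\tfrac12$ at $v_i$ on the $\M$ side must be matched by a pole of residue $-\tfrac12$ at $v_i^\star$ on the $\Mtilde$ side; since every $B_{2g+i-1}$ is routed through the tube at the distinguished point $v_1$, the compensating poles sit at $v_1$ and $v_1^\star$, where the residues $-\tfrac12$ and $+\tfrac12$ again match across the $v_1$-tube. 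This is precisely $u_{2g+i-1}=\tfrac12\,\omega_{\M,v_i,v_1}+\tfrac12\,\omega_{\M^\star,v_1^\star,v_i^\star}$, and the normalization is consistent: residue $\tfrac12$ gives $\oint_{A_{2g+i-1}}=2\pi\i\cdot\tfrac12=\pi\i$, as required by $\int_{A_i}u=\pi\i\,\delta_{ij}$, while the vanishing of all other $A$-periods follows from the defining property of $\omega_{\M,v_i,v_1}$ and from the fact that $u_{2g+i-1}$ has no pole at any $v_{i'}$ with $i'\neq i$ (note that $v_1$ carries no $A$-cycle). I would then invoke Yamada to upgrade ``the limit is $u_{2g+i-1}$'' to the quantitative $u_{\Meps,2g+i-1}=u_{2g+i-1}+O(\eps)$. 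For $\beta_{\Meps}$, which is marking-independent, Yamada's second-kind estimate directly yields the block-diagonal nodal limit $\beta_{\Mhat}$, the vanishing across components simply recording the decoupling of $\M$ and $\Mtilde$ as the tubes pinch.

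The main obstacle I anticipate is not analytic -- the estimates are Yamada's -- but the combinatorial and residue bookkeeping in the key step: matching residues and signs consistently across all $n$ tubes, tracking the role of $v_1$ as the common basepoint of the $B_{2g+i-1}$, and confirming agreement with the specific marking and characteristic of Definition \ref{def:markchar}. A related subtle point, flagged in the remark following that definition, is to check that the apparent dependence on the singled-out $v_1$ ultimately drops out of the objects we build from these differentials; this follows from the charge-neutrality $\sum_i\res_{v_i}\omega=0$ for the relevant third-kind differentials together with the Riemann bilinear relations.
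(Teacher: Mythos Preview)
Your identification of the limiting differentials is more explicit than the paper's proof, which simply cites Yamada for both the limits and the $O(\eps)$ estimates; your residue bookkeeping for the new indices $j=2g+i-1$ is correct and arguably clarifies why the $u_j$ of Definition~\ref{def: u_j_all} are the right objects. In that sense your write-up is complementary rather than wrong.

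However, you overlook the one point the paper's proof actually singles out. You write that ``Yamada expresses \dots\ as series in the $\eps_i$'', but Yamada's results are stated and proved for pinching a \emph{single} handle, i.e.\ for one complex parameter $\eps$. What is needed here is the joint estimate $O(\eps)$ in all of $\eps_1,\dots,\eps_n$ simultaneously. The paper closes this gap by observing that Yamada's argument gives holomorphicity in each $\eps_i$ separately, and then invoking Hartogs' theorem on separate holomorphicity to conclude joint holomorphicity (hence the joint $O(\eps)$ bound). Your proposal does not address this, so as written it has a gap precisely at the step ``I would then invoke Yamada to upgrade\dots''. The fix is short, but it is the content of the paper's proof.

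Finally, your last paragraph --- about the dependence on the distinguished point $v_1$ dropping out via charge neutrality and bilinear relations --- is not part of this lemma at all; that issue concerns later constructions built from these differentials, not the convergence statement itself.
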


\begin{proof}
The proofs in \cite{Yamada} are given for the case of pinching one
handle, however the argument proceeds by showing holomorphicity in
each parameter $\eps_{i}.$ Hartogs' theorem on separate holomorphicity
then implies the general result.
\end{proof}
We also have an expansion for the period matrix:
\begin{lem}
\label{lem: limit_period} As $\eps=(\eps_{1},\dots,\eps_{n})\rightarrow0$,
the period matrix $\tau_{\eps}=\tau_{\Meps}$ of $\Meps$ has the
following expansion:

\begin{equation}
\tau_{\eps}=\logmatrix_{\eps}+\tau_{0}+O(\eps),
\end{equation}
where 
\[
\logmatrix_{\eps}=\frac{1}{2}\mathbf{0}_{2g\times2g}\oplus\left(\begin{array}{cccc}
\log(\eps_{1}\eps_{2}) & \log\eps_{1} & \dots & \log\eps_{1}\\
\log\eps_{1} & \log(\eps_{1}\eps_{3}) &  & \vdots\\
\vdots &  & \ddots & \log\eps_{1}\\
\log\eps_{1} & \dots & \log\eps_{1} & \log(\eps_{1}\eps_{n})
\end{array}\right)
\]
 and the symmetric matrix $\tau_{0}$ is given by 
\begin{align*}
(\tau_{0})_{ij}=\begin{cases}
(\tau_{\M_{0}})_{ij} & 1\leq i,j\leq2g;\\
\regint{\int_{v_{1}}^{v_{j+1-2g}}u_{i}}+\regint{\int_{v_{j+1-2g}^{\star}}^{v_{1}^{\star}}u_{i}}, & j>2g.
\end{cases}
\end{align*}
\end{lem}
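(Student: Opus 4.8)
The plan is to evaluate the period matrix entries $(\tau_{\eps})_{ij}=\int_{B_{i}}u_{\Meps,j}$ for the Torelli marking of Definition~\ref{def:markchar}, splitting the computation according to whether the indices exceed $2g$. The only inputs beyond Definition~\ref{def: u_j_all} are the locally uniform convergence $u_{\Meps,j}=u_{j}+O(\eps)$ of Lemma~\ref{lem:abdifasy} away from the marked points, together with a separate analysis of the two pinching annuli crossed by each cycle $B_{2g+k-1}$. For $1\le i,j\le 2g$ the cycle $B_{i}$ may be taken inside $\M$ (for $i\le g$) or $\Mtilde$ (for $g<i\le 2g$) and disjoint from every $\Ann_{m}(\eps_{m})$; there Lemma~\ref{lem:abdifasy} integrates directly to give $\int_{B_{i}}u_{\Meps,j}=\int_{B_{i}}u_{j}+O(\eps)=(\tau_{\M_{0}})_{ij}+O(\eps)$, consistent with $\logmatrix_{\eps}$ vanishing on this block.

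The substance is in the columns $j=2g+k-1>2g$. Here I decompose $B_{2g+k-1}$ into the four arcs prescribed in Definition~\ref{def:markchar}: (i) a path from $\hat v_{1}$ to $\hat v_{k}$ in $\M$, (ii) the crossing of $\Ann_{k}(\eps_{k})$, (iii) the symmetric path from $\hat v_{k}^{\star}$ to $\hat v_{1}^{\star}$ in $\Mtilde$, and (iv) the crossing of $\Ann_{1}(\eps_{1})$. On (i) and (iii) Lemma~\ref{lem:abdifasy} applies and the integrals converge to integrals of $u_{i}$. The key point concerns (ii) and (iv). On $\Ann_{k}(\eps_{k})$ the differential $u_{\Meps,i}$ is genuinely holomorphic, so it has a Laurent expansion in the pinching coordinate $z_{k}$; its $z_{k}^{-1}$-coefficient equals $\frac{1}{2\pi\i}\int_{A_{2g+k-1}}u_{\Meps,i}=\tfrac12\delta_{i,2g+k-1}=\res_{v_{k}}u_{i}$ by the normalization of the $A$-periods. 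Integrating along the straight segment $z_{k}\colon 1\to\eps_{k}$ therefore extracts a logarithmic term $(\res_{v_{k}}u_{i})\log\eps_{k}$ from this single mode, while the nonnegative modes give a convergent contribution and the remaining negative modes are $O(\eps)$; the crossing (iv) contributes $(\res_{v_{1}}u_{i})\log\eps_{1}$ in the same manner.

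Collecting the logarithms over (ii) and (iv) reproduces the $(2g+k-1)$-st column of $\logmatrix_{\eps}$: for $i\le 2g$ the residues vanish and no logarithm appears, whereas for $i=2g+m-1$ the residues $\pm\tfrac12$ of the third-kind differential $u_{i}$ yield $\tfrac12\log(\eps_{1}\eps_{k})$ on the diagonal and $\tfrac12\log\eps_{1}$ off it. The finite remainders assemble into $\regint{\int_{v_{1}}^{v_{k}}u_{i}}+\regint{\int_{v_{k}^{\star}}^{v_{1}^{\star}}u_{i}}$, essentially by construction: the path and local charts entering the definition of the regularized integral were chosen to be exactly the arc (i) appended with $z_{1}^{-1}([0,1])$ and $z_{k}^{-1}([0,1])$ (and symmetrically on $\Mtilde$), so that the difference between the cycle integral (cut off at $|z|=\eps$) and the regularized integral (cut off at $z=x\to 0$) is precisely the logarithmic term already accounted for. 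This identifies the finite part with $(\tau_{0})_{i,2g+k-1}$. Symmetry of the outcome, in particular the equality of the two expressions for $(\tau_{0})_{2g+k-1,2g+m-1}$ under interchange of $k$ and $m$, follows either from the symmetry of $\tau_{\eps}$ itself or from Riemann's reciprocity for differentials of the third kind.

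The main obstacle is the control inside the collapsing annuli in arcs (ii) and (iv), where Lemma~\ref{lem:abdifasy} does not apply since $\Ann_{k}(\eps_{k})$ shrinks onto $v_{k}$. One must show that, apart from the $z_{k}^{-1}$-mode fixed by the normalization, the singular part of $u_{\Meps,i}$ on the annulus is negligible as $\eps\to 0$; this is exactly the content of Yamada's expansions \cite{Yamada}, and the mechanism parallels \cite[Section 4]{BIVW}. The only structural novelty relative to \cite{BIVW} is that removing a handle here disconnects $\M_{0}=\M\sqcup\Mtilde$, so the two surviving regularized integrals live on the two separate components $\M$ and $\Mtilde$; once the regularization conventions are aligned as above, this is a matter of bookkeeping rather than a genuine difficulty.
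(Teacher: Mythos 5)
Your skeleton matches the paper's: the same four-arc decomposition of $B_{2g+k-1}$, the same identification of the logarithmic terms via the residues/$A$-period normalization, and the same assembly of the finite parts into regularized integrals. But there is a genuine error in the key estimate inside the collapsing annuli. You claim that, after removing the $z_{k}^{-1}$-mode, ``the nonnegative modes give a convergent contribution and the remaining negative modes are $O(\eps)$'' when integrating along the full segment $z_{k}\colon 1\to\eps_{k}$. This is false. Writing $u_{\Meps,i}=\sum_{m}a_{m}z_{k}^{m}\,dz_{k}=\sum_{m}b_{m}(z_{k}^{\star})^{m}\,dz_{k}^{\star}$ on the neck, the gluing relation $z_{k}z_{k}^{\star}=\eps_{k}$ forces $a_{m}=-b_{-m-2}\,\eps_{k}^{-m-1}$; in particular $a_{-2}=-b_{0}\eps_{k}$, and the $m=-2$ mode contributes $a_{-2}\bigl(1-\eps_{k}^{-1}\bigr)=b_{0}+O(\eps_{k})$ to the crossing integral, with $b_{0}$ converging to the (generically nonzero) constant Laurent coefficient of $u_{i}$ at $v_{k}^{\star}$. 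The same happens for every $m\leq-2$: these modes are exactly the nonnegative modes seen from the $\Mtilde$ side of the neck, and together they carry the entire $O(1)$ finite part accrued on that side. Discarding them as $O(\eps)$ loses precisely the term $\regint{\int_{v_{k}^{\star}}^{v_{1}^{\star}}u_{i}}$ (and its analogue at $v_{1}$), so your subsequent assertion that the finite remainders assemble into the sum of the two regularized integrals ``essentially by construction'' does not follow from your mode analysis — it is the step where the two-sidedness of the neck must actually be confronted.

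Two further points. First, deferring the interior control to ``Yamada's expansions'' is not covered by what the paper imports: Lemma \ref{lem:abdifasy} only gives locally uniform convergence on $\M_{0}\setminus\{v_{1},v_{1}^{\star},\dots\}$, i.e.\ away from the necks, and the paper explicitly chooses to give a self-contained argument rather than invoke Yamada's finer neck estimates. The paper's device is to introduce the single-valued primitive $\Theta_{w}^{\eps}(z)=\int_{w}^{z}\bigl(u_{\Meps,i}+\tfrac{dz_{1}}{2z_{1}}\bigr)$ (which also handles the multivaluedness caused by the $\pi\i$ period around the neck), represent the crossing integral as $\frac{1}{\pi\i}\bigl(\int_{C_{\rho_{1}}}\Theta_{\hat{v}_{1}}^{\eps}u_{\Meps,i}-\int_{C_{\rho_{2}}}\Theta_{\hat{v}_{1}^{\star}}^{\eps}u_{\Meps,i}\bigr)$ using $\int_{C_{\rho}}u_{\Meps,i}=-\pi\i$, and then choose \emph{different} radii so that each contour sits on an outer boundary circle of the annulus, where Lemma \ref{lem:abdifasy} does apply; evaluating by residues then produces one regularized integral from each side. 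If you want to keep your Laurent-mode formulation, you must at minimum split the crossing at the equator $|z_{k}|=|\eps_{k}|^{1/2}$ and run the estimate in the coordinate $z_{k}$ on one half and $z_{k}^{\star}$ on the other. Second, a small bookkeeping gap: on $\Ann_{1}(\eps_{1})$ there is no $A$-cycle of the marking encircling $v_{1}$, so the $z_{1}^{-1}$-coefficient is not given directly by the normalization; one needs $\int_{C_{\rho}}u_{\Meps,i}=-\sum_{k}\int_{A_{2g+k-1}}u_{\Meps,i}=-\pi\i$ by homology, and then the orientation of the crossing (from $\hat{v}_{1}^{\star}$ to $\hat{v}_{1}$) to get $+\tfrac12\log\eps_{1}$ rather than $-\tfrac12\log\eps_{1}$.
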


\begin{proof}
This lemma is similar to \cite[Cor. 5, Lem. 5]{Yamada}, where the
case of one pinched handle is done, and can probably be reduced thereto
using Hartogs' separate holomorphicity theorem. We prefer to give
a self-contained proof which is simpler than one in \cite{Yamada}
since we only care about the expansion up to a constant term. 

The $1\leq i,j\leq2g$ part is clear, since in the case, $u_{\M_{\eps},i}\to u_{\M_{0},i}$
uniformly on $B_{j},$ so we proceed by considering $i,j>2g.$ In
this case, we need to compute the asymptotics of the integral $\int_{B_{j}}u_{\M_{\eps},i},$
where $B_{j}$ goes through the pinching annuli at $v_{1}$ and $v_{k},$
where $k=j+1-2g.$ We write 
\begin{equation}
\int_{B_{j}}u_{\M_{\eps},i}=\left(\int_{\hat{v}_{1}}^{\hat{v}_{k}}+\int_{\hat{v}_{k}}^{\hat{v}_{k}^{\star}}+\int_{\hat{v}_{k}^{\star}}^{\hat{v}_{1}^{\star}}+\int_{\hat{v}_{1}^{\star}}^{\hat{v}_{1}}\right)u_{\M_{\eps},i}.\label{eq: decomp}
\end{equation}
The first and the third integral are well-behaved as $\eps_{i}\to0$
and simply converge to the integrals of $u_{i}$ over the corresponding
paths. We proceed to estimating the last integral, i.e., the one passing
through the pinching region. For $0<\rho<2,$ denote $C_{\rho}:=\{z_{1}^{-1}(\rho e^{i\theta}):0\leq\theta<2\pi\}$,
and recall that $\int_{C_{\rho}}u_{\M_{\eps},i}=-\pi\i$. Hence, for
any $w\in\mathcal{A}_{1}(\eps_{i}),$
\[
\Theta_{w}^{\eps}(z):=\int_{w}^{z}\left(u_{\M_{\eps},i}+\frac{dz_{1}}{2z_{1}}\right)=\int_{w}^{z}\left(u_{\M_{\eps},i}-\frac{dz_{1}^{\star}}{2z_{1}^{\star}}\right),
\]
is a \emph{single-valued} holomorphic function. Taking into account
that $\int_{\hat{v}_{1}^{\star}}^{\hat{v}_{1}}\left(u_{\M_{\eps},i}+\frac{dz_{1}}{2z_{1}}\right)=\Theta_{\hat{v}_{1}^{\star}}^{\eps}(z)-\Theta_{\hat{v}_{1}}^{\eps}(z)$
for all $z\in\mathcal{A}_{1}(\eps_{i}),$ we can write, for any $\rho\in(\eps_{i},1)$
\begin{multline*}
\int_{\hat{v}_{1}^{\star}}^{\hat{v}_{1}}\left(u_{\M_{\eps},i}+\frac{dz_{1}}{2z_{1}}\right)=-\frac{1}{\pi\i}\int_{C_{\rho}}\int_{\hat{v}_{1}^{\star}}^{\hat{v}_{1}}\left(u_{\M_{\eps},i}+\frac{dz_{1}}{2z_{1}}\right)u_{\M_{\eps},i}\\
=\frac{1}{\pi\i}\left(\int_{C_{\rho}}\Theta_{\hat{v}_{1}}^{\eps}u_{\M_{\eps},i}-\int_{C_{\rho}}\Theta_{\hat{v}_{1}^{\star}}^{\eps}u_{\M_{\eps},i}\right).
\end{multline*}
We now choose $\rho_{1}=|z_{1}(v_{k})|=1$ in the first integral,
and $\rho_{2}=|z_{1}(v_{k}^{\star})|=\eps_{k}$ in the second one,
so that $C_{\rho_{2}}=-C_{1}^{\star}=\{\left(z_{1}^{\star}\right)^{-1}(\rho e^{-i\theta}):0\leq\theta<2\pi\},$
and note that now, the integrands converge on $C_{1}$ and $C_{1}^{\star}$
respectively. Therefore, 
\[
\int_{\hat{v}_{1}^{\star}}^{\hat{v}_{1}}u_{\M_{\eps},i}=\frac{1}{2}\log(\eps_{1})+\frac{1}{\pi\i}\left(\int_{C_{1}}\Theta_{\hat{v}_{1}}^{0}u_{i}+\int_{C_{1}^{\star}}\Theta_{\hat{v}_{1}^{\star}}^{0}u_{i}\right)+O(\eps),
\]
where $\Theta_{w}^{0}(z)=\int_{w}^{z}\left(u_{i}+\frac{dz_{1}}{2z_{1}}\right)$
and $u_{i}=\frac{1}{2}\omega_{\M_{0},v_{k},v_{1}}+\frac{1}{2}\omega_{\M_{0},v_{1}^{\star},v_{k}^{\star}}.$
Now, recalling that $u_{i}$ is meromorphic on $\Omega_{1}\cup\Omega_{1}^{\star}$
with a simple pole at $v_{1},v_{1}^{\star}$ of residue $\mp\frac{1}{2},$
the above expression evaluates to
\begin{multline*}
\int_{\hat{v}_{1}^{\star}}^{\hat{v}_{1}}u_{\M_{\eps},i}=\frac{1}{2}\log\eps_{1}-\Theta_{\hat{v}_{1}}^{0}(v_{1})+\Theta_{\hat{v}_{1}^{\star}}^{0}(v_{1}^{\star})+O(\eps)\\
=\frac{1}{2}\log\eps_{1}-\lim_{x\to0}\left(\int_{\hat{v}_{1}}^{z_{1}^{-1}(x)}u_{i}+\frac{1}{2}\log x\right)+\lim_{x\to0}\left(\int_{\hat{v}_{1}^{\star}}^{(z_{1}^{\star})^{-1}(x)}u_{i}-\frac{1}{2}\log x\right)+O(\eps).
\end{multline*}
If $i=j,$ we treat the second term in (\ref{eq: decomp}) in exactly
the same way, leading to 
\[
\int_{\hat{v}_{k}}^{\hat{v}_{k}^{\star}}u_{\M_{\eps},i}=\frac{1}{2}\log\eps_{k}+\lim_{x\to0}\left(\int_{\hat{v}_{k}}^{z_{k}^{-1}(x)}u_{i}-\frac{1}{2}\log x\right)-\lim_{x\to0}\left(\int_{\hat{v}_{k}^{\star}}^{(z_{k}^{\star})^{-1}(x)}u_{i}+\frac{1}{2}\log x\right)+O(\eps).
\]
If $i\neq j,$ the function $\int_{w}^{z}u_{\M_{\eps},i}$ is itself
single-valued on $\Ann_{k}(\eps_{k}),$ so plugging it the place of
$\Theta_{w}^{\eps}(z)$ in the above argument and taking into account
that $u_{i}$ does not have a pole at $v_{k},v_{k}^{\star}$ leads
to 
\[
\int_{\hat{v}_{k}^{\star}}^{\hat{v}_{k}}u_{\M_{\eps},i}=\int_{\hat{v}_{k}^{\star}}^{v_{k}^{\star}}u_{i}+\int_{v_{k}}^{\hat{v}_{k}}u_{i}+O(\eps).
\]
 When $i\leq2g,$ a similar analysis applies to both pinching regions.
In all cases, putting together the four contributions to (\ref{eq: decomp})
gives 

\[
\int_{B_{j}}u_{\M_{\eps},i}=\frac{1}{2}\log\eps_{1}+\frac{1}{2}\delta_{ij}\log\eps_{j+1-2g}+\left(\int_{v_{1}}^{v_{j+1-2g}}u_{i}\right)_{\mathrm{reg}}+\left(\int_{v_{j+1-2g}^{\star}}^{v_{1}^{\star}}u_{i}\right)_{\mathrm{reg}}+O(\eps)
\]
\end{proof}
\begin{rem}
\label{rem: symmetries}The above Lemmas \ref{lem: limit_Abelian}
and \ref{lem: limit_period} do not use in any way that $\Mtilde$
is a copy of $\M$ with complex structure inverted; we now outline
how this symmetry affect the resulting expressions. Let $\inv:P\mapsto P^{\star}$
denotes the anti-holomorphic involution on $\M_{\eps},$ mapping a
point on $\M$ to the corresponding point on $\Mtilde$ and vice versa.
If $\omega$ is a holomorphic differential on $\M_{\eps},$ then its
pullback $\inv^{*}\omega$ is anti-holomorphic, hence $\overline{\inv^{*}\omega}$
is again holomorphic. This means, in particular, that $u_{i}=-\overline{\inv^{*}u_{i-g}}$
for $g+1\leq i\leq2g,$ the $-$ sign coming from conjugating $\i$
in the normalization of $u_{i}.$ Therefore, for $g+1\leq i,j\leq2g$,
we have $\tau_{ij}=\overline{\tau_{i-g,j-g}}$. Similarly, for $2g+1\leq i,j\leq2g+n-1,$
we have $u_{i}=-\overline{\inv^{*}u_{i}},$ and hence in this case,
$\tau_{ij}=2\re\left(\int_{v_{1}}^{v_{j+1-2g}}u_{i}\right)_{\mathrm{reg}}.$
Finally, using remark after Definition \ref{def: u_j_all}, we also
see that $\tau_{ij}=0$ for $1\leq i\leq g$ and $g+1\leq j\leq2g,$
and also we can write 
\[
\tau_{ij}=\begin{cases}
\left(\int_{v_{1}}^{v_{j+1-2g}}u_{i}\right), & 1\leq i\leq g,2g+1\leq j\leq2g+n-1,\\
\overline{\left(\int_{v_{1}}^{v_{j+1-2g}}u_{i-g}\right),} & g+1\leq i\leq2g,2g+1\leq j\leq2g+n-1.
\end{cases}
\]
\end{rem}

\subsection{The limiting form of (\ref{eq: bosonization}).}

In this section, we will obtain a limit of the right-hand side of
\eqref{eq: bosonization}. Recall from Definition \ref{def:markchar}
the vectors $M,N\in\{0,1\}^{2g+n-1}$ that fix the theta characteristics
$H$. For $1\leq p\leq2g+n-1$, we introduce the relation $s_{p}=2m_{p}+N_{p}$
between the parameters $m_{p}$ and $s_{p}$. We denote 
\begin{align*}
\setS & :=(2\Z+N_{1})\times\dots\times(2\Z+N_{2g+n-1})\\
 & =(2\Z+N_{1})^{2}\times(2\Z+N_{2})^{2}\times\dots\times(2\Z+N_{g})^{2}\times(2\Z+1)^{n-1}\\
\setSn & :=\prod_{i=1}^{g}(2\Z+N_{i})^{2}\times\{s_{1},\cdots,s_{n-1}|s_{j}\in\{\pm1\}^{n-1},\sum_{j=1}^{n-1}s_{j}\in\{\pm1\}\},
\end{align*}
so that as $m$ runs through $\Z^{2g+n-1}$, $s$ runs through $\setS$. 
\begin{lem}
\label{limitlemma}

Assuming that $\sum_{s\in\setSn}\exp(Q_{\tau_{0}}(s)+(\pi\i/2)(s\cdot M)))\neq0$,
as $\eps\to0$, the right-hand side of \eqref{eq: bosonization} converges,
locally uniformly, to

\begin{align}
\beta_{\Mhat}(P,Q)+\sum_{k,l=1}^{2g+n-1}\frac{\sum_{s\in\setSn}s_{k}s_{l}\exp(Q_{\tau_{0}}(s)+(\pi\i/2)(s\cdot M)))}{\sum_{s\in\setSn}\exp(Q_{\tau_{0}}(s)+(\pi\i/2)(s\cdot M)))}u_{k}(P)u_{l}(Q),\label{eq: RHS_limit}
\end{align}
where $u_{k}$ and $\tau_{0}$ are as given in Definition \ref{def: u_j_all}
and Lemma \ref{lem: limit_period}. 
\end{lem}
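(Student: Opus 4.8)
The plan is to apply the Hejhal--Fay identity \eqref{eq: bosonization} on the glued surface $\Meps$, which has genus $2g+n-1$, and to pass to the limit $\eps\to0$ term by term. The differential of the second kind $\beta_{\Meps}$ and the differentials $u_{\Meps,i}$ are already controlled by Lemma~\ref{lem: limit_Abelian}, which gives $\beta_{\Meps}\to\beta_{\Mhat}$ and $u_{\Meps,i}\to u_i$ locally uniformly away from the marked points. Thus the problem reduces to computing, for each pair $k,l$, the limit of the scalar coefficient
\[
\frac{\partial_{z_k}\partial_{z_l}\theta_{\tau_\eps}(0;H)}{\theta_{\tau_\eps}(0;H)},
\]
where the period matrix $\tau_\eps=\tau_{\Meps}$ is governed by Lemma~\ref{lem: limit_period}. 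This mirrors the blueprint of \cite[Section 4]{BIVW}.

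Next I would write the theta function as a lattice sum. With the half-integer characteristic $H=\{M,N\}$ of Definition~\ref{def:markchar} and the substitution $s=2m+N$, the summation index $s$ ranges over $\setS$, and $\theta_{\tau_\eps}(0;H)=\sum_{s\in\setS}\exp(Q_{\tau_\eps}(s)+\tfrac{\pi\i}{2}(s\cdot M))$, where $Q_\tau(s)$ is the quadratic form in $s$ that is \emph{linear} in the entries of $\tau$. In the normalization of \eqref{eq: bosonization}, differentiating twice in the theta argument brings down the polynomial weight $s_ks_l$, so the numerator is the same sum weighted by $s_ks_l$. The crucial input is the expansion $\tau_\eps=\logmatrix_\eps+\tau_0+O(\eps)$ of Lemma~\ref{lem: limit_period}; by linearity of $Q$ in $\tau$ I split $Q_{\tau_\eps}(s)=Q_{\logmatrix_\eps}(s)+Q_{\tau_0}(s)+O(\eps|s|^2)$, which isolates all of the $\eps$-divergence in $Q_{\logmatrix_\eps}(s)$.

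The heart of the argument is a dominant-term analysis of $Q_{\logmatrix_\eps}$. Writing $s=(s',s'')$ according to the block structure ($s'$ the first $2g$ coordinates, $s''$ the last $n-1$), the vanishing $2g\times2g$ block of $\logmatrix_\eps$ leaves $s'$ unconstrained, while the bottom-right block contributes a real term equal to a positive multiple of $\log\eps_1\,(\sum_k s''_k)^2+\sum_k\log\eps_{k+1}\,(s''_k)^2$, with each $\log\eps_i\to-\infty$. Since every $s''_k$ is an odd integer, $(s''_k)^2\ge1$, and because $n$ is even (so $n-1$ is odd) also $(\sum_k s''_k)^2\ge1$; hence every coordinate of the exponent vector is bounded below, with simultaneous equality exactly when $s''_k\in\{\pm1\}$ and $\sum_k s''_k\in\{\pm1\}$. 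These are precisely the configurations retained by $\setSn$, and they all share one common divergent prefactor, a fixed power of $\prod_i\eps_i$ (this is the ``charge neutrality'' condition), which cancels between numerator and denominator. Every other $s$ carries a strictly positive additional power of some $\eps_i$ and is therefore suppressed, the polynomial weight $s_ks_l$ being harmless against this. Dividing through by the common prefactor yields the claimed ratio over $\setSn$, the hypothesis ensuring that the limiting denominator is nonzero.

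The step I expect to be the main obstacle is justifying the interchange of the limit with the infinite summation. I would do this by dominated convergence: the suppression of the non-dominant $s''$ is super-geometric, since the excess exponents grow quadratically in $s''$, which furnishes a summable majorant uniform for small $\eps$; meanwhile the sum over the unconstrained $s'$ is an absolutely convergent theta-type series governed by the fixed matrix $\tau_{\M_0}$, on which the $O(\eps)$ perturbation of $\tau_\eps$ acts continuously. Feeding the resulting limit of the coefficients back into \eqref{eq: bosonization}, together with Lemma~\ref{lem: limit_Abelian}, gives convergence of the whole right-hand side to \eqref{eq: RHS_limit}, locally uniformly.
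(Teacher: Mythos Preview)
Your proposal is correct and follows essentially the same approach as the paper: both reduce to the theta-coefficient ratio via Lemma~\ref{lem: limit_Abelian}, expand $Q_{\tau_\eps}$ using Lemma~\ref{lem: limit_period}, identify the dominant terms as those in $\setSn$ via the quadratic growth of the $\logmatrix_\eps$-contribution, and cancel the common divergent prefactor. The only cosmetic difference is that the paper sets all $\eps_i=\eps$ to collapse the exponent into a single function $d(s)$, whereas you keep the $\eps_i$ separate; your treatment of the interchange of limit and sum is also spelled out more explicitly than the paper's reference to \cite{BIVW}.
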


\begin{proof}
The proof is very similar to the proof of Lemma 4.9 in \cite{BIVW}.
We will shrink the handles at the same rate, re-denoting $\eps_{i}:=\eps$
for all $i$. Due to Lemma \ref{lem: limit_Abelian}, we have $\beta_{\Meps}\to\beta_{\Mhat}$
and $u_{\Meps,j}\to u_{j}$. Thus, we only have to work out the convergence
of the coefficients $\pa_{z_{p}}\pa_{z_{q}}\theta_{\tau_{\eps}}(0;H)/\theta_{\tau_{\eps}}(0;H)$,
where $\tau_{\eps}:=\tau_{\M_{\eps}}.$ By differentiating the definition
of the theta function with half-integer theta characteristic $H$,
\[
\theta_{\tau}(z,H)=\sum_{s\in S}\exp(Q_{\tau}(s)+s\cdot(z+(\pi\i/2)M)),
\]
 where $Q_{\tau}(s)=\frac{1}{4}\sum_{i,j}\tau_{ij}s_{i}s_{j}$, we
get

\begin{equation}
\frac{\pa_{z_{p}}\pa_{z_{q}}\theta_{\tau_{\eps}}(0;H)}{\theta_{\tau_{\eps}}(0;H)}=\frac{\sum_{s\in\setS}s_{k}s_{l}\exp(Q_{\tau_{\eps}}(s)+(\pi\i/2)(s\cdot M))}{\sum_{s\in\setS}\exp(Q_{\tau_{\eps}}(s)+(\pi\i/2)(s\cdot M))},\label{eq: theta ratio}
\end{equation}
where $M$ is as in Definition \ref{def:markchar}. We now apply Lemma
\ref{lem: limit_period} to obtain 
\begin{equation}
\exp(Q_{\tau_{\eps}}(s))=\eps^{d(s)}\exp\left(Q_{\tau_{0}+O(\eps)}(s)\right),\label{eq: expA}
\end{equation}
where 
\begin{equation}
d(s)=2\left(\sum_{k=2g+1}^{2g+n-1}\frac{s_{k}^{2}}{8}\right)+\sum_{2g+1\leq i<j\leq2g+n-1}\frac{s_{i}s_{j}}{4}=\sum_{k=2g+1}^{2g+n-1}\frac{s_{k}^{2}}{8}+\frac{1}{8}\left(\sum_{p=2g+1}^{2g+n-1}s_{k}\right)^{2}.\label{eq: d(s)}
\end{equation}
Recalling that $n$ is even, the minimal value of $d(s)$ is $d_{0}=\frac{n-1}{8}+\frac{1}{8}=\frac{n}{8},$
attained if and only if $s\in S_{0}.$We can write 
\[
\frac{\pa_{z_{p}}\pa_{z_{q}}\theta_{\tau_{\eps}}(0;H)}{\theta_{\tau_{\eps}}(0;H)}=\frac{\sum_{s\in\setS}\eps^{d(s)-d_{0}}s_{k}s_{l}\exp(Q_{\tau_{0}+O(\eps)}(s)+(\pi\i/2)(s\cdot M))}{\sum_{s\in\setS}\eps^{d(s)-d_{0}}\exp(Q_{\tau_{0}+O(\eps)}(s)+(\pi\i/2)(s\cdot M))}
\]
If we take the limit $\eps\to0$ in the numerator and the denominator,
the terms with $d(s)\neq d_{0}$ all tend to zero, leading to (\eqref{eq: RHS_limit}).
The exchange of the (infinite) summation and the passage to the limit
$\eps\to0$ is justified as in the proof of Lemma 4.9 in \cite{BIVW}.
\end{proof}
\begin{rem}
The expression $\sum_{s\in\setS}\exp(Q_{\tau_{\eps}}(s)+(\pi\i/2)(s\cdot M)))$
is equal to $\theta_{\tau_{\eps}}(0,H).$ Hence, if $\sum_{s\in\setSn}\exp(Q_{\tau_{0}}(s)+(\pi\i/2)(s\cdot M)))\neq0,$
then also $\theta_{\tau_{\eps}}(0,H)\neq0$ for $\eps$ small enough,
since the former is the leading term of the asymptotics of the latter.
By the results of \cite{Hejhal}, this implies that $\Lambda_{\M_{\eps},\SpStr_{\eps}}$
exists and is unique, i.e., the spin line bundle $\SpStr_{\eps}$
does not have non-trivial holomorphic sections. In remark \ref{rem: uniqueness}
below, we will sketch an alternative, self-contained proof of the
latter statement.
\end{rem}

We now consider the limits of linear combinations 
\begin{equation}
\hd_{\M_{\eps},Q}^{[\eta]}(\cdot)=\bar{\eta}\Lambda_{\M_{\eps},\SpStr_{\eps}}(\cdot,Q)+\eta\Lambda_{\M_{\eps},\SpStr_{\eps}}(\cdot,Q^{\star})\label{eq: lincomb}
\end{equation}
where $\eta\in\C,$ $Q\in\M$ and $Q^{\star}=\inv(Q)\in\Mtilde.$
This is a half-differential with respect to the first argument, but
to define it properly, we need to be careful about the choices of
the coordinate charts at $Q$ and $Q^{\star}.$ We consider a \emph{simply-connected}
symmetric domain $\mathcal{U}=\inv(\mathcal{U})$ that intersects
$\Ann_{\eps_{1}}(v_{1})$ and no other pinching annuli, and contains
$Q$ and $Q^{\star}.$ We let $z:\mathcal{U}\to\C$ be a local coordinate
such that $z(\inv(\cdot))\equiv\overline{z(\cdot)},$ and we define
$\hd_{\M_{\eps},Q,z}^{[\eta]}(\cdot)$ by (\ref{eq: lincomb}), using
the coordinate $z$ for the second argument of the Szegő kernels.
In other words, given $Q$, $\hd_{\M_{\eps}}^{[\eta]}(Q,\cdot)$ is
the unique meromorphic half-differential on $\M_{\eps}$ with simple
poles as $Q$ and $Q^{\star}$, such that its elements with respect
to the chart $z$ satisfy 
\begin{align}
\frac{\hd_{\M_{\eps}Q,z}^{[\eta]}(P)}{\sqrt{dz(P)}}=\frac{\bar{\eta}}{z(P)-z(Q)}+O(1); & \quad P\to Q,\label{eq: F_M_eps_expansion}\\
\frac{\hd_{\M_{\eps},Q,z}^{[\eta]}(P)}{\sqrt{dz(P)}}=\frac{\eta}{z(P)-z(Q^{\star})}+O(1), & \quad P\to Q^{\star}.\label{eq: F_M_eps_expansion_2}
\end{align}
The uniqueness follows from the fact that for $\eps_{i}$ small enough,
the spin line bundle corresponding to $\SpStr_{\eps}$ has no holomorphic
section. 
\begin{lem}
\label{lem: limit_LHS} Assuming that $\sum_{s\in\setSn}\exp(Q_{\tau_{0}}(s)+(\pi\i/2)(s\cdot M)))\neq0$,
as $\eps\to0,$ the linear combination $\hd_{\M_{\eps}Q,z}^{[\eta]}(\cdot)$
converges uniformly on $\M\setminus\{v_{1},\dots,v_{n},Q\}$ to (the
unique) holomorphic half-differential $\hd_{\M,Q,z}^{[\eta]}(\cdot)$
corresponding to the spin structure $\SpStr$ satisfying, for the
local coordinate $z$ as above and any local coordinate $w$, 
\begin{equation}
\frac{\hd_{\M,Q,z}^{[\eta]}(P)}{\sqrt{dz(P)}}=\frac{\bar{\eta}}{z(P)-z(Q)}+O(1),\quad P\to Q;\label{eq: spinor_pole}
\end{equation}
\begin{equation}
\frac{\hd_{\M,Q,z}^{[\eta]}(P)}{\sqrt{dw(P)}}=\frac{c_{i}e^{\i\pi/4}}{\sqrt{w(P)-w(v_{i})}}+O(1),\quad P\to v_{i};\label{eq: spinor_v_i}
\end{equation}
for some coefficients $c_{i}\in\R.$ 
\end{lem}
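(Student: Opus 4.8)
The plan is to read off both the convergence and the limiting singularity structure from the already-established behaviour of the \emph{squared} Szegő kernels, isolating the reality of the coefficients $c_i$ as the one genuinely global point. First I would dispose of \textbf{uniqueness}, which is the continuous counterpart of Lemma~\ref{lem: no_kernel} and of the ``moreover'' part of Theorem~\ref{thm: convergence of the observable}: if two half-differentials satisfied \eqref{eq: spinor_pole}--\eqref{eq: spinor_v_i}, their difference $g$ would be holomorphic for $\SpStr$ on $\M\setminus\{v_1,\dots,v_n\}$, with no pole at $Q$ and at worst square-root singularities at the $v_i$ with \emph{real} coefficients. Then $g^2$ is a meromorphic differential on $\M$ with at worst simple poles at the $v_i$ and residues $\res_{v_i}g^2=\i c_i^2$; since residues of a meromorphic differential sum to zero and $c_i\in\R$, one gets $\sum_i c_i^2=0$, hence all $c_i=0$, so $g^2$ is holomorphic and $g$ a holomorphic section of $\SpStr$. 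Under the standing hypothesis $\sum_{s\in\setSn}\exp(Q_{\tau_0}(s)+(\pi\i/2)(s\cdot M))\neq0$ no such nonzero section exists (cf.\ the remark following Lemma~\ref{limitlemma}), so $g\equiv0$.

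For \textbf{existence and convergence} I would argue by normal families. Applying the Hejhal--Fay identity \eqref{eq: bosonization} on $\M_{\eps}$ to $\Lambda_{\M_{\eps},\SpStr_{\eps}}(\cdot,Q)^2$ and $\Lambda_{\M_{\eps},\SpStr_{\eps}}(\cdot,Q^\star)^2$ and invoking Lemmas~\ref{lem: limit_Abelian}, \ref{lem: limit_period} and~\ref{limitlemma}, these squares converge, locally uniformly on $\M\setminus\{v_1,\dots,v_n,Q\}$, to the explicit differentials \eqref{eq: RHS_limit} (second argument $Q$, resp.\ $Q^\star$); inspecting \eqref{eq: RHS_limit} and Definition~\ref{def: u_j_all} shows these limits have at worst simple poles at the $v_i$ (from $u_{2g+i-1}=\tfrac12\omega_{\M_0,v_i,v_1}+\dots$) and a double pole at $Q$ from $\beta_{\M_0}$. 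Taking moduli, $|\Lambda_{\M_{\eps},\SpStr_{\eps}}(\cdot,Q)|$ and $|\Lambda_{\M_{\eps},\SpStr_{\eps}}(\cdot,Q^\star)|$ are bounded on compacts away from $\{v_1,\dots,v_n,Q\}$ and grow at most like $|z-v_i|^{-1/2}$ near $v_i$, uniformly in $\eps$; hence the family $\hd^{[\eta]}_{\M_{\eps},Q,z}=\bar\eta\,\Lambda_{\M_{\eps},\SpStr_{\eps}}(\cdot,Q)+\eta\,\Lambda_{\M_{\eps},\SpStr_{\eps}}(\cdot,Q^\star)$ from \eqref{eq: lincomb} is normal there. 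Any subsequential limit $\hd^{[\eta]}_{\M,Q,z}$ is a holomorphic half-differential for $\SpStr$ (the characteristic being $H$ restricted to the cycles $A_i,B_i$, $1\le i\le g$, of Definition~\ref{def:markchar}), carrying the $-1$ monodromy around each $v_i$ built into the characteristic of Definition~\ref{def:markchar}, with the simple pole \eqref{eq: spinor_pole} of coefficient $\bar\eta$ (inherited locally from $\bar\eta\,\Lambda(\cdot,Q)$, the summand $\eta\,\Lambda(\cdot,Q^\star)$ being regular at $Q$) and at worst square-root singularities $c_ie^{\i\pi/4}(w-w(v_i))^{-1/2}$ at the $v_i$. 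Once $c_i\in\R$ is shown, uniqueness promotes these subsequential limits to a genuine limit.

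The reality $c_i\in\R$ is \textbf{the main obstacle}, being global rather than local. I would exploit the anti-holomorphic involution $\inv\colon P\mapsto P^\star$: with the convention $z\circ\inv=\bar z$, naturality of the Szegő kernel under reversal of the complex structure gives $\Lambda_{\M_{\eps},\SpStr_{\eps}}(\inv P,\inv Q)=\overline{\Lambda_{\M_{\eps},\SpStr_{\eps}}(P,Q)}$, whence, using $Q^\star=\inv Q$, the symmetry $\hd^{[\eta]}_{\M_{\eps},Q,z}(\inv P)=\overline{\hd^{[\eta]}_{\M_{\eps},Q,z}(P)}$. Passing to the limit (along a subsequence converging on both $\M$ and $\Mtilde$) relates the coefficient $c_i$ on $\M$ to the conjugate of the corresponding coefficient $c_i^\star$ of the companion $\Mtilde$-limit at $v_i^\star$, giving $c_i^\star=-\i\,\overline{c_i}$. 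A second relation comes from \emph{matching across the $i$-th pinched handle}: since $c_i$ and $c_i^\star$ are the two-sided singular coefficients of one and the same $\hd^{[\eta]}_{\M_{\eps},Q}$, expanding in the pinching coordinate with $z_iz_i^\star=\eps_i$ and carrying the leading half-integer mode $z_i^{-1/2}(dz_i)^{1/2}$ through $z_i\mapsto\eps_i/z_i^\star$ yields $c_i^\star=\pm\i\,c_i$. The two relations force $c_i=\overline{c_i}$, i.e.\ $c_i\in\R$, provided the sign of $(-1)^{1/2}$ in the handle computation is tracked consistently with the $e^{\i\pi/4}$ normalisation; this is the analog of the ``charge-neutrality'' computation of \cite[Section~4]{BIVW}, the sole difference being that pinching the last handle disconnects the surface here. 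Alternatively one can read $c_i\in\R$ off the residues of \eqref{eq: RHS_limit} at $v_i$ together with the conjugation symmetries of $\tau_0$ and the $u_k$ recorded in Remark~\ref{rem: symmetries}. I expect the delicate sign-bookkeeping in this handle-matching step to be the crux of the whole argument.
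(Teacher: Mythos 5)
Your treatment of uniqueness (residues of $g^2$ sum to zero, reality of the $c_i$ forces them all to vanish, then no holomorphic section of $\SpStr$) and of existence/convergence (normal families from the locally uniform convergence of the squared kernels via \eqref{eq: bosonization} and Lemma~\ref{limitlemma}, then uniqueness upgrades subsequential limits to a limit) matches the paper's argument; the paper compresses the convergence step to ``follows readily'' and your normal-families fleshing-out is a reasonable reading of it. The involution symmetry $\overline{\inv^{*}\hd}=\hd$ is also exactly the paper's starting point for the singularity at $v_{1}$.

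The genuine gap is in your mechanism for $c_{i}\in\R$ when $i\geq2$. Your ``matching across the $i$-th pinched handle'' produces the relation $c_{i}^{\star}=\pm\i\,c_{i}$ only up to a sign coming from the branch of $(dz_{i}^{\star}/dz_{i})^{1/2}$, and that sign is \emph{not} determined by local data at the $i$-th handle: it is a choice of local trivialization of the spin bundle, and the only intrinsic datum is the monodromy of $\SpStr_{\eps}$ along the cycle $B_{2g+i-1}$, which passes through \emph{both} the $v_{1}$- and $v_{i}$-handles and is encoded by $M_{2g+i-1}=0$ in Definition~\ref{def:markchar}. Since combining your relations (a) and (b) yields $\overline{c_{i}}=\mp c_{i}$, the undetermined sign is precisely the difference between the conclusion $c_{i}\in\R$ and the conclusion $c_{i}\in\i\R$ --- and the remark following the lemma shows both outcomes actually occur, depending on $M_{2g+i-1}$. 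So no amount of local bookkeeping at the $i$-th handle can close this; one must transport information along $B_{2g+i-1}$. This is what the paper does: it first establishes $\int_{C^{(1)}}(\hd_{\M_{\eps},Q,z}^{[\eta]})^{2}\in\R_{\geq0}$ at the distinguished handle from the involution symmetry and positivity on the real locus, and then propagates positivity from the $v_{1}$-equator to the $v_{i}$-equator by computing the winding of $\SpStr_{\eps}$ along $B_{2g+i-1}$ inside a symmetric tubular neighborhood (the condition $\wind_{\SpStr_{\eps}}(B_{2g+i-1})/2\pi=1\bmod2$, halved by the symmetry, forces $(\hd^{[\eta]})^{2}/dw_{i}\in\R_{\geq0}$ on $I_{i}$ given that it holds on $I_{1}$). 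Your fallback suggestion --- reading reality off the residues of \eqref{eq: RHS_limit} --- also does not work as stated, because \eqref{eq: RHS_limit} is the limit of $\Lambda_{\M_{\eps},\SpStr_{\eps}}(\cdot,Q)^{2}$, whereas $(\hd^{[\eta]})^{2}$ contains the cross term $\Lambda(\cdot,Q)\Lambda(\cdot,Q^{\star})$ not covered by the cited identity, and one needs not merely $\res_{v_{i}}(\hd^{[\eta]})^{2}\in\i\R$ but the correct sign, uniformly in $\eta$.
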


\begin{proof}
The locally uniform convergence to \emph{some} half-differential follows
readily from \ref{eq: bosonization} and Lemma \ref{limitlemma}.
It is also clear that windings of $\gamma_{01}$ and $\gamma_{10}$
with respect to $\hd_{\M_{\eps}Q,z}^{[\eta]}$ are preserved under
a passage to the limit, and since for all $\eps>0,$ $\hd_{\M_{\eps}}^{[\eta]}(Q,P)$
has a simple pole of residue $\bar{\eta}$ as $z\to w,$ hence so
will $\hd^{[\eta]}(w,z).$ It remains to prove (\ref{eq: spinor_v_i});
cf. \cite[Lemma 4.3]{BIVW}. Note that (\ref{eq: spinor_v_i}) is
equivalent to $(\hd_{\M,Q,z}^{[\eta]})^{2}$ having at most simple
pole at $v_{i},$ with residue of the form $\i c_{1}^{2}$ for $c_{1}$
real (which is a coordinate-independent condition). The fact that
the poles are at most simple follows from (\ref{eq: RHS_limit}),
hence, we only need to check the argument of the residues. 

We start with $i=1,$ and note that $\left(\overline{\inv^{*}\hd_{\M_{\eps}Q,z}^{[\eta]}}\right)=\hd_{\M_{\eps}Q,z}^{[\eta]}$,
since both sides are meromorphic half-differentials with the same
singularities. Working in the chart $z,$ this boils down to $\hd_{\M_{\eps}Q,z}^{[\eta]}/\sqrt{dz}=\overline{\hd_{\M_{\eps}Q,z}^{[\eta]}(\bar z)}/\sqrt{dz}.$
Hence, for $P$ such that $z\in\R,$ we have $(\hd_{\M_{\eps},Q,z}^{[\eta]}(P))^{2}/dz(P)\in\R_{\geq0},$
and thus $\int_{C^{(1)}}(\hd_{\M_{\eps},Q,z}^{[\eta]})^{2}\in\R_{\geq0},$
where we denote $C^{(i)}=\{z_{i}^{-1}(|\eps_{i}|^{\frac{1}{2}}e^{i\theta}):\theta\in[0,2\pi)\}.$
Therefore, the same is true for the integral over a circle of small
fixed radius around $v_{1},$ and we can pass to a limit to conclude
that the same is true for $(\hd_{\M,Q,z}^{[\eta]})^{2},$and thus
(\ref{eq: spinor_v_i}) for $i=1$ is proven.

To prove the result for $i=2,\dots,n$, we consider a symmetric tubular
neighborhood of $B_{2g+i-1}$, a symmetric \emph{doubly connected}
region $\mathcal{U}_{i}=\inv(\mathcal{U}_{i})$ on $\M_{\eps}$ that
crosses $\Ann_{\eps_{1}}(v_{1})$ and $\Ann_{\eps_{i}}(v_{i})$ and
intersects no other pinching annuli. Let $w_{i}:\mathcal{U}_{i}\to\C$
be a coordinate that maps $\mathcal{U}_{i}$, say, to a round annulus,
satisfying $w_{i}(\inv(\cdot))=\overline{w_{i}(\cdot)},$ and such
that $\M\cap\mathcal{U}_{i}$ is mapped to the upper half-plane. The
``equators'' $\{z_{1}^{-1}(|\eps_{1}|^{\frac{1}{2}}e^{\i\theta}):\theta\in[0,2\pi)\}$
and $\{z_{i}^{-1}(|\eps_{i}|^{\frac{1}{2}}e^{\i\theta}):\theta\in[0,2\pi)\}$
of the pinching regions are mapped to segments $I_{1},I_{i}\subset\R$
respectively. By the correspondence between half-integer theta characteristics
and spin structures, see \cite[Lemma 3.14]{BIVW}, we see that with
our choice of $M_{i},$ we must have $(2\pi)^{-1}\wind_{\SpStr_{\eps}}(B_{2g+n-1})=1\mod2.$
Since $\hd_{\M,Q,z}^{[\eta]}$ is a section of the spin line bundle
corresponding to the spin structure $\SpStr_{\eps},$ we can compute
the winding, working in the coordinate $w_{i},$ as the difference
of the rotation numbers of $\gamma'(t)$ and $\left(\overline{\hd_{\M,Q,z}^{[\eta]}/\sqrt{dw_{i}}}\right)^{2}(\gamma(t))$
where $\gamma$ is any parametrization of the path $w_{i}(B_{2g+n-1}).$
Since the rotation number of $\gamma'(t)$ is $2\pi,$ we conclude
that the rotation number of $\left(\overline{\hd_{\M,Q,z}^{[\eta]}/\sqrt{dw_{i}}}\right)^{2}(\gamma(t))$
along $B_{i}$ is zero modulo $4\pi.$ By symmetry, this means that
the rotation number of $\left(\overline{\hd_{\M,Q,z}^{[\eta]}/\sqrt{dw_{i}}}\right)^{2}(\gamma(t))$
along the ``upper half'' of $w_{i}(B_{2g+n-1})$ is zero modulo
$2\pi.$ But since $\left(\hd_{\M,Q,z}^{[\eta]}\right)^{2}/dw_{i}\in\R_{\geq0}$
on $I_{1},$ this implies that also $\left(\hd_{\M,Q,z}^{[\eta]}\right)^{2}/dw_{i}\in\R_{\geq0}$
on $I_{i}.$ Hence, $\int_{C^{(i)}}(\hd_{\M_{\eps},Q,z}^{[\eta]})^{2}\in\R_{\geq0},$
and we conclude the proof as above. 
\end{proof}
\begin{rem}
By inspecting the above proof, it is easy to see that setting $M_{i}=1$
for some $i$ will lead to (\ref{eq: spinor_v_i}) with $c_{i}\in\i\R$
instead of $c_{i}\in\R$. In terms of the underlying Ising model correlations,
this corresponds to inserting a disorder instead of spin at the corresponding
$v_{i}$ (at least as long a parity constraints allow for a construction
of the corresponding correlation function). 
\end{rem}

\begin{rem}
\label{rem: uniqueness}Assume that $\omega$ is a holomorphic section
of the spin line bundle $\SpStr_{\eps},$ and consider, for $\eta\in\C,$
$\omega^{[\eta]}:=\bar{\eta}\omega+\eta\overline{\inv^{*}\omega}.$
Arguing as above, we see that $\int_{C^{(i)}}\left(\omega^{[\eta]}\right)^{2}\in\R_{\geq0}$
for all $i=1,\dots,n.$ On the other hand, since $\left(\omega^{[\eta]}\right)^{2}$
has no poles, we have $\sum_{i=1}^{n}\int_{C^{(i)}}\left(\omega^{[\eta]}\right)^{2}=0.$
Therefore, $(\omega^{[\eta]})^{2}/dz_{i}$ vanishes identically on
every $C^{(i)},$ and hence by uniqueness of the analytic continuation,
$(\omega^{[\eta]})^{2}$ vanishes identically everywhere, and since
this is true for every $\eta,$ so does $\omega$. 
\end{rem}

\section{Explicit formulae on a torus.}

\label{sec: formulae_on_torus}In this section, we will specialize
the objects given in the previous section to the case $\M=\T_{\omega_{1},\omega_{2}}=\C/\{m\omega_{1}+n\omega_{2},m,n\in\Z\}$,
which will give explicit formulae for the observables $\ffs{a,z}{\vv}{(p,q)}$
and $\ffs{a,z}{\vv}{\star,(p,q)},$ as well as of the spin correlations,
on a continuous torus $\T.$ We make the following observations:
\begin{itemize}
\item The unique normalized Abelian differential of the first kind on $\T_{\omega_{1},\omega_{2}}$
is $u_{1}(z)=\frac{\pi\i}{\omega_{1}}\,dz,$ and the period matrix
consist of the single entry $\tau=\pi\i\frac{\omega_{2}}{\omega_{1}}.$ 
\item Let $\nu=1,2,3,4$ correspond to $(M_{1},N_{1})=(1,1),(0,1),(0,0),(1,0)$
respectively, and denote 
\[
\theta_{\nu}(z)\coloneqq(-1)^{M_{1}N_{1}}\theta_{\tau_{\T_{\omega_{1},\omega_{2}}}}\left(\frac{\pi\i z}{\om 1};H\right).
\]
Then, $\theta_{\nu}(z)$ is a Jacobi theta function, specifically
$\theta_{\nu}(z)\coloneqq\theta_{\nu}(\frac{2\pi z}{\om 1}|\frac{\tau}{\pi\i})$
in the notation of \cite[Chapter 20]{DLMF}.
\item The normalized differential of the third kind can be written as 
\begin{equation}
\omega_{\T_{\omega_{1},\omega_{2}},u,v}(z)=d_{z}\log\left(\frac{\theta_{1}(z-u)}{\theta_{1}(z-v)}\right)=\left(\frac{\theta_{1}'(z-u)}{\theta_{1}(z-u)}-\frac{\theta_{1}'(z-v)}{\theta_{1}(z-v)}\right)dz.\label{eq: torus_thirdkind}
\end{equation}
Indeed, since $\theta_{1}$ has a simple zero at the origin and no
other zeros, the right-hand side has simple poles of residues $\pm1$
at $u,v$ and no other poles. The periodicity of $\theta_{1}$ implies
that the period of the right-hand side is an integer multiple of $2\pi\i,$
but it depends continuously on $u,v$ and vanishes at $u=v.$ 
\item Based on (\ref{eq: torus_thirdkind}), we can now calculate $(\tau_{0})_{i+2,j+2}$
for $2\leq i<j\leq n.$ Note that those entries depend on a choice
of the pinching coordinates, which we take to be $z_{i}=R(z-v_{i}),$
with $R>0$ large enough that the discs $\{|z_{i}|\leq1\}=\{|z-v_{i}|\leq R^{-1}\}$
do not overlap. Recall \ref{eq: u_i_all_2} and Remark \ref{rem: symmetries}.
We then can write 
\begin{multline*}
\regint{\int_{v_{1}}^{v_{j}}\omega_{\T_{\omega_{1},\omega_{2}},v_{i},v_{1}}(z)}\\
=\int_{v_{1}}^{v_{j}}d_{z}\log(\theta_{1}(z-v_{i}))-\int_{v_{1}+R^{-1}}^{v_{j}}d_{z}\log(\theta_{1}(z-v_{1}))-\int_{v_{1}}^{v_{1}+R^{-1}}d_{z}\log\left(\frac{\theta_{1}(z-v_{1})}{z-v_{1}}\right)\\
=\log\left(\frac{\theta_{1}(v_{j}-v_{i})}{\theta_{1}(v_{1}-v_{i})}\right)+\log\left(\frac{\theta_{1}(R^{-1})}{\theta_{1}(v_{j}-v_{1})}\right)+\log\theta_{1}'(0)-\log\left(\frac{\theta_{1}(R^{-1})}{R^{-1}}\right)\\
=\log\left(\frac{\theta_{1}(v_{j}-v_{i})\theta_{1}'(0)}{\theta_{1}(v_{1}-v_{i})\theta_{1}(v_{j}-v_{1})}\right)-\log R,
\end{multline*}
so that $\tau_{i+2},_{j+2}=\log\left|\frac{\theta_{1}(v_{j}-v_{i})\theta_{1}'(0)}{\theta_{1}(v_{1}-v_{i})\theta_{1}(v_{j}-v_{1})}\right|-\log R.$ 
\item The canonical differential of the second kind can be written as
\begin{equation}
B(z,w)\,dzdw=d_{z}d_{w}\log\theta_{1}(z-w)=\frac{\theta_{1}'(z-w)^{2}-\theta_{1}(z-w)\theta''_{1}(z-w)}{(\theta_{1}(z-w))^{2}}\,dzdw,\label{eq: torus_secondkind}
\end{equation}
for example, by taking the limit $u\to v$ in (\ref{eq: torus_thirdkind}).
\end{itemize}
Before turning to deriving explicit formulae for the Ising correlations,
let us set up some notation. Put $\setSp=\{s'\in\{\pm1\}^{n},\sum_{j=1}^{n}s'_{j}=0\}$,
$v_{ij}=v_{i}-v_{j},$ and $\vavg=\frac{\sum_{j=1}^{n}s'_{j}v_{j}}{2}.$
Further, denote 
\[
\ProdTh:=\prod_{1\leq i<j\leq n}\left|\frac{\theta_{1}(v_{ij})}{\theta'_{1}(0)}\right|^{\frac{s'_{i}s'_{j}}{2}};
\]
\begin{equation}
D_{v_{1},\dots,v_{n}}^{\nu}:=\sum_{s'\in\setSp}\left|\theta_{\nu}\left(\vavg\right)\right|^{2}\ProdTh\label{eq: defD}
\end{equation}

\begin{multline}
A_{\vv}^{\nu}(a,z)=\sum_{s'\in\setSp}\left(\pa^{2}\left|\theta_{\nu}\right|^{2}\right)\left(\vavg\right)\ProdTh+\\
+\frac{1}{2}\sum_{s'\in\setSp}\sum_{j=1}^{n}s_{j}'\left(\frac{\theta_{1}'(z-v_{j})}{\theta_{1}(z-v_{j})}+\frac{\theta_{1}'(a-v_{j})}{\theta_{1}(a-v_{j})}\right)\left(\pa\left|\theta_{\nu}\right|^{2}\right)\left(\vavg\right)\ProdTh+\\
+\frac{1}{4}\sum_{s'\in\setSp}\sum_{i,j=1}^{n}\left(s'_{i}s'_{j}\frac{\theta_{1}'(z-v_{i})}{\theta_{1}(z-v_{i})}\frac{\theta_{1}'(a-v_{j})}{\theta_{1}(a-v_{j})}\right)\left|\theta_{\nu}\left(\vavg\right)\right|^{2}\ProdTh
\end{multline}

\begin{multline}
B_{\vv}^{\nu}(a,z)=-\sum_{s'\in\setSp}\left(\bar{\pa}\pa\left|\theta_{\nu}\right|^{2}\right)(\vavg)\ProdTh+\\
-\frac{1}{2}\sum_{s'\in\setSp}\sum_{j=1}^{n}s_{j}'\left(\frac{\theta_{1}'(z-v_{j})}{\theta_{1}(z-v_{j})}+\bar{\left(\frac{\theta_{1}'(a-v_{j})}{\theta_{1}(a-v_{j})}\right)}\right)\left(\pa\left|\theta_{\nu}\right|^{2}\right)\left(\vavg\right)\ProdTh+\\
-\frac{1}{4}\sum_{s'\in\setSp}\sum_{i,j=1}^{n}s'_{i}s'_{j}\left(\frac{\theta_{1}'(z-v_{i})}{\theta_{1}(z-v_{i})}\bar{\left(\frac{\theta_{1}'(a-v_{j})}{\theta_{1}(a-v_{j})}\right)}\right)\left|\theta_{\nu}\left(\vavg\right)\right|^{2}\ProdTh
\end{multline}

\[
\]

\begin{thm}
We have 

\begin{align}
\left(\ffs{a,z}{\vv}{(p,q)}\right)^{2} & =4B(a,z)+4\frac{A_{\vv}^{\nu}(a,z)}{D_{v_{1},\dots,v_{n}}^{\nu}},\label{eq: f_explicit}\\
\left(\ffs{a,z}{\vv}{(*,p,q)}\right)^{2} & =4\frac{B_{\vv}^{\nu}(a,z)}{D_{v_{1},\dots,v_{n}}^{\nu}},\label{eq: f_star_explicit}
\end{align}
\end{thm}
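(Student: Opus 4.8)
The plan is to reduce the statement to the Hejhal--Fay identity (\ref{eq: bosonization}) by identifying the limiting observables with limits of Szegő kernels, and then to carry out the explicit torus computation. By Lemma~\ref{lem: limit_LHS} the half-differential $\hd_{\M,a,z}^{[\eta]}$ satisfies exactly the characterizing properties (\ref{eq: contin_period}), (\ref{eq: contin_pole}), (\ref{eq: contin_sing}) of $\ffs{a,\cdot}{\vv}{[\eta],(p,q)}$, once the spin structure $\SpStr$ (equivalently the characteristic $(M_1,N_1)$, hence the index $\nu$) is matched to $(p,q)$ through the windings fixed in Definition~\ref{def:markchar}. By the uniqueness asserted in Theorem~\ref{thm: convergence of the observable} the two coincide, and combining the decomposition $\ffs{a,z}{\vv}{[\eta],(p,q)}=\tfrac12\bar{\eta}\ffs{a,z}{\vv}{(p,q)}+\tfrac12\eta\ffs{a,z}{\vv}{\star,(p,q)}$ with (\ref{eq: lincomb}) and matching the coefficients of $\bar\eta$ and $\eta$, I obtain
\[
\ffs{a,z}{\vv}{(p,q)}=2\lim_{\eps\to0}\Lambda_{\M_{\eps},\SpStr_{\eps}}(z,a),\qquad \ffs{a,z}{\vv}{\star,(p,q)}=2\lim_{\eps\to0}\Lambda_{\M_{\eps},\SpStr_{\eps}}(z,a^{\star}),
\]
with $z,a\in\M$ in the first case and $z\in\M$, $a^{\star}=\inv(a)\in\Mtilde$ in the second.

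Squaring and applying (\ref{eq: bosonization}) on $\M_{\eps}$, the right-hand side is $\beta_{\M_{\eps}}$ evaluated at the two arguments, plus the theta-ratio sum whose limit is given by Lemma~\ref{limitlemma}. The fundamental differential of the second kind contributes $4\beta_{\Mhat}(z,a)=4B(z,a)$ in the unstarred case, by (\ref{eq: torus_secondkind}), while in the starred case its arguments lie on different components of $\Mhat=\M\sqcup\Mtilde$, so $\beta_{\Mhat}(z,a^{\star})=0$; this already accounts for the $4B$ term in (\ref{eq: f_explicit}) and its absence in (\ref{eq: f_star_explicit}). It remains to identify the limiting theta-ratio $\sum_{k,l}[\,\cdot\,]u_k u_l$ from (\ref{eq: RHS_limit}) with $A_{\vv}^{\nu}/D_{\vv}^{\nu}$ (resp. $B_{\vv}^{\nu}/D_{\vv}^{\nu}$). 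Writing $\sum_{k,l}s_ks_l u_k(z)u_l(a)=U_s(z)U_s(a)$ with $U_s(\cdot)=\sum_k s_k u_k(\cdot)$ turns this ratio into $(D_{\vv}^{\nu})^{-1}\sum_{s\in\setSn}U_s(z)U_s(a)\exp(\dots)$, with weight $\exp(Q_{\tau_{0}}(s)+(\pi\i/2)(s\cdot M))$.

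The computational heart is to evaluate these sums on the torus using the data $u_1=\tfrac{\pi\i}{\omega_1}\,dz$ and $u_{i+1}=\tfrac12\omega_{\M_{0},v_i,v_1}$ of Definition~\ref{def: u_j_all}, the third-kind formula (\ref{eq: torus_thirdkind}), and the period entries of Lemma~\ref{lem: limit_period} and the displayed bullet points. First, the charge-neutral change of variables $\setSp\to\setSn$ sending $s'$ to the handle variables $s_{i+1}=s'_i$ ($i=2,\dots,n$), with $s'_1=-\sum_{i\ge2}s'_i$ and the two genus variables free, makes the handle part of $U_s(z)$ telescope to $\tfrac12\sum_{j=1}^n s'_j\tfrac{\theta_{1}'(z-v_j)}{\theta_{1}(z-v_j)}$. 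Second, summing over the genus variables $s_1,s_2\in 2\Z+N_1$ factorizes the Gaussian weight: using the cross terms $\tau_{1,i}s_1 s_i$ and Remark~\ref{rem: symmetries}, the $s_1$-linear part assembles into $s_1\tfrac{\pi\i}{\omega_1}\vavg$, so the $s_1$-sum produces $\theta_{\nu}(\vavg)$ and the conjugate $s_2$-sum produces $\overline{\theta_{\nu}(\vavg)}$ via $\tau_{22}=\overline{\tau_{11}}$, giving $|\theta_{\nu}(\vavg)|^2$; the off-diagonal handle entries $(\tau_{0})_{i+2,j+2}$ collapse, again by $\sum_j s'_j=0$, to $\ProdTh$, so that $\sum_{s\in\setSn}\exp(\dots)=D_{\vv}^{\nu}$. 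Third, the numerator insertions act as derivatives: an insertion of $s_1$ (through $u_1(z)$) brings down $\pa\theta_{\nu}$, so the three contributions $s_1^2$, $s_1\cdot(\text{handle})$, $(\text{handle})^2$ reproduce exactly the $\pa^2|\theta_{\nu}|^2$, $\tfrac12\sum_j s'_j\big(\tfrac{\theta_{1}'(z-v_j)}{\theta_{1}(z-v_j)}+\tfrac{\theta_{1}'(a-v_j)}{\theta_{1}(a-v_j)}\big)\pa|\theta_{\nu}|^2$, and $\tfrac14\sum_{i,j}s'_is'_j(\dots)|\theta_{\nu}|^2$ terms of $A_{\vv}^{\nu}$. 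In the starred case $a^{\star}\in\Mtilde$ forces $u_1(a^{\star})=0$, $u_2(a^{\star})\ne0$, so the second argument couples to the conjugate genus variable $s_2$; the insertion $s_1s_2$ then yields $\bar\pa\pa|\theta_{\nu}|^2$ together with the conjugated factors $\overline{\theta_{1}'(a-v_j)/\theta_{1}(a-v_j)}$, producing $B_{\vv}^{\nu}$.

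I expect the main obstacle to be precisely this bookkeeping in the last step: tracking the factors $\tfrac12$ and $\tfrac14$, the normalization relating $\theta_{\tau}(\cdot;H)$ to $\theta_{\nu}$ together with the chain-rule factor $\tfrac{\pi\i}{\omega_1}$ coming from $u_1$ (which is exactly what converts a theta-argument derivative into a derivative in the torus coordinate $\vavg$), and verifying that all $v_1$-dependent, $\theta_{1}'(0)$, and $\log R$ contributions cancel by charge neutrality so that the singled-out point $v_1$ drops out of the final symmetric expressions. None of these steps is conceptually hard, but getting every sign and prefactor consistent---in particular the conjugation pattern distinguishing $A_{\vv}^{\nu}$ from $B_{\vv}^{\nu}$, and the sign in $B_{\vv}^{\nu}$ arising from conjugating $\i$ in $u_2$---requires care.
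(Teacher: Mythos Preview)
Your proposal is correct and follows essentially the same approach as the paper: identify $\ffs{a,\cdot}{\vv}{(p,q)}$ and $\ffs{a,\cdot}{\vv}{\star,(p,q)}$ with (twice) the limits of the Szeg\H{o} kernels via Lemma~\ref{lem: limit_LHS} and uniqueness, square and apply (\ref{eq: bosonization}), take the limit using Lemma~\ref{limitlemma}, and then carry out the torus bookkeeping with the period data of Lemma~\ref{lem: limit_period} and Remark~\ref{rem: symmetries}. Your organizing device $U_s(z)U_s(a)=\sum_{k,l}s_ks_lu_k(z)u_l(a)$ is a compact way to package exactly the case split $k=l=1$, $k=1,l\geq3$ (and symmetric), $k,l\geq3$ that the paper performs; the one place to be careful is that the denominator is not literally $D_{\vv}^{\nu}$ but $C'\prod_{i=2}^{n}|\theta_{1}(v_{1i})/\theta_{1}'(0)|^{1/2}\,D_{\vv}^{\nu}$, with the same $s$-independent prefactor appearing in every numerator term and hence cancelling in the ratio---precisely the cancellation you flag in your last paragraph.
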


\begin{rem}
The formulae above only give us the squares of $\ffs{a,z}{\vv}{(p,q)}$ and $\ffs{a,z}{\vv}{(*,p,q)}$, respectively. For the former one, the branch of the square root (on $\hat{\T}_{[v_1,\dots,v_n]}$) is identified by the expansion as $z\to a$. (Note that we only need $\ffs{a,z}{\vv}{(p,q)}$ to compute the spin correlations, and we don't need to worry about the sign, as we will see in the proof of Lemma \ref{cor: spin_compute} below). For the energy correlations, we actually do need to know the sign of $\ffs{a,z}{\vv}{(*,p,q)}$, which can be fixed e.g. from the condition 
$$
\lim_{z\to v_1} e^{-\i\pi/4}(z-v_1)^{\frac12}\ffs{a,z}{\vv}{(p,q)}=\overline{\lim_{z\to v_1} e^{-\i\pi/4}(z-v_1)^{\frac12}\ffs{a,z}{\vv}{\star(p,q)}},
$$
cf. the operator product expansions \cite[Theorem 6.2]{CHI2}.
\end{rem}

\begin{proof}
Working in the standard coordinate $z$ on the torus, Lemma \ref{lem: limit_LHS}
implies that we have that for each $\eta$ the locally uniform convergence
\[
\mathcal{F}_{(\T_{\omega_{1},\omega_{2}})_{\eps},a,z}^{[\eta]}(z)/\sqrt{dz}\to\ffs{a,z}{\vv}{[\eta],(p,q)}.
\]
 (Strictly speaking, we cannot use $z$ as a local coordinate as in
Lemma \ref{lem: limit_LHS} since it conjugates $\inv$ to a reflection
with respect to a circle rather than the real line, but with that
change the same argument applies). Hence, $\Lambda_{(\T_{\omega_{1},\omega_{2}})_{\eps},\SpStr_{\eps}}(z,a)/\sqrt{dz}$
and $\Lambda_{(\T_{\omega_{1},\omega_{2}})_{\eps},\SpStr_{\eps}}(z,a^{\star})/\sqrt{dz}$
converge to $\ffs{a,z}{\vv}{(p,q)}$ and $\ffs{a,z}{\vv}{\star,(p,q)}$
respectively. To prove (\ref{eq: f_explicit}), (\ref{eq: f_star_explicit}),
we need to compare the expressions given above with (\ref{eq: RHS_limit}).
Let's first redefine $s'_{i}\coloneqq s_{i+1}$, for $i=2,3,\cdots,n$,
and $s'_{1}\coloneqq-\sum_{i=2}^{n}s'_{i}$. Now, if $s\in\setS_0$,
then $s'_{i}\in\setSp.$

We proceed by writing the terms in the denominator in (\ref{eq: RHS_limit}),
using Lemma \ref{lem: limit_period}, Remark \ref{rem: symmetries},
and the explicit formulae for $(\tau_{0})_{ij}$ in the torus case
collected at the beginning of the present section:

\begin{multline}
\exp(Q_{\tau_{0}}(s)+(\pi\i/2)(s\cdot M))\\
=C\cdot\exp\left(\frac{1}{4}s_{1}^{2}\pi\i\tau+\frac{1}{4}s_{2}^{2}\bar{\pi\i\tau}+\frac{\pi\i s_{1}}{2\om 1}\sum_{j=2}^{n}s'_{j}v_{j1}+\frac{\bar{\pi\i}s_{2}}{2\bar{\om 1}}\sum_{j=2}^{n}s'_{j}\bar{v_{j1}}+\frac{\pi\i}{2}s_{1}p+\frac{\pi\i}{2}s_{2}p\right)\prod_{2\leq i<j\leq n}e^{\tau_{ij}\frac{s'_{i}s'_{j}}{2}}\\
=C'\exp\left(\frac{1}{4}s_{1}^{2}\pi\i\tau+\frac{\pi\i s_{1}}{\om 1}\vavg+\frac{\pi\i}{2}s_{1}p\right)\overline{\exp\left(\frac{1}{4}s_{2}^{2}\pi\i\tau+\frac{\pi\i s_{2}}{\om 1}\vavg+\frac{\pi\i}{2}s_{2}p\right)}\\
\times\prod_{2\leq i<j\leq n}\left|\frac{\theta_{1}(v_{ij})\theta'_{1}(0)}{\theta_{1}(v_{1j})\theta_{1}(v_{1i})}\right|^{\frac{s'_{i}s'_{j}}{2}},\label{eq: control_denom}
\end{multline}
using $s'_{1}=-\sum_{i=2}^{n}s'_{i}$ in the second equality. Here
$C=\prod_{2\leq i\leq n}e^{\frac{\tau_{ii}\left(s_{i}'\right)^{2}}{4}}=\prod_{2\leq i\leq n}e^{\frac{\tau_{ii}}{4}}$
and $C'=(-1)^{pq}C\cdot R^{-\sum_{2\leq i<j\leq n}\frac{s_{i}'s_{j}'}{4}},$
where the first factor comes from the identity $e^{-\frac{\pi\i}{2}(2m_{2}+q)p}=(-)^{pq}e^{-\frac{\pi\i}{2}(2m_{2}+q)p}$
for $m_{2}\in\Z.$ Note that $C'$ does not depend on $s$ since $\sum_{2\leq i<j\leq n}\frac{s_{i}'s_{j}'}{4}=\frac{1}{8}\left(\sum_{i=2}^{n}s_{i}'\right)^{2}-\frac{1}{8}\sum_{i=2}^{n}\left(s_{i}'\right)^{2}=\frac{2-n}{8}.$
The last product can be rewritten as

\[
\prod_{2\leq i<j\leq n}\left|\frac{\theta_{1}(v_{ij})\theta'_{1}(0)}{\theta_{1}(v_{1j})\theta_{1}(v_{1i})}\right|^{\frac{s'_{i}s'_{j}}{2}}=\prod_{i=2}^{n}\left|\frac{\theta_{1}(v_{1i})}{\theta'_{1}(0)}\right|^{\frac{1}{2}}\prod_{1\leq i<j\leq n}\left|\frac{\theta_{1}(v_{ij})}{\theta'_{1}(0)}\right|^{\frac{s'_{i}s'_{j}}{2}}
\]
where we have used $s'_{1}=-\sum_{i=2}^{n}s'_{i}$. Summing (\ref{eq: control_denom})
first over $s_{1},s_{2}$, this product can be taken out, and the
exponentials sum to $\theta_{\nu}\left(\vavg\right).$ Thus, we conclude
that 

\begin{equation}
\sum_{s\in\setSn}\exp(Q_{\tau_{0}}(s)+(\pi\i/2)(s\cdot M))=C'\prod_{i=2}^{n}\left|\frac{\theta_{1}(v_{1i})}{\theta'_{1}(0)}\right|^{\frac{1}{2}}D_{v_{1},\dots,v_{n}}^{\nu}.\label{eq: denom_contin}
\end{equation}

We continue by looking at the terms in the numerator of (\ref{eq: RHS_limit}),
for various $k,l$ . First, observe that $u_{2}\equiv0$ on $\T$.
Therefore, when computing $\ffs{a,z}{\vv}{(p,q)}$, we will have the
following type of terms:
\begin{enumerate}
\item The term with $k=l=1$ gives a contribution of 
\begin{multline*}
\left(\frac{\pi\i}{\om 1}\right)^{2}\sum_{s\in\setSn}s_{1}^{2}\exp(Q_{\tau_{0}}(s)+(\pi\i/2)(s\cdot M))dzda\\
=C'\prod_{i=2}^{n}\left|\frac{\theta_{1}(v_{1i})}{\theta'_{1}(0)}\right|^{\frac{1}{2}}\sum_{s'\in\setSp}\left(\pa^{2}\left|\theta_{\nu}\right|^{2}\right)\left(\vavg\right)\ProdTh\,dzda,
\end{multline*}
by a computation completely paralled to the above, using (\ref{eq: control_denom})
and the identity 
\[
\left(\frac{\pi\i}{\om 1}\right)^{2}\sum_{s_{1}}s_{1}^{2}\exp\left(\frac{1}{4}s_{1}^{2}\pi\i\tau+\frac{\pi\i s_{1}}{\om 1}\vavg+\frac{\pi\i}{2}s_{1}p\right)=\pa^{2}\theta_{\nu}(\vavg).
\]
 
\item The terms with $k=1,3\leq l=j+1\leq n+1$ together give a contribution
of 
\begin{multline*}
\left(\frac{\pi\i}{\om 1}\right)\sum_{j=2}^{n}\frac{1}{2}\omega_{v_{j},v_{1}}(z)\sum_{s\in\setSn}s_{1}s'_{j}\exp(Q_{\tau_{0}}(s)+(\pi\i/2)(s\cdot M))da\\
=\frac{1}{2}\left(\frac{\pi\i}{\om 1}\right)\sum_{s\in\setSn}s_{1}\left(\sum_{j=2}^{n}s_{j}'\omega_{v_{j},v_{1}}(z)\right)\exp(Q_{\tau_{0}}(s)+(\pi\i/2)(s\cdot M))\,da\\
=C'\frac{1}{2}\prod_{i=2}^{n}\left|\frac{\theta_{1}(v_{1i})}{\theta'_{1}(0)}\right|^{\frac{1}{2}}\sum_{s'\in\setSp}\left(\sum_{j=1}^{n}s_{j}'\frac{\theta_{1}'(z-v_{j})}{\theta_{1}(z-v_{j})}\right)\left(\pa\left|\theta_{\nu}\right|^{2}\right)\left(\vavg\right)\ProdTh\,dzda,
\end{multline*}
where we have used that 
\[
\sum_{j=2}^{n}s_{j}'\omega_{v_{j},v_{1}}(z)=\sum_{j=2}^{n}\left(s_{j}'\frac{\theta_{1}'(z-v_{j})}{\theta_{1}(z-v_{j})}\right)-\left(\sum_{j=2}^{n}s_{j}'\right)\frac{\theta_{1}'(z-v_{1})}{\theta_{1}(z-v_{1})}=\sum_{j=1}^{n}s_{j}'\frac{\theta_{1}'(z-v_{j})}{\theta_{1}(z-v_{j})},
\]
the expansion $(\ref{eq: control_denom}),$and the identity 
\[
\left(\frac{\pi\i}{\om 1}\right)\sum_{s_{1}}s_{1}\exp\left(\frac{1}{4}s_{1}^{2}\pi\i\tau+\frac{\pi\i s_{1}}{\om 1}\vavg+\frac{\pi\i}{2}s_{1}p\right)=\pa\theta_{\nu}(\vavg).
\]
 
\item $l=1,3\leq k=i+1\leq n+1$ gives a symmetric contribution 
\[
=\frac{C'}{2}\prod_{i=2}^{n}\left|\frac{\theta_{1}(v_{1i})}{\theta'_{1}(0)}\right|^{\frac{1}{2}}\sum_{s'\in\setSp}\left(\sum_{i=1}^{n}s_{i}'\frac{\theta_{1}'(a-v_{i})}{\theta_{1}(a-v_{i})}\right)\left(\pa\left|\theta_{\nu}\right|^{2}\right)\left(\vavg\right)\ProdTh\,dzda,
\]
\item Finally, the terms with $3\leq k=i+1\leq n+1$ and $3\leq l=j+1\leq n+1$
give a contribution 
\begin{multline*}
\frac{1}{4}\sum_{i,j=2}^{n}\omega_{v_{i},v_{1}}(z)\omega_{v_{j},v_{1}}(a)\sum_{s\in\setSn}s'_{i}s'_{j}\exp(Q_{\tau_{0}}(s)+(\pi\i/2)(s\cdot M))\\
=\frac{1}{4}\sum_{s\in\setSn}\sum_{i,j=2}^{n}s'_{i}s'_{j}\omega_{v_{i},v_{1}}(z)\omega_{v_{j},v_{1}}(a)\exp(Q_{\tau_{0}}(s)+(\pi\i/2)(s\cdot M))\\
=\frac{C'}{4}\prod_{i=2}^{n}\left|\frac{\theta_{1}(v_{1i})}{\theta'_{1}(0)}\right|^{\frac{1}{2}}\sum_{s'\in\setSp}\left(\sum_{i,j=1}^{n}s_{i}'s_{j}'\frac{\theta_{1}'(z-v_{i})}{\theta_{1}(z-v_{i})}\frac{\theta_{1}'(a-v_{i})}{\theta_{1}(a-v_{i})}\right)\left|\theta_{\nu}\right|^{2}\left(\vavg\right)\ProdTh\,dzda
\end{multline*}
\[
\]
\end{enumerate}
Putting all the contributions together and dividing by (\ref{eq: denom_contin})
gives (\ref{eq: f_explicit}). The derivation of (\ref{eq: f_star_explicit})
is similar, with the following changes: we now need to evaluate the
limit of $\Lambda_{\T_{\eps},\SpStr_{\eps}}(z,a^{\star})$ for $z\in\T_{\omega_{1},\omega_{2}}$
and $a^{\star}\in\T_{\omega_{1},\omega_{2}}^{\star}$. The Abelian
differentials on $\T_{\omega_{1},\omega_{2}}^{\star}$ are pull-backs
of those on $\T_{\omega_{1},\omega_{2}}$ under the anti-holomorphic
involution $\inv,$ followed by conjugation and with a minus sign, cf. Remark \ref{rem: symmetries}. We will now have $u_{1}(a^{\star})\equiv u_{2}(z)\equiv0,$
so, instead of the term $k=l=1$, we will have the term $k=1,l=2.$ We leave it to the reader to check that following the above calculation with these changes, one arrives at (\ref{eq: f_star_explicit}).
\end{proof}
\begin{cor}
\label{cor: spin_compute}
We have, explicitly, 
\begin{equation}
\corr{\svv\mu^{\sps}}{\T_{\omega_{1},\omega_{2}}}=C_{n}\left(D_{v_{1},\dots,v_{n}}^{\nu}\right)^{\frac{1}{2}}\quad\text{where }C_{n}=\frac{2^{-\frac{n}{4}}}{|\theta_{2}(0)|+|\theta_{3}(0)|+|\theta_{4}(0)|}.\label{eq: ccor_explicit}
\end{equation}
\end{cor}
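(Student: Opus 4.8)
The plan is to check that the right-hand side of \eqref{eq: ccor_explicit} fulfils the three properties that, by the uniqueness argument in the proof of Theorem~\ref{thm: conv_spins}, characterize the family $\ccor{\svv\mu^{(p,q)}}_{\T}$: the ratio relation \eqref{eq: cont_ratio}, the covariance $\ccor{\sigma_{(v_1)_{kl}}\dots\mu^{(p,q)}}_{\T}=\ccor{\svv\mu^{(p+l,q+k)}}_{\T}$, and the merge normalization \eqref{eq: merge_asymptotics}. Since $\mu^{(p,q)}$ is a real combination of disorder variables, each $\ccor{\svv\mu^{(p,q)}}_{\T}$ is real, so by \eqref{eq: dR} the ratio relation amounts to $d\log\ccor{\svv\mu^{(p,q)}}_{\T}=\re\sum_i\mathcal{A}_{[v_{i},v_{1},\dots,v_{i-1},v_{i+1},\dots,v_{n}]}^{(p,q)}\,dv_i$. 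As $D^{\nu}_{v_1,\dots,v_n}$ is real and positive, $\tfrac12 d\log D^{\nu}=\re\sum_i\pa_{v_i}(\log D^{\nu})\,dv_i$, so everything reduces to the single identity $\pa_{v_1}\log D^{\nu}_{v_1,\dots,v_n}=\mathcal{A}^{(p,q)}_{[v_1,\dots,v_n]}$ (the remaining $i$ following by relabeling), where $\nu=\nu(p,q)$ is the spin structure attached to $(p,q)$.

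To establish this identity I would read off $\mathcal{A}^{(p,q)}$ from \eqref{eq: def_A}. Setting $g(z):=\lim_{w\to v_1}(w-v_1)^{1/2}\ffs{w,z}{\vv}{(p,q)}$, \eqref{eq: def_A} says precisely that $g(z)^2=(z-v_1)^{-1}+4\mathcal{A}^{(p,q)}_{[v_1,\dots,v_n]}+o(1)$ as $z\to v_1$, so $4\mathcal{A}^{(p,q)}$ is the constant term of the Laurent expansion of $g(z)^2=\lim_{w\to v_1}(w-v_1)(\ffs{w,z}{\vv}{(p,q)})^2$ at $v_1$. Inserting \eqref{eq: f_explicit}, the term $4B(w,z)$ is regular at $w=v_1$ and drops out; the residue of $A^{\nu}_{\vv}(w,z)$ at $w=v_1$ comes solely from the $j=1$ pole of $\tfrac{\theta_1'(w-v_j)}{\theta_1(w-v_j)}\sim(w-v_1)^{-1}$ in the last two lines of the defining formula for $A^{\nu}_{\vv}$, giving
\[
g(z)^2=\frac{1}{D^{\nu}}\sum_{s'\in\setSp}s'_1\Big[2\,(\pa|\theta_\nu|^2)(\vavg)+\Big(\sum_{i=1}^{n}s'_i\tfrac{\theta_1'(z-v_i)}{\theta_1(z-v_i)}\Big)|\theta_\nu(\vavg)|^2\Big]\ProdTh.
\]
Expanding at $z=v_1$ (the $i=1$ summand contributes the pole $(z-v_1)^{-1}$ with vanishing constant term, while for $i\ne1$ one evaluates $\tfrac{\theta_1'(v_{1i})}{\theta_1(v_{1i})}$), the pole reproduces $(z-v_1)^{-1}$ with coefficient $\tfrac{1}{D^\nu}\sum_{s'}(s'_1)^2|\theta_\nu(\vavg)|^2\ProdTh=1$ — a useful consistency check — and the constant term yields
\[
4\mathcal{A}^{(p,q)}_{[v_1,\dots,v_n]}=\frac{1}{D^{\nu}}\sum_{s'\in\setSp}s'_1\Big[2\,(\pa|\theta_\nu|^2)(\vavg)+\Big(\sum_{i\ne1}s'_i\tfrac{\theta_1'(v_{1i})}{\theta_1(v_{1i})}\Big)|\theta_\nu(\vavg)|^2\Big]\ProdTh.
\]

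The matching expression comes from differentiating \eqref{eq: defD} directly. Using $\pa_{v_1}\vavg=\tfrac{s'_1}{2}$ one gets $\pa_{v_1}|\theta_\nu(\vavg)|^2=\tfrac{s'_1}{2}(\pa|\theta_\nu|^2)(\vavg)$ (where $\pa|\theta_\nu|^2$ is the Wirtinger derivative $\theta_\nu'\,\overline{\theta_\nu}$, the antiholomorphic factor depending only on $\bar v_1$), and from $\log\ProdTh=\sum_{i<j}\tfrac{s'_is'_j}{2}\log|\theta_1(v_{ij})/\theta_1'(0)|$ one gets $\pa_{v_1}\log\ProdTh=\tfrac{s'_1}{4}\sum_{j\ne1}s'_j\tfrac{\theta_1'(v_{1j})}{\theta_1(v_{1j})}$. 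Hence
\[
\pa_{v_1}D^{\nu}=\frac12\sum_{s'}s'_1(\pa|\theta_\nu|^2)(\vavg)\ProdTh+\frac14\sum_{s'}s'_1\Big(\sum_{j\ne1}s'_j\tfrac{\theta_1'(v_{1j})}{\theta_1(v_{1j})}\Big)|\theta_\nu(\vavg)|^2\ProdTh,
\]
which is exactly $\mathcal{A}^{(p,q)}_{[v_1,\dots,v_n]}D^{\nu}$, proving the identity and showing $\ccor{\svv\mu^{(p,q)}}_{\T}=K_{\nu}(D^{\nu})^{1/2}$ for constants $K_\nu$. Using the quasi-periodicity of the Jacobi theta functions, the analytic continuation of $D^{\nu}$ as $v_1\mapsto(v_1)_{kl}$ equals $D^{\nu'}$ with $\nu'=\nu(p+l,q+k)$, matching Lemma~\ref{lem: shift_denom}; together with the covariance this forces all $K_\nu$ to equal a single constant $K$, since $(k,l)\mapsto(p+l,q+k)$ acts transitively on the four spin structures.

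It remains to evaluate $K$ from the merge normalization \eqref{eq: merge_asymptotics} applied to $\ccor{\svv}_{\T}=\sum_{p,q}\ccor{\svv\mu^{(p,q)}}_{\T}=K\sum_{\nu=1}^{4}(D^{\nu})^{1/2}$. As $v_{2i}\to v_{2i-1}$ for each $i$, the factor $|\theta_1(v_{2i-1}-v_{2i})/\theta_1'(0)|^{s'_{2i-1}s'_{2i}/2}$ in $\ProdTh$ forces the dominant terms of $D^{\nu}$ to be those $s'$ with $s'_{2i}=-s'_{2i-1}$ for every pair; there are $2^{n/2}$ of them, for each $\vavg\to0$ and the cross-pair factors cancel in the limit, so $D^{\nu}\sim 2^{n/2}|\theta_\nu(0)|^2\prod_i|v_{2i}-v_{2i-1}|^{-1/2}$. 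Since the odd theta constant vanishes, $\theta_1(0)=0$, the $\nu=1$ term drops out, and $(D^\nu)^{1/2}\sim 2^{n/4}|\theta_\nu(0)|\prod_i|v_{2i}-v_{2i-1}|^{-1/4}$, whence \eqref{eq: merge_asymptotics} gives $K\cdot2^{n/4}(|\theta_2(0)|+|\theta_3(0)|+|\theta_4(0)|)=1$, i.e. $K=C_n$. The main obstacle is the computation in the two middle paragraphs: extracting $\mathcal{A}^{(p,q)}$ from the nested limits running through \eqref{eq: def_A} and \eqref{eq: f_explicit}, with the correct reading of $\pa|\theta_\nu|^2$, and verifying that it agrees termwise with $\pa_{v_1}\log D^{\nu}$; the merge asymptotics and the theta-quasiperiodicity bookkeeping are routine by comparison.
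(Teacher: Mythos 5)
Your proposal is correct and follows essentially the same route as the paper's proof: it extracts $\mathcal{A}^{(p,q)}_{[v_1,\dots,v_n]}$ as the constant term of $\lim_{w\to v_1}(w-v_1)\bigl(\ffs{w,z}{\vv}{(p,q)}\bigr)^2$ at $z=v_1$ via \eqref{eq: f_explicit}, matches it with the logarithmic derivative of $D^{\nu}_{v_1,\dots,v_n}$, and fixes the constant by the merge asymptotics (which you do in one shot over all pairs rather than recursively, and you make explicit the quasi-periodicity/transitivity argument forcing a single constant across the four spin structures, which the paper leaves implicit). One remark: your form of the key identity, $\mathcal{A}^{(p,q)}=\partial_{v_1}\log D^{\nu}$, is the one consistent with the displayed formulas and with the factor $\tfrac12$ supplied by $dR=\re\sum_i\mathcal{A}_i\,dv_i$ for the real function $D^{\nu}$; the paper states it as $\partial_{v_1}\log (D^{\nu})^{1/2}=\mathcal{A}^{(p,q)}$, a factor-of-$2$ slip in the prose that does not affect its computation or the final formula.
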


\begin{proof}
We will first check that 
\[
\partial_{v_{1}}\log\left(D_{v_{1},\dots,v_{n}}^{\nu}\right)^{\frac{1}{2}}=\mathcal{A}_{[v_{1},\dots v_{n}]}^{(p,q)};
\]
we start by computing the latter quantity using (\ref{eq: def_A})
and (\ref{eq: f_explicit}). First we note that we can actually compute $\mathcal{A}_{[v_{1},\dots v_{n}]}^{(p,q)}$ using the expansion of $(\ffs{w,z}{\vv}{(p,q)})^2$ instead of $\ffs{w,z}{\vv}{(p,q)}$ via the following expansion

\[
(\ffs{w,z}{\vv}{(p,q)})^2=\frac{\i}{w-v_1}\left(\gfs{z}{\vv}{(p,q)}+O(w-v_1)\right)
\]
where $\gfs{z}{\vv}{(p,q)}$ has the following expansion

\[
\gfs{z}{\vv}{(p,q)}=\frac{-\i}{z-v_1}\left(1+4\mathcal{A}_{[v_{1},\dots v_{n}]}^{(p,q)}(z-v_1)+O((z-v_1)^2)\right)
\]
Thus, we can write $\mathcal{A}_{[v_{1},\dots v_{n}]}^{(p,q)}$ as follows

\begin{equation}
4\mathcal{A}_{[v_{1},\dots,v_{n}]}^{(p,q)}=\lim_{z\to v_{1}}(z-v_{1})^{-1}\left[(z-v_{1})\lim_{w\to v_{1}}(w-v_{1})(\ffs{w,z}{\vv}{(p,q)})^2-1\right].
\end{equation}
We now note that since $\theta_{1}$ is odd,
we have 
\begin{equation}
\frac{\theta_{1}'(w-v_{1})}{\theta_{1}(w-v_{1})}=\frac{1}{w-v_{1}}+O(w-v_{1}),\quad w\to v_{1}.\label{eq: exp_logder_theta}
\end{equation}
and this is the only expression in (\ref{eq: f_explicit}) that is
not regular as $w\to v_{1}.$ Therefore, we get 
\begin{multline*}
\lim_{w\rightarrow v_{1}}(\ffs{w,z}{\vv}{(p,q)})^2(w-v_{1})=\left(2\sum_{s'\in\setSp}s_{1}'\left(\pa\left|\theta_{\nu}\right|^{2}\right)\left(\vavg\right)\ProdTh\right.\\
\left.+\sum_{s'\in\setSp}\sum_{i=1}^{n}\left(s'_{i}s'_{1}\frac{\theta_{1}'(z-v_{i})}{\theta_{1}(z-v_{i})}\right)\left|\theta_{\nu}\left(\vavg\right)\right|^{2}\ProdTh\right)/\left(D_{v_{1},\dots,v_{n}}^{\nu}\right)
\end{multline*}
We now need to compute the first two terms of the asymptotics of this
expression as $z\to v_{1};$ in view of (\ref{eq: exp_logder_theta}),
the leading term $(z-v_{1})^{-1}$ will come from the $i=1$
term in the second sum, which will not give a contribution to the
sub-leading one. We arrive at 
\begin{multline*}
2\mathcal{A}_{[v_{1},\dots v_{n}]}^{(p,q)}=\left(\sum_{s'\in\setSp}s_{1}'\left(\pa\left|\theta_{\nu}\right|^{2}\right)\left(\vavg\right)\ProdTh\right.\\
+\left.\frac{1}{2}\sum_{s'\in\setSp}\sum_{i=2}^{n}s'_{i}s'_{1}\frac{\theta_{1}'(v_{1i})}{\theta_{1}(v_{1i})}\left|\theta_{\nu}\left(\vavg\right)\right|^{2}\ProdTh\right)/\left(D_{v_{1},\dots,v_{n}}^{\nu}\right)
\end{multline*}
On the other hand, one can readily see that differentiating $D_{v_{1},\dots,v_{n}}^{\nu}$
with respect to $v_{1}$ yields the same expression, with the first
term coming from differentiating $\left|\theta_{\nu}\left(\vavg\right)\right|^{2}=\left|\theta_{\nu}\left(\frac{1}{2}(s_{1}'v_{1}+\dots+s_{n}'v_{n})\right)\right|^{2}$
with respect to $v_{1},$ and the $i$-th term in the second sum coming
from differentiating the term $\left|\frac{\theta_{1}(v_{1}-v_{j})}{\theta'_{1}(0)}\right|^{\frac{s'_{i}s'_{j}}{2}}$
term in $\ProdTh.$

Now, the fact that $\left(D_{v_{1},\dots,v_{n}}^{\nu}\right)^{\frac{1}{2}}$
is symmetric with respect to permutation of $v_{i}$ and real implies
that $d\log\left(D_{v_{1},\dots,v_{n}}^{\nu}\right)^{\frac{1}{2}}$
is given by (\ref{eq: dR}). It remains to check that $C_{n}\sum_{\nu}\left(D_{v_{1},\dots,v_{n}}^{\nu}\right)^{\frac{1}{2}}$
satisfies the normalization condition (\ref{eq: merge_asymptotics}),
which we do recursively. First, we note that since $\left|\frac{\theta_{1}(v_{2}-v_{1})}{\theta_{1}'(0)}\right|\sim|v_{2}-v_{1}|,$
we have 
\[
\ProdTh\sim|v_{1}-v_{2}|^{\frac{s'_{1}s_{2}'}{2}}\prod_{\substack{1\leq i<j\leq n\\
\{i,j\}\neq\{1,2\}
}
}\left|\frac{\theta_{1}(v_{i}-v_{j})}{\theta_{1}'(0)}\right|^{\frac{s_{i}'s'_{j}}{2}}.
\]
If $s_{1}'=s_{2}',$ this will tend to zero; otherwise, we have $\left|\frac{\theta_{1}(v_{1}-v_{j})}{\theta_{1}'(0)}\right|^{\frac{s_{1}'s'_{j}}{2}}\left|\frac{\theta_{1}(v_{2}-v_{j})}{\theta_{1}'(0)}\right|^{\frac{s_{2}'s'_{j}}{2}}\to1$
as $v_{1}\to v_{2}$ for all $j.$ We conclude that, for $n\geq2,$
we have 
\[
D_{v_{1},\dots,v_{n}}^{\nu}\sim|v_{1}-v_{2}|^{-\frac{1}{2}}\sum_{s'\in\setSp:s_{1}'=-s_{2}'}|\theta_{\nu}(\tilde{v}_{s'_{3\to n}})|^{2}\Pi_{v_{3},\dots,v_{n}}^{s'_{3\to n}}=2|v_{1}-v_{2}|^{-\frac{1}{2}}D_{v_{3},\dots,v_{n}}^{\nu}.
\]
where $s'_{3\to n}=(s'_{3},\dots,s_{n}'),$ or in other words, 
\[
C_{n}\left(D_{v_{1},\dots,v_{n}}^{\nu}\right)^{\frac{1}{2}}\sim C_{n-2}|v_{1}-v_{2}|^{-\frac{1}{4}}\left(D_{v_{3},\dots,v_{n}}^{\nu}\right)^{\frac{1}{2}},\quad v_{1}\to v_{2}.
\]
In particular, for $n=2,$ we have 
\[
C_{2}^{2}D_{v_{1},\dots,v_{n}}^{\nu}\sim\frac{1}{(|\theta_{2}(0)|+|\theta_{3}(0)|+|\theta_{4}(0)|)^{2}}|v_{2}-v_{1}|^{-\frac{1}{2}}|\theta_{\nu}(0)|^{2}.
\]
Taking the square root and summing over $\nu$ (recall that $\theta_{1}(0)=0$),
we conclude inductively that $C_{n}\sum_{\nu}\left(D_{v_{1},\dots,v_{n}}^{\nu}\right)^{\frac{1}{2}}$
satisfies (\ref{eq: merge_asymptotics}), thus concluding the proof. 
\end{proof}


\begin{thebibliography}{10}


\bibitem{Basok} Basok, Mikhail. Dimers on Riemann surfaces and compactified
free field. Preprint arXiv:2309.14522. 

\bibitem{BIVW} Bayraktaroglu, B., Izyurov, K., Virtanen, T., and
Webb, K. Bosonization of primary fields for the critical Ising model
on multiply connected planar domains. Preprint arXiv:2312.02960. 

\bibitem{BPZ} Belavin, Alexander A., Polyakov, Alexander M. and Zamolodchikov,
Alexander B. "Infinite conformal symmetry in two-dimensional quantum
field theory." Nuclear Physics B 241.2 (1984): 333-380. 

\bibitem{CHI1} Chelkak, Dmitry; Hongler, Clément; Izyurov, Konstantin.
Conformal invariance of spin correlations in the planar Ising model.
Ann. of Math. (2) 181 (2015), no. 3, 1087--1138. 

\bibitem{CHI2} Chelkak, Dmitry; Hongler, Clément; Izyurov, Konstantin.
Correlations of primary fields in the critical Ising model. Preprint
arXiv:2103.10263. 

\bibitem{CCK} Chelkak, Dmitry, Cimasoni, David, Kassel, Adrien, Revisiting
the combinatorics of the 2D Ising model, Ann. Inst. Henri Poincaré
Comb. Phys. Interact. 4 (2017), 309--385.

\bibitem{Cimasoni} David Cimasoni. A generalized Kac-Ward formula.
Journal of Statistical Mechanics: Theory and Experiment, 2010(07):P07023,
2010. 1, 3

\bibitem{Cim2} David Cimasoni. The critical Ising model via Kac--Ward
matrices. Communications in Mathematical Physics, 316(1):99--126,
2012. 1, 

\bibitem{DFMS} Di Francesco, Philippe; Mathieu, Pierre; Sénéchal,
David Conformal field theory. Graduate Texts in Contemporary Physics.
Springer-Verlag, New York, 1997. xxii+890 pp. 

\bibitem{DiFSZ} Di Francesco, Ph, H. Saleur, and J. B. Zuber. "Critical
Ising correlation functions in the plane and on the torus." Nuclear
Physics B 290 (1987): 527-581. 

\bibitem{DLMF} NIST Digital Library of Mathematical Functions. https://dlmf.nist.gov/,
Release 1.2.0 of 2024-03-15. F. W. J. Olver, A. B. Olde Daalhuis,
D. W. Lozier, B. I. Schneider, R. F. Boisvert, C. W. Clark, B. R.
Miller, B. V. Saunders, H. S. Cohl, and M. A. McClain, eds.
 

\bibitem{Dubedat15} Dubédat, Julien. "Dimers and families of Cauchy-Riemann
operators I." Journal of the American Mathematical Society 28.4 (2015):
1063-1167. 

\bibitem{Dubedat} Dubédat, Julien. Exact bosonization of the Ising
model. Preprint arXiv:1112.4399. 

\bibitem{Hugoetc} Duminil-Copin, Hugo, and Lis, Marcin. "On the
double random current nesting field." Probability Theory and Related
Fields 175 (2019): 937-955. 

\bibitem{Fay} Fay, John D. Theta functions on Riemann surfaces. Lecture
Notes in Mathematics, Vol. 352. Springer-Verlag, Berlin-New York,
1973. iv+137 pp. 

\bibitem{FF} Arthur E. Ferdinand and Michael E. Fisher. Bounded and
inhomogeneous Ising models. I. Specific-heat anomaly of a finite lattice.
Phys. Rev., 185:832--846, Sep 1969

\bibitem{Fuks} Fuks, B. A. Theory of analytic functions of several
complex variables, American Mathematical Soc. (1963). 

\bibitem{Galizka - Pete} Galicza, Pál, and Gábor Pete. \textquotedbl Sparse
reconstruction in spin systems II: Ising and other factor of IID measures.\textquotedbl{}
arXiv preprint arXiv:2406.09232 (2024)

\bibitem{HS} Hawley, N.S. and Schiffer, M. Half-order differentials
on Riemann surfaces, Acta Math. 115 (1966) 199-236. 

\bibitem{Hejhal} Hejhal, Dennis A. Theta functions, kernel functions,
and Abelian integrals. Memoirs of the American Mathematical Society,
No. 129. American Mathematical Society, Providence, R.I., 1972. iii+112
pp. 

\bibitem{HonKytVik} Hongler, Clément, Kalle Kytölä, and Fredrik Viklund.
"Conformal field theory at the lattice level: discrete complex analysis
and Virasoro structure." Communications in Mathematical Physics 395.1
(2022): 1-58. 

\bibitem{IzyurovFree} Izyurov, K. (2015). Smirnov’s observable for
free boundary conditions, interfaces and crossing probabilities. Communications
in Mathematical Physics, 337(1), 225-252.

\bibitem{IKT} Izyurov, K., Kemppainen, A., and Tuisku, P. (2024).
Energy correlations in the critical Ising model on a torus. The Annals
of Applied Probability, 34(2), 1699-1729. 

\bibitem{Johnson} Johnson, Dennis. Spin structures and quadratic
forms on surfaces. Journal of the London Mathematical Society, 2(2),
(1980), 365-373. 

\bibitem{OK} Kaufman, B., and Onsager, L. (1949). Crystal statistics.
III. Short-range order in a binary Ising lattice. Physical Review,
76(8), 1244. 


\bibitem{MW} McCoy, Barry M., and Tai Tsun Wu. The two-dimensional
Ising model. Harvard University Press, 1973. 

\bibitem{Onsager} Onsager, L. (1944). Crystal statistics. I. A two-dimensional
model with an order-disorder transition. Physical Review, 65(3-4),
117. 

\bibitem{Palmer} Palmer, John. Planar Ising Correlations. Vol. 49.
Springer Science and Business Media, 2007. 

\bibitem{Salas} Jesús Salas. Exact finite-size-scaling corrections
to the critical two-dimensional Ising model on a torus. Journal of
Physics A: Mathematical and General, 34(7):1311--1331, feb 2001.

\bibitem{Smirnov} Smirnov, S. (2010). Conformal invariance in random
cluster models. I. Holmorphic fermions in the Ising model. Annals
of mathematics, 1435-1467

\bibitem{Yamada} Yamada, Akira.``Precise variational formulas for
abelian differentials." Kodai Math. J. 3 (1) 114 - 143, 1980. 

\bibitem{Yang} Yang, C. N. (1952). The spontaneous magnetization
of a two-dimensional Ising model. Physical Review, 85(5), 808.

\end{thebibliography}
\end{document}